\newcommand{\nc}{\newcommand}
\nc{\rnc}{\renewcommand}
\newcommand{\bra}[1]{\left\langle #1\right|}
\newcommand{\ket}[1]{\left|#1\right\rangle}
\newcommand{\proj}[1]{\left|#1\right\rangle\left\langle #1\right|}
\newcommand{\braket}[2]{\left\langle #1\middle|#2\right\rangle}
\DeclareMathOperator{\tr}{tr}
\def\be#1\ee{\begin{equation}#1\end{equation}}
\def\bea#1\eea{\begin{eqnarray}#1\end{eqnarray}}
\def\beas#1\eeas{\begin{eqnarray*}#1\end{eqnarray*}}
\def\ba#1\ea{\begin{align}#1\end{align}}
\def\bas#1\eas{\begin{align*}#1\end{align*}}
\def\bpm#1\epm{\begin{pmatrix}#1\end{pmatrix}}
\nc{\non}{\nonumber}
\nc{\nn}{\nonumber}
\nc{\eq}[1]{(\ref{eq:#1})}
\nc{\eqs}[2]{(\ref{eq:#1}) and (\ref{eq:#2})}
\rnc{\L}{\left} 
\nc{\R}{\right}
\nc{\ra}{\rightarrow}
\nc{\ot}{\otimes}
\nc{\grad}{{\vec{\nabla}}}
\newtheorem{thm}{Theorem}
\newtheorem*{thm*}{Theorem}
\newtheorem{proto}{Protocol}
\theoremstyle{definition}
\newtheorem{remark}{Remark}
\newtheorem{dfn}[thm]{Definition}
\theoremstyle{plain}
\newtheorem*{rep@theorem}{\rep@title}
\newcommand{\newreptheorem}[2]{%
\newenvironment{rep#1}[1]{%
 \def\rep@title{#2 \ref{##1} (restatement)}%
 \begin{rep@theorem}}%
 {\end{rep@theorem}}}
\nc\eps{\epsilon}
\nc\cA{\mathcal{A}}
\nc\cB{\mathcal{B}}
\nc\cC{\mathcal{C}}
\nc\cD{\mathcal{D}}
\nc\cE{\mathcal{E}}
\nc\cF{\mathcal{F}}
\nc\cG{\mathcal{G}}
\nc\cH{\mathcal{H}}
\nc\cI{\mathcal{I}}
\nc\cJ{\mathcal{J}}
\nc\cK{\mathcal{K}}
\nc\cL{\mathcal{L}}
\nc\cM{\mathcal{M}}
\nc\cN{\mathcal{N}}
\nc\cO{\mathcal{O}}
\nc\cP{\mathcal{P}}
\nc\cQ{\mathcal{Q}}
\nc\cR{\mathcal{R}}
\nc\cS{\mathcal{S}}
\nc\cT{\mathcal{T}}
\nc\cU{\mathcal{U}}
\nc\cV{\mathcal{V}}
\nc\cW{\mathcal{W}}
\nc\cX{\mathcal{X}}
\nc\cY{\mathcal{Y}}
\nc\cZ{\mathcal{Z}}
\nc\bbC{\mathbb{C}}
\nc\bbF{\mathbb{F}}
\nc\bbM{\mathbb{M}}
\nc\bbN{\mathbb{N}}
\nc\bbR{\mathbb{R}}
\nc\bbZ{\mathbb{Z}}
\nc\benum{\begin{enumerate}}
\nc\eenum{\end{enumerate}}
\nc\bit{\begin{itemize}}
\nc\eit{\end{itemize}}
\nc{\todo}[1]{\textcolor{red}{todo: #1}}
\nc{\Anote}[1]{\textcolor{red}{Aram note: #1}}
\def\begsub#1#2\endsub{\begin{subequations}\label{eq:#1}\begin{align}#2\end{align}\end{subequations}}
\nc\qand{\qquad\text{and}\qquad}
\nc\mnb[1]{\medskip\noindent{\bf #1}}
\nc{\pder}[2]{\frac{\partial {#1}}{\partial {#2}}}
\nc{\p}{\partial}
\newtheorem{theorem}{Theorem}
\newtheorem*{theorem*}{Theorem}
\newtheorem{lemma}[theorem]{Lemma}
\newtheorem*{lemma*}{Lemma}
\newtheorem{fact}[theorem]{Fact}
\newtheorem*{fact*}{Fact}
\newtheorem{corollary}[theorem]{Corollary}
\newtheorem{definition}[theorem]{Definition}
\newcommand{\beq}{\begin{eqnarray}}
\newcommand{\eeq}{\end{eqnarray}}
\newcommand{\Tr}{\mbox{\rm Tr}}
\newcommand{\ssmat}{subset-matrix}
\newcommand{\ssmats}{subset-matrices}
\newcommand{\vblock}{\varphi_{\text{block}} }
\newcommand{\vfar}{\varphi_{\text{far}} }
\newcommand{\keyexponent}{3Q+12N+24}
\newcommand{\Q}{\eta}
\newcommand{\dearth}[2]{d_{\infty}(#1, #2)}
\newcommand{\dearthsmooth}[3]{d^{#1}_{\infty}(#2, #3)}
\newcommand{\ups}{\upsilon}
\newcommand{\SR}{rk_{Schmidt}}
\newcommand{\B}{B}
\newcommand{\Qovereps}{Q/\epsilon+\log(1/\epsilon)/\epsilon}
\begin{document}

\title{Universality of EPR pairs in Entanglement-Assisted
	Communication Complexity, and the Communication Cost of State
	Conversion}
\author{Matthew Coudron\thanks{Institute for Quantum Computing, University of Waterloo \texttt{\href{mailto:mcoudron@uwaterloo.ca}{\color{black}mcoudron@uwaterloo.ca}}. } 
\qquad Aram W. Harrow\thanks{Center for Theoretical Physics, MIT. {\tt aram@mit.edu}} }

\date{}

\maketitle

\thispagestyle{empty}

\begin{abstract}

  In this work we consider the role of entanglement assistance in quantum communication
  protocols, focusing, in particular, on whether the type of shared entangled state can
  affect the quantum communication complexity of a function.  This question is interesting
  because in some other settings in quantum information, such as non-local games, or tasks that involve quantum communication between players and
  referee, or simulating bipartite unitaries or communication
  channels, maximally entangled states are known to be less useful as a resource than some partially
  entangled states.  By contrast, we prove that the bounded-error entanglement-assisted quantum communication complexity of a partial or total function cannot be
  improved by more than a constant factor by replacing maximally entangled states with
  arbitrary entangled states. In particular, we show that every quantum communication
  protocol using $Q$ qubits of communication and arbitrary shared entanglement can be
  $\epsilon$-approximated by a protocol using $O(\Qovereps)$ qubits of communication and
  \emph{only} EPR pairs as shared entanglement. This conclusion is opposite of
  the common wisdom in the study of non-local games, where it has been shown, for example,
  that the I3322 inequality has a non-local strategy using a non-maximally entangled
  state, which surpasses the winning probability achievable by any strategy using a
  maximally entangled state of any dimension \cite{VW11}.  
  We leave open the question of how much the use of a shared maximally entangled state can reduce the quantum communication complexity of a function.

  Our second result concerns an old question in quantum information theory: How much
  quantum communication is required to approximately convert one pure bipartite entangled
  state into another?  We give simple and efficiently computable upper and lower bounds.  Given two bipartite states $\ket \chi$ and
  $\ket \upsilon$, we define a natural quantity, $\dearth{\ket \chi}{\ket \upsilon}$,
  which we call the $\ell_{\infty}$ Earth Mover's distance, and we show
  that the communication cost of converting between $\ket \chi$ and $\ket \upsilon$ is
  upper bounded, up to a constant multiplicative factor, by $\dearth{\ket \chi}{\ket \upsilon}$.  Here $\dearth{\ket \chi}{\ket \upsilon}$ may be informally described as the minimum over
  all transports between the log of the Schmidt coefficients of $\ket \chi$ and those of
  $\ket \upsilon$, of the maximum distance that any amount of mass must be moved in that
  transport.  A precise definition is given in the introduction.  Furthermore, we prove a complementary lower bound on the cost of state
  conversion by the $\epsilon$-Smoothed $\ell_{\infty}$-Earth Mover's Distance, which is a natural smoothing of the $\ell_{\infty}$-Earth Mover's Distance that we will define via a connection with optimal transport theory.

\end{abstract}

\pagebreak
\pagenumbering{arabic}

\section{Introduction}

    \subsection{Entanglement-assisted communication complexity}
    
Imagine that two cooperating players, Alice and Bob, are given the task of evaluating a function $f(x,y)$ ($x,y\in\{0,1\}^n$), where $x$ is known only to Alice and $y$ is known only to Bob.  The communication complexity of $f$ is
the number of bits that Alice and Bob need to exchange in order to compute $f$.  Popular variations
of this framework include allowing a small probability of error, allowing qubits to be communicated instead
of classical bits, and allowing extra resources such as shared randomness or entanglement.

    In classical communication complexity, Newman's theorem states that arbitrarily large amounts of
    shared randomness in a protocol can be replaced by a distribution with $O(\log(n/\eps))$ bits of
    entropy while only reducing the success probability of that protocol by $\eps$. (Here $n$ is
    the input size of each party.)  Is there a quantum analogue to this result?
    
    In one sense the answer is ``no".  Given a two-party entanglement-assisted protocol for, say,
    computing the value of some function, we cannot replace the shared entanglement with some
    different, less entangled, state, without causing large errors~\cite{JRS08,AHLNSZ}.  It is
    an open question whether it is possible to replace a large entangled state with a less
    entangled one while also changing the communication protocol.
    
    However, while it remains a challenge to characterize the \emph{dimension} of shared entanglement required for optimal entanglement-assisted quantum communication protocols, in this work we show that the \emph{type} of shared entanglement required by such protocols can be neatly characterized.  In Theorem \ref{thm::weakplusearth} below, we establish that the bounded-error entanglement-assisted quantum communication complexity of a partial or total function cannot be improved by more than a constant factor by replacing maximally entangled states with arbitrary entangled states.  This is accomplished by constructing an explicit protocol which allows two parties, who only share maximally entangled states, to simulate any entanglement-assisted quantum communication task regardless of the shared state that that task originally required.
    
    \begin{theorem} \label{thm::weakplusearth} Consider a quantum communication protocol $\mathcal{R}$ whose goal is to compute a joint function $f(x,y) \in \{0, 1\}$. Suppose that $\mathcal{R}$ uses an arbitrary bipartite entangled state $\ket{\psi}^{AB}$ (of unbounded dimension), as well as $Q$ qubits of communication total, in either direction (for sufficiently large $Q \geq 15$).  Then, for every $\eps > 0$, there exists a quantum communication protocol $\mathcal{R'}$ which simulates $\mathcal{R}$ with error $\epsilon$, while using only a maximally entangled state as an entangled resource (rather than $\ket{\psi}^{AB}$ or any other state), and using $O(\Qovereps)$ qubits of communication.  Thus, if $\mathcal{R}$ computes $f$ with error $\eps '$ it follows that $\mathcal{R'}$ computes $f$ with error $\epsilon+\eps '$.
    \end{theorem}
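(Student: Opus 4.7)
The plan is to have Alice and Bob approximately prepare $\ket{\psi}^{AB}$ from shared EPR pairs and then run $\mathcal{R}$ on the approximation. The two main technical ingredients are the paper's state-conversion upper bound (expressed in terms of the $\ell_\infty$ Earth Mover's distance $d_{\infty}$) and a structural observation about how $\ket{\psi}$ enters $\mathcal{R}$.

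First, I would reduce to a shared state $\ket{\tilde\psi}^{AB}$ whose ``effective description'' depends only on $Q$ and $\epsilon$. Naively truncating $\ket{\psi}$ to its top-$K$ Schmidt coefficients and using trace-distance contractivity of $\mathcal{R}$ fails, because the required $K$ can depend on the (possibly very slow) tail decay of $\ket{\psi}$ and so is not uniformly bounded in $Q$ and $\epsilon$. Instead I would work in the Heisenberg picture: the final measurement operator $\Pi$ pulls back through $\mathcal{R}$ to an operator $M$ on Alice's and Bob's initial registers whose operator Schmidt rank across the $A\mid B$ cut is at most $4^{Q}$, since each communicated qubit can multiply that rank by at most a factor of~$4$. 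This bound restricts which features of $\ket{\psi}$ actually influence $\tr(M\,\ket{\psi}\bra{\psi})$ and should let me replace $\ket{\psi}$ by a target $\ket{\tilde\psi}$ whose ``effective rank'' is controlled by a function of $Q$ and $\epsilon$ only, with output of $\mathcal{R}$ perturbed by at most $\epsilon/2$.

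Second, I would invoke the paper's state-conversion upper bound to build $\ket{\tilde\psi}$ from EPR pairs: a maximally entangled state of appropriate dimension can be transformed to an $\epsilon$-approximation of $\ket{\tilde\psi}$ at a cost of $O(d_{\infty}(\ket{\Phi},\ket{\tilde\psi})/\epsilon)$ qubits of communication. Choosing the dimensions carefully makes the total communication $O(Q/\epsilon+\log(1/\epsilon)/\epsilon)$. An equivalent route is to first convert EPR pairs into a van~Dam--Hayden embezzling state of dimension $2^{\Theta(Q+\log(1/\epsilon))}$, and then locally embezzle $\ket{\tilde\psi}$ with no further communication. Alice and Bob then execute $\mathcal{R}$ on the prepared state using the original $Q$ qubits, and a triangle inequality combines the truncation, state-conversion, and original-protocol errors into a total error of at most $\epsilon + \eps'$.

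The main obstacle is the first step: arguing that the replacement $\ket{\psi}\mapsto\ket{\tilde\psi}$ can be made uniformly in $\ket{\psi}$, despite the failure of the naive trace-distance argument. The Heisenberg-picture bound on $M$ is the key handle: writing $M=\sum_{j=1}^{4^{Q}}A_{j}\otimes B_{j}$ shows that $\bra{\psi}M\ket{\psi}$ is determined by a bounded number of bilinear forms in the Schmidt data of $\ket{\psi}$, so $\ket{\tilde\psi}$ need only ``match the fingerprint'' of $\ket{\psi}$ against these forms rather than match $\ket{\psi}$ itself in trace distance. Turning this intuition into a quantitative statement (with the right $Q/\epsilon$ and $\log(1/\epsilon)/\epsilon$ dependence, and compatible with the $d_{\infty}$ conversion bound) is where the real work lies.
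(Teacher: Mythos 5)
There is a genuine gap, and it sits exactly where you say ``the real work lies'': your first step, as proposed, does not go through. Two concrete problems. (i) The pulled-back measurement operator $M$ depends on the inputs $(x,y)$, so a state $\ket{\tilde\psi}$ chosen to ``match the fingerprint'' $\bra{\psi}M\ket{\psi}=\sum_{j\le 4^{Q}}\bra{\psi}A_j\ot B_j\ket{\psi}$ would be input-dependent; a shared resource state must be fixed before the inputs arrive. To repair this you would need a single $\ket{\tilde\psi}$ that simultaneously fools \emph{every} measurement implementable with $Q$ qubits of communication, which is essentially the theorem you are trying to prove, and the rank-$4^{Q}$ observation alone gives no handle on it. (ii) Even for a fixed $M$, what you need is not that $\ket{\tilde\psi}$ have small (effective) Schmidt rank, but that it be cheaply preparable from EPR pairs -- i.e.\ close to a mixture of small-\emph{spread} states. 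Bounded operator Schmidt rank of $M$ gives no such structure, and the stronger reading of your step (replace $\ket\psi$ by a state of Schmidt rank bounded in $Q,\eps$ inside the \emph{unmodified} protocol) collides with the results the paper itself cites \cite{JRS08,AHLNSZ}: one cannot in general substitute a much less entangled state into a fixed protocol without large error. (A truly bounded-rank $\ket{\tilde\psi}$ would not even need EPR pairs -- Alice could prepare it and send Bob's half in $O(\log\mathrm{rank})$ qubits -- which is a signal that no argument of this shape can work uniformly.) Two smaller inaccuracies: Theorem \ref{thm::earthmover} gives conversion cost $4\lceil\dearth{\ket\chi}{\ket\upsilon}\rceil+8$, not $O(d_\infty/\eps)$; and the embezzlement route needs the embezzling state's dimension to grow with the Schmidt rank of $\ket{\tilde\psi}$ (roughly rank$^{1/\delta}$), so it again presupposes the rank control that is missing.

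For comparison, the paper's actual route never reduces the rank or dimension of the shared state. It first spends $O(Q)$ qubits (via Theorem \ref{thm::earthmover}) to regroup the Schmidt coefficients of $\ket\psi$ into bands separated by factors of $2^{N}$ with $N=2Q$, writing the result as $\sum_j\ket{\varphi_j}$ with each $\ket{\varphi_j}$ of spread at most $2$. The key quantitative tool is Lemma \ref{lem:cominnerprod}: coherences $\bra{\varphi_k}U\ket{\varphi_j}$ between far-apart bands are exponentially small in $N|j-k|$ for any $U$ implementable with $O(Q+N)$ qubits of communication, so the ``far'' part $\vfar$ of the density matrix is invisible to the protocol (Lemma \ref{lem:fararesmall}). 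The near-diagonal coherences are then handled by an averaging/pigeonhole cut (Lemma \ref{lem:parta}): one of $\lceil 1/\eps\rceil$ candidate cuts has trace at most $\eps$, and removing it leaves a block-diagonal state $\vblock$ that is a mixture of pure states of spread $O(Q/\eps+\log(1/\eps)/\eps)$, hence preparable from EPR pairs at that communication cost, uniformly in the inputs. It is this spread-based, input-independent decomposition -- not a rank or fingerprint reduction -- that your outline is missing.
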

    
    Theorem \ref{thm::weakplusearth} shows that, although the role of shared entanglement
    in quantum communication complexity is still not well understood, the \emph{type} of
    shared entanglement does not drastically change communication complexity.  This is
    true regardless of input size or promise, as long as we are in the constant-error
    regime and some communication is allowed between players (unlike, say, the
    simultaneous-message-passing model). This result sets quantum communication complexity
    apart from settings such as channel simulation~\cite{BDHSW-qrst}, nonlocal
    games~\cite{JungeP11,Regev12}, unitary gate simulation~\cite{HL07}, and communication
    tasks involving quantum communication between referees and players~\cite{LeungTW13}.
    In each of those cases the ratio between the EPR-assisted costs and the (unrestricted)
    entanglement-assisted costs can be made arbitrarily large.  This suggests that the
    role of shared entanglement in quantum communication complexity may be fundamentally
    different than in these other settings.  Furthermore, the result achieved in Theorem
    \ref{thm::weakplusearth} may be useful in future work attempting to further bound the
    role of entanglement in quantum communication complexity, as it restricts the problem
    to the case of shared EPR pairs, without loss of generality.
    
    It may be worth noting that the proof of Theorem \ref{thm::weakplusearth} is nearly
    oblivious to the entanglement-assisted protocol being considered in the following
    sense: Given a protocol $\cP$ using $Q$ qubits of communication and a shared entangled
    state $\ket\psi$, we can replace $\ket\psi$ with a ``consolidated" state $\rho$ at the
    cost of error $\eps$. Moreover, $\rho$ can be prepared from a maximally entangled
    state using $O(\Qovereps)$ communication.  Taking $\eps$ constant implies that the
    EPR-assisted communication complexity of a function is at most $O(1)$ times the
    (unrestricted) entanglement-assisted communication complexity of that function.  It
    was not necessary to modify the protocol $\cP$ to achieve this result, except to
    pre-compose it with a pre-processing protocol which starts with only EPR pairs, and
    prepares the state $\rho$ using only $O(\Qovereps)$ communication.  $\cP$ can then be
    run on $\rho$ directly.  Such a protocol-agnostic preprocessing should not be taken
    for granted, since it is known that reducing the number of EPR pairs may in some cases
    require more than just pre-processing~\cite{JRS08,AHLNSZ}.
        
  \subsection{Communication cost of state transformations}

  Our second contribution, which is related at the level of techniques to Theorem \ref{thm::weakplusearth}, is to provide upper and lower bounds for an old quantity studied in quantum information theory, the communication cost of state transformation.
   
  Suppose that $\ket\chi^{AB}$ and $\ket\nu^{AB}$ are bipartite pure quantum states, with vectors of
  Schmidt coefficients denoted respectively by $\chi$ and $\nu$.  In this setting
  it is known that $\ket\chi$ can be exactly converted into $\ket\nu$ using LOCC if and only if $\chi$
  is majorized by $\nu$~\cite{Nielsen99a}.  But the communication cost of this transformation is known only in
  a few special cases. If $\ket\chi = \ket{\chi_0}^{\otimes n}$ and $\ket{\nu} =
  \ket{\nu_0}^{\otimes n}$ for some states $\ket{\chi_0},\ket{\nu_0}$, then this cost is
  $O(\sqrt n)$ or less in some special cases (e.g. $\ket{\nu_0}$ is maximally entangled).
  More generally there is, in principle, an exact characterization of the communication cost (either LOCC, or quantum communication)
  of state transformation using the Schubert calculus due to Daftuar and Hayden
  \cite{DH05}, but in practice it is difficult to extract concrete bounds from their
  main theorem.
  
  In this work we identify a simple and efficiently computable quantity, which we call the
  $\ell_{\infty}$ Earth Mover's  (or Wasserstein) Distance, which tells us approximately
  how much quantum communication is required to
  transform $\ket\chi$ to $\ket\nu$.  Given its simple form, we believe that this
  quantity may be a useful tool in quantum information theory.  
  
  \begin{definition}[$\ell_{\infty}$ Earth Mover's Distance ]\label{def::earthmoverdistance}
   
   Let $\ket\chi^{AB} = \sum_{i\in X} \sqrt{\chi_i} \ket i^A \ot \ket i^B$ and $\ket\upsilon^{AB} = \sum_{j\in Y} \sqrt{\upsilon_j} \ket j^A \ot \ket j^B$ be two states.  We define $\dearth{\ket \chi}{\ket \upsilon}$ to be the $\ell_{\infty}$ Earth Mover's distance between $\ket\chi$ and $\ket\upsilon$, which is equal to the minimum $\mu \geq 0$ for which there exists a joint distribution $\omega(x,y): X \times Y \to \mathbb{R}_{\geq 0}$ such that:
   
   \begin{itemize}
   	\item  $\sum_{j \in Y} \omega(i,j) = \chi_i$ $\forall i \in X$
   	\item $\sum_{i \in X} \omega(i,j) = \ups_j$ $\forall j \in Y$ 
   	\item $\omega(i,j) = 0$ \text{ whenever } $| \log(\chi_i) - \log(\ups_j)| > \mu$
   \end{itemize}
  \end{definition}
  
We can think of $\chi$ as corresponding to placing $\chi_i$ mass at position
$\log(\chi_i)$ for each $i$, and similarly for $\ups$.  Then $\dearth{\ket \chi}{\ket
  \upsilon}$ is the $\ell_{\infty}$ EMD (Earth Mover's distance) between these distributions.

  In Section \ref{sec::Earthmoverstatetransformation} we will show that this quantity gives an intuitive upper bound on the amount of quantum communication required to transform one bipartite shared state into another.  In particular  we prove the following theorem.
  
  \begin{theorem} \label{thm::earthmover}     Let $\ket\chi^{AB}$ and $\ket\upsilon^{AB}$ be two bipartite shared states.  There is a protocol $\mathcal{M}_{\chi\rightarrow\upsilon}$ which can prepare $\ket\upsilon$ from $\ket\chi$, using only $4 \lceil \dearth{\ket \chi}{\ket \upsilon} \rceil + 8$ qubits of communication.
  \end{theorem}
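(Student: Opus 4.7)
The plan is to exploit the structural property of the optimal transport $\omega$ witnessing $\mu := \lceil\dearth{\ket\chi}{\ket\upsilon}\rceil$---namely that $\omega(i,j)=0$ whenever $|\log\chi_i-\log\upsilon_j|>\mu$---to design a protocol whose quantum communication scales as $O(\mu)$. First I would round the Schmidt coefficients of both states to nearby powers of two (perturbing $\mu$ by at most a constant), which gives decompositions
\[
\ket\chi=\sum_{k}\sqrt{p_k}\,\ket{k}^{A_1}\ket{k}^{B_1}\ket{\Phi_k}^{A_2B_2}, \qquad \ket\upsilon=\sum_{l}\sqrt{q_l}\,\ket{l}^{A_1}\ket{l}^{B_1}\ket{\Psi_l}^{A_2B_2},
\]
where $\ket{\Phi_k}$ and $\ket{\Psi_l}$ are maximally entangled states of rank $|B_k|=p_k\cdot 2^k$ and $|C_l|=q_l\cdot 2^l$ respectively, and $\omega$ induces a bin-level transport $\tilde\omega(k,l)=\sum_{i\in B_k,\,j\in C_l}\omega(i,j)$ satisfying $\tilde\omega(k,l)=0$ for $|k-l|>\mu$.

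The protocol then proceeds in three stages. In the first (purely local) stage, Alice and Bob agree on a partition of each bin $B_k$ into sub-blocks indexed by the target bin $l$, with block sizes proportional to $\tilde\omega(k,l)/p_k$, and apply identical local basis permutations on their halves of $\ket{\Phi_k}$ to expose this partition. Since the partition is identical on both sides and $\ket{\Phi_k}$ is invariant under simultaneous application of real-orthogonal permutations, $\ket\chi$ is rewritten, without communication, as
$\sum_{k,l}\sqrt{\tilde\omega(k,l)}\,\ket{k,l}^{A_1A_1'}\ket{k,l}^{B_1B_1'}\,\ket{\mathrm{MES}_{\tilde\omega(k,l)\cdot 2^k}}^{A_2'B_2'}$.
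In the second stage, for each $(k,l)$-piece Alice and Bob coherently convert the sub-MES from rank $\tilde\omega(k,l)\cdot 2^k$ to rank $\tilde\omega(k,l)\cdot 2^l$; since $|l-k|\le\mu$, this adjustment needs at most $\mu$ EPR pairs per piece. I would implement it by having Alice locally prepare a pool of $\mu$ EPR pairs conditioned on her $(k,l)$ register (padding the unused portion with $\ket{0}$'s), transmit the Bob-halves using $O(\mu)$ qubits of quantum communication, and have both parties apply a controlled-merge (for $l>k$) or a free LOCC shrinkage (for $l\le k$) indexed by $(k,l)$. The third stage uncomputes the source-bin register $k$, which conditioned on $l$ carries an entangled state of Schmidt rank at most $2\mu+1$; running the small preparation protocol for this conditional state in reverse, controlled on $l$, costs a further $O(\log\mu)$ qubits.

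The main obstacle will be the controlled MES expansion in Stage 2: the expansion factor $l-k$ varies across $(k,l)$-pieces, so a naive fixed-size EPR preparation would leave the reserve register entangled with the $(k,l)$-label, damaging the final pure state. My resolution is to make Alice's preparation of the reserve conditional on her local copy of $(k,l)$ already available from Stage 1, so that the unused portion of the reserve is the fixed product state $\ket{0}$ independent of the piece, while the used portion is fully absorbed into the main sub-MES by the controlled-merge; the reserve then factorizes out cleanly and can be discarded. A careful accounting of the forward-and-back quantum communication in Stage 2 (which naturally picks up a factor of two-to-four from teleportation setup and its uncomputation) plus the constant-size overhead for Stage 3 yields the stated bound of $4\lceil\dearth{\ket\chi}{\ket\upsilon}\rceil+8$ qubits.
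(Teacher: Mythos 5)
Your overall architecture (bin the Schmidt coefficients by their logarithm, use the optimal transport $\omega$ to route mass between bins, and pay communication proportional to $|k-l|\le\mu$ per piece) is in the same spirit as the paper's proof, but the execution has a genuine gap: the very first step, rounding the Schmidt coefficients of $\ket\chi$ and $\ket\upsilon$ to powers of two so that each bin becomes a maximally entangled block $\ket{\Phi_k}$, changes the \emph{states}, not just the parameter $\mu$. Within a bin $\{i: 2^{-(k+1)}<\chi_i\le 2^{-k}\}$ the coefficients are generically unequal, so $\ket{\Phi_k}$ is not maximally entangled unless you actually perturb the state; after that perturbation your protocol prepares (a renormalized rounding of) $\ket\upsilon$ rather than $\ket\upsilon$ itself. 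The theorem, however, asserts \emph{exact} conversion, and the downstream machinery of your Stage 2 (EPR-pair pools, rank-$r\cdot 2^k$ to rank-$r\cdot 2^l$ MES conversion) is intrinsically tied to uniform blocks, so the approximation cannot be removed by a local fix. A second, lesser issue is that the final bound $4\lceil\dearth{\ket\chi}{\ket\upsilon}\rceil+8$ is asserted from ``a careful accounting'' rather than derived; your three stages each contribute constants that you have not pinned down, and Stage 3's uncomputation of the source-bin register is itself a nontrivial entanglement transformation whose cost you only estimate as $O(\log\mu)$.

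The paper avoids rounding entirely. Its Lemma \ref{lem::quadpartite} constructs, by a greedy algorithm, an exact common refinement of the two Schmidt spectra: a chain $\ket\chi\to\ket\gamma\to\ket\rho\to\ket\upsilon$ of ``Index-1 flows'' in which every coefficient of the finer state is assigned to exactly one coefficient of the coarser state and sums match exactly, with degrees $2^{2\lceil d\rceil+4}$, $2^{\lceil d\rceil+2}$, $2^{\lceil d\rceil+2}$. Lemma \ref{lem::rightflowprot} then shows that any such flow of degree $2^Q$ is implementable with exactly $Q$ qubits by a controlled unitary that prepares, conditioned on $\ket i_A$, the (generally non-maximally-entangled) state $\sum_j\sqrt{\kappa_j/\tau_i}\,\ket{s_{ij}}\ket{s_{ij}}$ and ships one register across --- this is the step that lets arbitrary, non-uniform coefficients be handled exactly, and it is what your maximally-entangled-block formulation is missing. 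Adding the three costs, $(2\lceil d\rceil+4)+(\lceil d\rceil+2)+(\lceil d\rceil+2)$, gives the stated $4\lceil d\rceil+8$. To repair your argument you would either need to replace the MES blocks by controlled preparation of arbitrary small-Schmidt-rank states (at which point you have rediscovered the flow lemma), or weaken the theorem to approximate conversion.
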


In Section \ref{sec::lowerbound} we establish a complementary lower bound, showing that a ``$\epsilon$-smoothed" version of the $\ell_{\infty}$ Earth Mover's Distance, denoted by $\dearthsmooth{\eps}{\ket{\chi}}{\ket{\upsilon}}$, gives a lower bound on the cost of state transformation.  That is:

\def\ThmLowerBound
 { 	Given any two bipartite shared states $\ket {\psi}^{AB} = \sum_{i} \sqrt{\psi_i} \ket i^A \ot \ket i^B$ and $\ket {\phi}^{AB} = \sum_{i} \sqrt{\phi_i} \ket i^A \ot \ket i^B$, shared between two parties $A$ and $B$, together with a unitary $U_\mathcal{P}$ which can be performed on the state $\ket \psi^{AB}$ via a quantum communication protocol $\mathcal{P}$, that uses $Q$ qubits of communication between $A$ and $B$, we have that, for every $\epsilon$:
  	
  	\[  \left |\bra{ \phi}^{AB}U_\mathcal{P}\ket{\psi}^{AB} \right |  \leq  1 - \frac{1}{4} \epsilon^2  + 24 \cdot  2^{-\frac{1}{2}(\dearthsmooth{\eps}{\ket \psi}{\ket \phi} - 3Q)}\]
}
  \begin{theorem} \label{thm::lowerbound}
\ThmLowerBound
  \end{theorem}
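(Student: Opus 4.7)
The plan is to decompose the overlap $\bra{\phi} U_{\mathcal P}\ket{\psi}$ along the dyadic ``level'' subspaces of the Schmidt bases of $\ket\psi$ and $\ket\phi$, and to separately bound the contribution from pairs of levels that are near on the log-scale (where the smoothed EMD constraint bites) and pairs that are far apart (where the bounded communication bites).

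First I would introduce the level projectors: for each non-negative integer $k$, let $\Pi^\psi_{=k}$ project the $A$-register onto the Schmidt basis vectors $\ket i$ with $\psi_i\in(2^{-k-1},2^{-k}]$, and define $\Pi^\phi_{=k}$ analogously. This yields decompositions $\ket\psi=\sum_k\ket{\psi_k}$ and $\ket\phi=\sum_j\ket{\phi_j}$, where $\ket{\psi_k}=(\Pi^\psi_{=k}\otimes\Pi^\psi_{=k})\ket\psi$ has squared norm $p_k:=\sum_{i\in L_k^\psi}\psi_i$, and likewise $\|\ket{\phi_j}\|^2=q_j$. Each normalised block $\ket{\psi_k}/\sqrt{p_k}$ is approximately maximally entangled on its support, which is the structure that will power the Schmidt-rank argument in the next step.

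Next I would establish the central structural lemma. Because $\mathcal P$ uses $Q$ qubits of total communication, the unitary $U_{\mathcal P}$ has operator Schmidt rank at most $4^Q$, so it admits an expansion $U_{\mathcal P}=\sum_{\alpha=1}^{R}A_\alpha\otimes B_\alpha$ with $R\le 4^Q$. Combining this bound with the near-uniform Schmidt profiles of $\ket{\psi_k}$ and $\ket{\phi_j}$, the target inequality is
\[
  |\bra{\phi_j}U_{\mathcal P}\ket{\psi_k}|\;\le\;C\cdot 2^{Q}\cdot 2^{-|k-j|/2}\cdot\sqrt{p_k\,q_j}
\]
for an absolute constant $C$. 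Heuristically, if the two blocks were exactly maximally entangled on dimensions of order $p_k 2^k$ and $q_j 2^j$, then the overlap mediated by a rank-$4^Q$ unitary is controlled by $\sqrt{\min/\max}$ of those dimensions, producing precisely the $2^{-|k-j|/2}$ decay and a $2^Q$ overhead.

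Finally I would assemble the bound by splitting $\bra\phi U_{\mathcal P}\ket\psi=\sum_{k,j}\bra{\phi_j}U_{\mathcal P}\ket{\psi_k}$ at a threshold roughly equal to $\dearthsmooth{\epsilon}{\ket\psi}{\ket\phi}-3Q$. The near-diagonal part, over pairs with $|k-j|$ small, defines an approximate $\ell_{\infty}$-cost transport plan between the spectra of $\ket\psi$ and $\ket\phi$; by the definition of the $\epsilon$-smoothed EMD, any such plan of cost strictly below $\dearthsmooth{\epsilon}{\ket\psi}{\ket\phi}$ must either leave mass unmatched or work with states more than $\epsilon$ away from $\ket\psi,\ket\phi$, forcing the near-diagonal contribution to be at most $1-\tfrac14\epsilon^2$. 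The far-diagonal sum is bounded by the structural lemma and summed as a geometric series, giving $24\cdot 2^{-(\dearthsmooth{\epsilon}{\ket\psi}{\ket\phi}-3Q)/2}$. Adding the two pieces yields the stated inequality.

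The main obstacle is the structural lemma: pushing beyond the trivial bound $|\bra{\phi_j}U_{\mathcal P}\ket{\psi_k}|\le\sqrt{p_k q_j}$ to extract geometric decay in $|k-j|$. This requires combining the operator-Schmidt-rank bound on $U_{\mathcal P}$ with the fact that each $\ket{\psi_k},\ket{\phi_j}$ is (up to a bounded factor) maximally entangled on a subspace of known dimension, so that the only way to couple a ``small'' level to a ``large'' level is via a disproportionately large operator-Schmidt-rank budget. Getting the $2^Q$ overhead and the $2^{-|k-j|/2}$ decay to combine with the correct constants is where the $3Q$ and the $24$ in the final inequality come from, and where most of the technical work lies.
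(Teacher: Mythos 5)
Your first two ingredients are essentially the machinery the paper already has: the dyadic level decomposition and the cross-level overlap bound are what Lemma \ref{lem:cominnerprod} provides (note, though, that the Schmidt-rank argument gives a prefactor $2^{3Q/2}$, not $C\cdot 2^Q$ --- the rank grows by $2^Q$ \emph{and} the largest Schmidt coefficient grows by $2^Q$, and this is exactly where the $3Q$ in the exponent comes from, not from a later recombination), and the far-off-diagonal geometric-series summation is also standard and appears in the paper's arguments. The genuine gap is your near-diagonal step. You assert that the band $\{|k-j|\le T\}$ of overlaps ``defines an approximate $\ell_\infty$-cost transport plan'' and that the smoothed-EMD hypothesis then forces this contribution below $1-\tfrac14\epsilon^2$, but no argument is given, and as stated it does not go through: the magnitudes $|\bra{\phi_j}U_{\mathcal P}\ket{\psi_k}|$ are not masses and do not have the marginals $p_k,q_j$ of a coupling, so they do not furnish a transport plan to which Definition \ref{def:epssmoothdist} can be applied. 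Moreover $\dearthsmooth{\eps}{\ket\psi}{\ket\phi}=d$ is a statement about a \emph{single witnessing quantile} $p$ (an absolute cut in the spectrum), whereas your split is by the relative offset $|k-j|$ uniformly across all levels; turning the localized quantile gap into a bound on a band-restricted bilinear form is precisely the missing work. Finally, even granting a cut, the loss is quadratic in $\epsilon$, and extracting $1-\Theta(\epsilon^2)$ rather than $1-\Theta(\epsilon)$ requires an explicit optimization.

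The paper handles this differently: it takes the quantile $p$ achieving the smoothed distance, cuts $\ket\psi$ at the absolute threshold $x=F_{p_\psi}^{-1}(p)$ and $\ket\phi$ at $x+d$, so that $\|\ket\psi_{\le x}\|^2=p$ while $\|\ket\phi_{\ge x+d}\|^2\ge 1-p+\epsilon$, and builds the three-part decomposition of Definition \ref{def:lowerbounddecomp}. Lemma \ref{lem:approxorthog} (which is where the dyadic slicing and Lemma \ref{lem:cominnerprod} actually enter, applied only to $\ket\psi_{\le x}$ against the far tail of $\ket\phi$) shows all cross terms are at most $h(Q,d)=4\cdot 2^{(3Q-d)/2}$, and then the near contribution is controlled by the constrained maximization of $x_1y_1+x_2y_2+x_3y_3$ over unit vectors with $x_1=\sqrt p$, $y_3\ge\sqrt{1-p+\epsilon}$, giving $\sqrt{p}\sqrt{p-\epsilon}+\sqrt{1-p}\sqrt{1-p+\epsilon}\le 1-\tfrac18\epsilon^2$ via Fact \ref{fct:calculus}. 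You would need to supply an argument of this kind (or an alternative one of comparable substance) for your near-diagonal band before your outline constitutes a proof.
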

   In words:  If two shared states cannot be brought within small  $\ell_{\infty}$ Earth Mover's Distance of each other by moving an $\eps$ quantity of mass of their Schmidt coefficients, then they also cannot be brought closer than $1-O(\eps^2)$ fidelity with each other without using  $\Omega(\dearthsmooth{\eps}{\ket \psi}{\ket \phi})$ qubits of communication (for sufficiently large values of $\dearthsmooth{\eps}{\ket \psi}{\ket \phi}$).  Thus, the $\epsilon$-smoothed $\ell_{\infty}$ Earth Mover's Distance provides a lower bound on the communication cost of state conversion.  On the other hand, from the definition of $\dearthsmooth{\eps}{\ket \psi}{\ket \phi}$, stated in Definition \ref{def:epssmoothdist}, we note here that one can use Theorem \ref{thm::earthmover} to move $\ket \psi$ to within $1 - \epsilon$ fidelity of $\ket \phi$ using only $O(\dearthsmooth{\eps}{\ket \psi}{\ket \phi})$ qubits of communication.  To do this, omit the $\epsilon$ mass of Schmidt coefficients on which the two states have large $\epsilon$-smoothed $\ell_{\infty}$ distance, and apply Theorem \ref{thm::earthmover} as one would do with the regular $\ell_{\infty}$ Earth Mover's Distance.  In this sense $\dearthsmooth{\eps}{\ket \psi}{\ket \phi}$ gives both an upper and lower bound on the communication cost of state conversion.

To put these bounds in context:  One could consider entanglement concentration and dilution to be the starting point for the study of state conversion.  The original paper on entanglement concentration and dilution~\cite{BBPS96} concerned the
many-copy limit and did not attempt to bound the amount of classical communication used.
The first time the classical communication cost of state conversion was considered explicitly seems
to have been in \cite{LP99}, which could be said to establish a version of our upper bound
in the case where the starting state is maximally entangled.  (Their result is not quite
that general but contains many of the key ideas.)  A version of our lower bound was
established, again for the case of starting with maximally entangled states, in
\cite{HL02,HW02}.  These lower bounds could be applied to general state conversion but
relied on R\`enyi entropy inequalities that are clearly not tight in many cases.  Finally, as noted earlier, a full
characterization of the communication cost of general state conversion was given in
\cite{DH05} but the resulting formula is complicated and there is not an efficient
algorithm known to evaluate it.

We conclude the section with two remarks about notation.

\begin{remark}
	In theorem statements above, and where appropriate, we have made use of
        superscripts $A$ and $B$, as in $\ket {\psi}^{AB} = \sum_{i} \sqrt{\psi_i} \ket
        i^A \ot \ket i^B$ to explicitly denote the two halves of the bipartite division of
        a state.  However, since all of the shared entangled states considered in this
        paper are bipartite, and since the two components of the bipartite division are
        generally clear from context, we will usually omit this notation.
\end{remark}

\begin{remark}  \label{rem:decomp}
When considering a bipartite state $\ket \psi$, we will assume that the state has a Schmidt decomposition of the form $\ket {\psi} = \sum_{i} \sqrt{\psi_i} \ket i \ot \ket i$ across the implicit bipartite division.  This is done in the theorem statements above and everywhere in the paper.  We can assume this WLOG because any state that has the same Schmidt coefficients as $\ket \psi$ can be moved to this canonical form (and vice versa) using only local unitary transformations, which can be implemented with no quantum communication between the two components of the bipartite division.  Thus our analysis of communication costs is unaffected by assuming WLOG that, in any quantum communication protocol, shared entangled states start and end in this form.
\end{remark}

\section{Entanglement-Assisted Communication Complexity} \label{sec::mainresult}

In this section we will discuss the proof of our main result, Theorem
\ref{thm::weakplusearth}, which shows that arbitrary entanglement-assisted quantum
communication protocols can be simulated by quantum communication protocols that use only
the maximally entangled state as an entangled resource.  
A basic fact we will need is that two bipartite pure states which are sufficiently different in the
distribution of mass across their Schmidt coefficients must be nearly orthogonal.  This
fact is stated for our specific purposes in Lemma \ref{lem:innerprod}
below.  
Crucially, such states
\emph{remain} nearly orthogonal even after one of them is acted on by any unitary which
can be implemented with a small amount of quantum communication, as we detail in
Lemma \ref{lem:cominnerprod}.

\begin{lemma}\label{lem:cominnerprod}
	Given two quantum states $\ket \psi$ and $\ket \nu$ on $\mathcal{H}_A \otimes \mathcal{H}_B$, such that the Schmidt coefficients of $\psi$ are upper bounded by $\lambda_{\max}$, and those of $\nu$ are upper bounded by $\nu_{\max}$, and further given a unitary transformation $\mathcal{U}$ on $\mathcal{H}_A \otimes \mathcal{H}_B$ which can be implemented using at most $Q$ qubits of communication between the $\mathcal{H}_A$ and $\mathcal{H}_B$ components of the Hilbert space, it follows that:
	
	\[ |  \bra{\psi}\mathcal{U}\ket{\nu}| \leq   2^{\frac{3}{2}Q}\cdot \SR(\ket \psi)  \sqrt{\lambda_{\max} \nu_{\max} } \] 
\end{lemma}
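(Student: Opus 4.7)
The plan is to transfer the unitary to the other vector in the inner product and then invoke the earlier inner-product bound of Lemma~\ref{lem:innerprod}. Writing $\bra\psi\mathcal{U}\ket\nu = \braket{\psi'}{\nu}$ with $\ket{\psi'} := \mathcal{U}^\dagger \ket\psi$, and observing that $\mathcal{U}^\dagger$ is implementable with the same $Q$ qubits of communication as $\mathcal{U}$ (by reversing the protocol and daggering each local step), the task would reduce to bounding how much the Schmidt rank and the maximum Schmidt coefficient of $\ket\psi$ can grow under a protocol using $Q$ qubits of communication.

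The heart of the argument would be a per-qubit analysis showing that each single qubit of communication can at most double both the Schmidt rank of the shared state and the maximum Schmidt coefficient of either reduced density matrix. To establish this, I would decompose any $Q$-qubit protocol into a sequence of local unitaries (which affect neither quantity) interleaved with single-qubit exchanges, and analyze the exchanges individually. During an exchange the global state is unchanged but the bipartition shifts by one qubit: Alice's reduced density matrix either loses a qubit via partial trace ($\rho_A \mapsto \tr_{A_m}\rho_A$ when she sends) or gains a qubit ($\rho_A \mapsto \rho_{AB_m}$ when she receives). The partial-trace inequality $\|\tr_S \rho\|_\infty \leq \dim(S)\|\rho\|_\infty$, applied in one direction directly and in the other via an eigenvector argument on $\rho_{AB_m}$, would bound the change in the maximum eigenvalue by a factor of $2$ per qubit; a simple basis-expansion of the Schmidt decomposition across the new cut would give the corresponding doubling bound for the Schmidt rank.

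Accumulating these factors across all $Q$ qubits of communication in the protocol for $\mathcal{U}^\dagger$ would yield $\SR(\ket{\psi'}) \leq 2^Q \SR(\ket\psi)$ and $\lambda'_{\max} \leq 2^Q \lambda_{\max}$, where $\lambda'_{\max}$ denotes the largest Schmidt coefficient of $\ket{\psi'}$. Substituting these bounds into Lemma~\ref{lem:innerprod} applied to the pair $(\ket{\psi'}, \ket\nu)$ would then give
\[
|\braket{\psi'}{\nu}| \leq \SR(\ket{\psi'})\sqrt{\lambda'_{\max}\,\nu_{\max}} \leq 2^Q \SR(\ket\psi)\sqrt{2^Q \lambda_{\max}\,\nu_{\max}} = 2^{\frac{3}{2}Q}\SR(\ket\psi)\sqrt{\lambda_{\max}\,\nu_{\max}},
\]
which is the desired bound. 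The main obstacle is the per-qubit analysis: one must verify carefully that both the Schmidt rank and the maximum Schmidt coefficient of the full bipartite state change by at most a factor of two whenever a single qubit is exchanged across the bipartition, regardless of the direction of the exchange; once this is in hand, the rest of the argument is a clean substitution.
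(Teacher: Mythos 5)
Your proposal is correct and follows essentially the same route as the paper: absorb the unitary into one of the two states, use the facts that $Q$ qubits of communication increase the Schmidt rank and the largest Schmidt coefficient by at most a factor of $2^Q$ each, and then invoke Lemma~\ref{lem:innerprod}. The only differences are cosmetic: you apply $\mathcal{U}^\dagger$ to $\ket\psi$ rather than $\mathcal{U}$ to $\ket\nu$, and you sketch a (sound) per-qubit derivation of the doubling bounds, which the paper simply cites from \cite{HW02}.
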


\begin{proof}
	If $\mathcal{U}$ is a unitary transform using $Q$ qubits of communication, then $\SR(\mathcal{U} \ket{\nu}) \leq 2^Q \SR(\ket{\nu})$  \cite{HW02}.  We also know that the Schmidt coefficients of $\mathcal{U}\ket{\nu}$ are bounded above by $2^Q \nu_{max}$ \cite{HW02}.  The desired result now follows by Lemma \ref{lem:innerprod}.
\end{proof}

\begin{lemma} \label{lem:innerprod}
Given two quantum states $\ket \psi$ and $\ket \nu$ on $\mathcal{H}_A \otimes \mathcal{H}_B$, such that the Schmidt coefficients of $\psi$ are upper bounded by $\lambda_{\max}$, and those of $\nu$ are upper bounded by $\nu_{\max}$,  we have:

\[ |  \braket{\psi}{\nu}| \leq   \SR(\ket \psi)  \sqrt{\lambda_{\max} \nu_{\max} } \]

\end{lemma}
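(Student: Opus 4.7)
The plan is to write down the Schmidt decompositions of both states explicitly, expand the inner product as a double sum, uniformly bound the Schmidt coefficients, and then use orthonormality of the Schmidt bases together with a Cauchy--Schwarz step to control the remaining combinatorial factor.

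More concretely, first I would write
\[
\ket\psi = \sum_{i=1}^{r}\sqrt{\lambda_i}\,\ket{a_i}^A\ket{b_i}^B,
\qquad
\ket\nu = \sum_{j}\sqrt{\nu_j}\,\ket{c_j}^A\ket{d_j}^B,
\]
where $r = \SR(\ket\psi)$, the vectors $\{\ket{a_i}\}$ and $\{\ket{b_i}\}$ are orthonormal on their respective factors, and similarly for $\{\ket{c_j}\}$ and $\{\ket{d_j}\}$. Then
\[
\braket{\psi}{\nu} = \sum_{i,j}\sqrt{\lambda_i\nu_j}\,\braket{a_i}{c_j}\braket{b_i}{d_j},
\]
so by the triangle inequality and the hypothesis $\lambda_i\le\lambda_{\max}$, $\nu_j\le\nu_{\max}$,
\[
|\braket{\psi}{\nu}|
\le \sqrt{\lambda_{\max}\nu_{\max}}\sum_{i,j}|\braket{a_i}{c_j}||\braket{b_i}{d_j}|.
\]

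The remaining work is to show that the double sum is at most $\SR(\ket\psi) = r$. I would apply Cauchy--Schwarz to the double sum, viewing $(i,j)$ as the summation index, to obtain
\[
\sum_{i,j}|\braket{a_i}{c_j}||\braket{b_i}{d_j}|
\le \Bigl(\sum_{i,j}|\braket{a_i}{c_j}|^2\Bigr)^{1/2}
      \Bigl(\sum_{i,j}|\braket{b_i}{d_j}|^2\Bigr)^{1/2}.
\]
Then, for each fixed $i$, Bessel's inequality (applied to the orthonormal family $\{\ket{c_j}\}$ acting on the unit vector $\ket{a_i}$) gives $\sum_{j}|\braket{a_i}{c_j}|^2 \le 1$, and summing over the $r$ values of $i$ yields $\sum_{i,j}|\braket{a_i}{c_j}|^2 \le r$. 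The same argument gives $\sum_{i,j}|\braket{b_i}{d_j}|^2 \le r$. Combining everything produces $|\braket{\psi}{\nu}| \le r\sqrt{\lambda_{\max}\nu_{\max}} = \SR(\ket\psi)\sqrt{\lambda_{\max}\nu_{\max}}$, as required.

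There is no real obstacle here: the only subtlety is making sure the sum over $i$ ranges over the \emph{Schmidt support} of $\ket\psi$ (hence has $r = \SR(\ket\psi)$ terms) rather than over a full basis of $\mathcal{H}_A$, so that the final bound gets the Schmidt rank of $\psi$ rather than a dimension factor. Note that by the symmetry $|\braket{\psi}{\nu}| = |\braket{\nu}{\psi}|$, the same argument with the roles of $\psi$ and $\nu$ exchanged yields the slightly stronger bound $\min(\SR(\ket\psi),\SR(\ket\nu))\sqrt{\lambda_{\max}\nu_{\max}}$, which can be recorded as a remark if convenient but is not needed for the stated form of the lemma.
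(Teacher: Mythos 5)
Your proof is correct, and it reaches the bound by a somewhat different mechanism than the paper. The paper also expands both states in their Schmidt bases and pulls out $\sqrt{\lambda_{\max}}$, but it then reshapes $\ket\nu$ into the operator $M_\nu = \sum_j \sqrt{\nu_j}\,\ket{j}_A\otimes\bra{j}_B^*$ and observes that its largest singular value is $\sqrt{\nu_{\max}}$; this bounds each of the $r=\SR(\ket\psi)$ terms $\bra{i_A}M_\nu\ket{i_B^*}$ by $\sqrt{\nu_{\max}}$ individually, and summing over $i$ gives the factor $r$. You instead keep the full double sum and control the overlap factor $\sum_{i,j}|\braket{a_i}{c_j}||\braket{b_i}{d_j}|$ with one global Cauchy--Schwarz over the joint index $(i,j)$ followed by Bessel's inequality on each factor. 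The two arguments are close cousins --- the paper's per-term singular-value bound is itself a Cauchy--Schwarz-plus-Bessel estimate in disguise --- but yours is the more elementary phrasing, avoiding any appeal to singular values, and you correctly identify the one real subtlety (restricting the $i$-sum to the Schmidt support of $\psi$ so that the prefactor is $\SR(\ket\psi)$ rather than $\dim\mathcal{H}_A$). Your closing remark that the symmetric argument yields $\min(\SR(\ket\psi),\SR(\ket\nu))\sqrt{\lambda_{\max}\nu_{\max}}$ is also valid, and is equally available from the paper's version by exchanging the roles of the two states.
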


\begin{proof}
For brevity let $r = \SR(\ket \psi)$.  Schmidt decompose $\ket \psi$ and $\ket \nu$ as $\ket \psi = \sum_{i=0}^{r-1} \sqrt{\lambda_i}  \ket i_A \ot \ket i_B$,  as $\ket \nu = \sum_j \sqrt{\nu_j} \ket j_A \ot \ket j_B$.  Define the matrix $M_{\nu} =   \sum_j \sqrt{\nu_j} \ket j_A \ot \bra j_B^*$, and note that

\begin{align*}
\braket{\psi}{\nu} &=  \sum_{i=0}^{r-1} \sum_j \sqrt{\lambda_i \nu_j} \braket{i_A}{j_A} \ot \braket{i_B}{j_B} = \sum_{i=0}^{r-1} \sum_j \sqrt{\lambda_i \nu_j} \braket{i_A}{j_A} \ot (\braket{j_B}{i_B})^* \\
&= \sum_{i=0}^{r-1} \sum_j \sqrt{\lambda_i \nu_j} \braket{i_A}{j_A} \ot \braket{j_B^*}{i_B^*} = \sum_{i=0}^{r-1}\sqrt{\lambda_i } \bra{i_A} \left (\sum_j \sqrt{\nu_j} \ket j_A \ot \bra j_B  \right ) \ket{i_B^*} \\
& = \sum_{i=0}^{r-1} \sqrt{\lambda_i } \bra{i_A} M_{\nu}  \ket{i_B^*} 
\end{align*} 

Now, by definition of a Schmidt Decomposition, we know that the maximum singular value of $M_{\nu}$ is $\sqrt{\nu_{max}}$.  Thus, for all $i$ we have that $ | \bra{i_A} M_{\nu}  \ket{i_B^*} | \leq \sqrt{\nu_{max}}$  (since $\ket{i_A}$ and $\ket{i_B}$ are normalized vectors by definition).  It then follows that:

\begin{align*}
|\braket{\psi}{\nu}| & = \left | \sum_{i=0}^{r-1} \sqrt{\lambda_i } \bra{i_A} M_{\nu}  \ket{i_B^*}  \right |  \leq \sqrt{\lambda_{\max}} \sum_{i=0}^{r-1} | \bra{i_A} M_{\nu}  \ket{i_B^*} | \\
&\leq r \sqrt{\lambda_{\max}\nu_{\max}} = \SR(\ket \psi) \sqrt{\lambda_{\max}\nu_{\max}} 
\end{align*} 

\end{proof}

Theorem \ref{thm::weakplusearth} is the main result of this work. The proof is long enough that a high-level outline may be valuable. Therefore will now give a brief, intuitive outline of the proof of Theorem \ref{thm::weakplusearth}, restated below for the reader's convenience, and include the complete proof in Section \ref{pf::thm::weakplusearth} of the Appendix.  

\begin{theorem*}[Restatement of Theorem \ref{thm::weakplusearth}] Consider a quantum communication protocol $\mathcal{R}$ whose goal is to compute a joint function $g(x,y) \in \{0, 1\}$. Suppose that $\mathcal{R}$ uses an arbitrary bipartite entangled state $\ket{\psi}^{AB}$ (of unbounded dimension), as well as $Q$ qubits of communication total, in either direction (for sufficiently large $Q \geq 15$).  Then, for every $\eps > 0$, there exists a quantum communication protocol $\mathcal{R'}$ which simulates $\mathcal{R}$ with error $\epsilon$, while using only a maximally entangled state as an entangled resource (rather than $\ket{\psi}^{AB}$ or any other state), and using $O(\Qovereps)$ qubits of communication.  Thus, if $\mathcal{R}$ computes $f$ with error $\eps '$ it follows that $\mathcal{R'}$ computes $f$ with error $\epsilon+\eps '$.
\end{theorem*}

\paragraph{Outline of  the Proof of Theorem \ref{thm::weakplusearth}:}The proof of Theorem
\ref{thm::weakplusearth} has three main parts.  
First, the initial entangled state used by the protocol can be converted using a small
amount of communication to a state $\varphi$ in which the Schmidt coefficients are grouped
into evenly spaced groups.  This is achieved using Theorem \ref{thm::earthmover}.

Second, we show that
our new ``grouped" entangled state can be divided into three ``pieces" (more precisely
termed \ssmats\ in Definition \ref{def:subsetmatrix}), one piece which has small trace
norm and can therefore be omitted, one piece called $\vfar$ which only has non-zero terms
which are far from the diagonal in the appropriate basis, and one piece called $\vblock$
which is a block-diagonal mixed state that can be produced with small error and low
communication cost from a maximally entangled state. 

 to show that, if one starts with a quantum communication protocol with an arbitrary
 shared entangled state, then that protocol can be modified, using a small amount of
 additional communication, to instead use an entangled state, $\varphi$, (a property which
 will be useful later in the proof).  Once we have reduced, without loss of generality, to
 appropriately ``grouped" entangled state $\varphi$ in this way, the proof proceeds in two
 halves.  In the first half, which is summed up in Lemma \ref{lem:parta}, we show that 

In the second half of the proof, which is summed up in Lemma \ref{lem:fararesmall}, we show that the $\vfar$ piece of $\varphi$ has very little effect on the outcome of the quantum communication protocol in question.  This means that $\varphi$ can be replaced by $\vblock$ alone while incurring very little error in the outcome of the quantum communication protocol.  Since $\vblock$ can be produced with low cost from a maximally entangled state, this then achieves the desired result.  The full proof of Theorem \ref{thm::weakplusearth} is included in Section \ref{pf::thm::weakplusearth} of the Appendix.
    The role of Lemma \ref{lem:cominnerprod} in the proof is within this step for controlling the terms far from the diagonal, in Lemma \ref{lem:fararesmall}.

\section{The Cost of State Transformation:  A Lower Bound} \label{sec::lowerbound}

It is natural at this point to discuss the background and proof for Theorem \ref{thm::lowerbound}, which establishes a lower-bound on the cost of State Transformation by the $\epsilon$-Smoothed $\ell_{\infty}$ Earth Mover's Distance, and to postpone the discussion of Theorem \ref{thm::earthmover} until Section \ref{sec::Earthmoverstatetransformation}, for two reasons.  First, the proof of Theorem \ref{thm::lowerbound} in this section shares key techniques in common with the proof of Theorem \ref{thm::weakplusearth} in Section \ref{sec::mainresult} above, and so this progression may provide the reader with some continuity of thought while also reiterating the usefulness of the techniques.  Second, Theorem \ref{thm::lowerbound} in this section motivates the notion of the $\ell_{\infty}$ Earth Mover's Distance by highlighting its, perhaps surprising, relevance to \emph{lower} bounding the cost of state transformation.  This prepares the reader with some motivation for why the \emph{upper} bound proven in  Theorem \ref{thm::earthmover}, in  Section \ref{sec::Earthmoverstatetransformation} below, is interesting and potentially useful.  Thus, covering Theorem \ref{thm::lowerbound} at this point may provide the reader with a reason to accept the $\epsilon$-smoothed $\ell_{\infty}$ Earth Mover's Distance as a useful proxy for the cost of State Transformation.

Whereas the proof of Theorem \ref{thm::earthmover} in the next section will make direct use of Definition \ref{def::earthmoverdistance}, the proof of Theorem \ref{thm::lowerbound} in this section is elucidated by first establishing an equivalent formulation of the $\ell_{\infty}$ Earth Mover's Distance which is derived by establishing the relationship between the $\ell_{\infty}$ Earth Mover's Distance as defined in Definition \ref{def::earthmoverdistance}, and the Monge-Kantorovich Transportation distance on the real line, as shown below.  After translating to this equivalent definition, stated in Definition \ref{def:equivEarthMoverDistance}, the generalization to the $\epsilon$-smoothed $\ell_{\infty}$ Earth Mover's Distance in Definition \ref{def:epssmoothdist} is straightforward and natural.

\begin{definition}
	Given two probability distributions $\mu$ and $\nu$ on the real line, and a function $c: \mathbb{R} \times \mathbb{R} \to [0, \infty]$ the corresponding Monge-Kantorovich distance, $d_{MK}(\mu, \nu)$ between $\mu$ and $\nu$ is defined as:
	
	\[  d_{MK}(\mu, \nu) = \inf \left \{  \int_{\mathbb{R} \times \mathbb{R}} c(x,y) d \gamma(x,y) |  \gamma \in \Gamma(\mu, \nu)\right \}. \] 
	
	Where $\Gamma(\mu, \nu)$ is defined to be the collection of all probability distributions on $X \times Y \equiv \mathbb{R} \times \mathbb{R}$ which have marginal on $X$ equal to $\mu$ and marginal on $Y$ equal to $\nu$. 
\end{definition}

In order to translate into a statement about quantum states, we make the following definition in a similar style to Definition \ref{def::earthmoverdistance}:

\begin{definition}\label{def:statedistribution}
	Given a bipartite shared state $\ket \psi = \sum_{i\in X} \sqrt{\psi_i} \ket i \ot \ket i$ let us define a random variable $V_{\psi}$ which takes value $\log(\psi_i)$ with probability $\psi_i$  (note that, since the $\psi_i$ sum to one, this is a well defined random variable).  We now define $p_{\psi}$ to be the probability distribution of this random variable.
\end{definition}

It is clear that, for every $\psi$, $p_{\psi}$ is a probability distribution on the real line.  One may note the following simple relationship between Monge-Kantorovich distance and $\ell_{\infty}$ Earth Mover's Distance:

For any $d > 0$, consider the Monge-Kantorovich distance, $d_{MK}$ where the function $c: \mathbb{R} \times \mathbb{R} \to [0, \infty]$ is defined by $c(x,y) = 1$ if $|x-y| \geq d$ and $c(x,y) = 0$ if $|x-y| < d$. Then, for any two quantum states $\ket \psi$ and $\ket \phi$, we have that $\dearth{\ket \psi}{\ket \phi} < d$ if and only if $d_{MK}(p_{\psi}, p_{\phi}) = 0$.

Given this concrete connection between $\ell_{\infty}$ Earth Mover's Distance and the Monge-Kantorovich distance, we can now make use of the following characterization of Monge-Kantorovich distance for distributions on the real line, which is well known in optimal transport theory:

\begin{fact*} \label{fact::monge}
	Let $\mu$ and $\nu$ be probability distributions supported on the real line, and let $F_{\mu}$ and $F_{\nu}$ be their cumulative distribution functions, respectively.  Then, for any $c: \mathbb{R} \times \mathbb{R} \to [0, \infty]$ :
	\[  d_{MK}(\mu, \nu) \equiv \inf_{\gamma \in \Gamma(\mu, \nu)} \left \{  \int_{\mathbb{R} \times \mathbb{R}} c(x,y) d \gamma(x,y)   \right \} =  \int_0^1 c(F_{\mu}^{-1}(s), F_{\nu}^{-1}(s))ds \]
\end{fact*}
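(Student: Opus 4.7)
The plan is to split the claimed equality into two matching inequalities: an upper bound via an explicit coupling that attains the right-hand side, and a matching lower bound via optimality of that coupling.

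For the upper bound I would use the monotone (comonotone) coupling $\gamma^{\ast}$, defined as the pushforward of Lebesgue measure on $[0,1]$ under the map $s \mapsto (F_\mu^{-1}(s), F_\nu^{-1}(s))$. The standard inverse-CDF identity (if $U$ is uniform on $[0,1]$ then $F_\mu^{-1}(U) \sim \mu$) shows that the marginals of $\gamma^{\ast}$ are exactly $\mu$ and $\nu$, so $\gamma^{\ast}\in\Gamma(\mu,\nu)$. A direct change of variables then gives
\[ \int_{\mathbb{R}\times\mathbb{R}} c(x,y)\, d\gamma^{\ast}(x,y) \;=\; \int_0^1 c(F_\mu^{-1}(s), F_\nu^{-1}(s))\, ds, \]
establishing $d_{MK}(\mu,\nu) \leq \int_0^1 c(F_\mu^{-1}(s), F_\nu^{-1}(s))\, ds$.

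For the matching lower bound one must argue that $\gamma^{\ast}$ is actually optimal. I would use a decrossing / rearrangement argument: if any coupling $\gamma\in\Gamma(\mu,\nu)$ assigns positive mass to a ``crossing'' pair $(x_1,y_1),(x_2,y_2)$ with $x_1<x_2$ and $y_1 > y_2$, then rerouting to the uncrossed pair $(x_1,y_2),(x_2,y_1)$ does not raise total cost, provided $c$ satisfies the submodularity / Monge inequality $c(x_1,y_2)+c(x_2,y_1) \leq c(x_1,y_1)+c(x_2,y_2)$ whenever $x_1<x_2$ and $y_1<y_2$. Formalised through $c$-cyclic monotonicity of the support of an optimal $\gamma$, combined with a compactness argument from Kantorovich duality, this forces any optimal coupling to be concentrated on the graph of a monotone map, which for fixed marginals $\mu,\nu$ must coincide with $\gamma^{\ast}$.

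The main obstacle I anticipate is that the equality in the generality stated really demands a structural hypothesis on $c$ (the submodularity inequality above, which is automatic for $c(x,y)=h(|x-y|)$ with $h$ convex, and more generally for ``rearrangement-friendly'' costs); the fact should be read as applying to such $c$. Two standard but delicate points arise: handling atoms of $\mu$ or $\nu$, where $F^{-1}$ is not single-valued and one must use right-continuous inverses, and verifying the submodularity hypothesis for the specific threshold cost $c(x,y)=\mathbf{1}[|x-y|\geq d]$ that drives the smoothed $\ell_\infty$ Earth Mover's distance used in Theorem~\ref{thm::lowerbound}. Both are treated in detail in optimal-transport references such as Villani's text.
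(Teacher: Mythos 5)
The paper never proves this Fact; it is invoked as a standard result from optimal transport, so there is no in-paper argument to compare yours against. Your reconstruction is the standard one, and the first half is unconditionally correct: the monotone coupling $\gamma^{\ast}=(F_{\mu}^{-1},F_{\nu}^{-1})_{\#}\mathrm{Leb}_{[0,1]}$ lies in $\Gamma(\mu,\nu)$ and gives $d_{MK}(\mu,\nu)\le \int_0^1 c(F_{\mu}^{-1}(s),F_{\nu}^{-1}(s))\,ds$ for any nonnegative measurable $c$. You are also right that the reverse inequality is precisely optimality of $\gamma^{\ast}$ and cannot hold for arbitrary $c$: with $\mu=\nu=\tfrac{1}{2}(\delta_0+\delta_1)$ and $c(x,y)=\mathbf{1}[x=y]$ the antitone coupling has cost $0$ while the right-hand side equals $1$. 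So the Fact as stated (quantified over all $c:\mathbb{R}\times\mathbb{R}\to[0,\infty]$) is false, and your submodularity hypothesis is the correct repair for the general equality.

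The one genuine gap is the verification you defer to the references: the threshold cost the paper actually uses, $c(x,y)=\mathbf{1}[|x-y|\ge d]$, is \emph{not} submodular, and the equality fails for it. Taking $x_1=d$, $x_2=2d$, $y_1=0$, $y_2=d$ gives $c(x_1,y_1)+c(x_2,y_2)=2$ but $c(x_1,y_2)+c(x_2,y_1)=1$; correspondingly, for $\mu=\tfrac{1}{2}(\delta_d+\delta_{2d})$ and $\nu=\tfrac{1}{2}(\delta_0+\delta_d)$ the monotone coupling has cost $1$ while the antitone coupling shows $d_{MK}\le\tfrac{1}{2}$. So your decrossing step would not close for this cost. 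What rescues the paper's application is that the Fact is only used in the zero-cost regime (to show $d_{MK}(p_{\psi},p_{\phi})=0$ iff the monotone coupling moves all mass less than $d$), and that one-sided statement admits a direct proof: if $\gamma\in\Gamma(\mu,\nu)$ satisfies $\gamma(\{|x-y|\ge d\})=0$, then any $x\le t-d$ in the support is matched to some $y<x+d\le t$, so $F_{\nu}(t)\ge F_{\mu}(t-d)$ for all $t$, and symmetrically; inverting gives $|F_{\mu}^{-1}(s)-F_{\nu}^{-1}(s)|\le d$ for a.e.\ $s$. I would recommend either restricting the Fact to submodular costs (where your proof is complete, modulo fixing the left-continuous generalized inverse to handle atoms, as you note) or replacing it with this $W_{\infty}$-type statement, which is all the paper needs for Definition \ref{def:equivEarthMoverDistance}.
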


It follows from this Fact, combined with the discussion above, that an equivalent definition of the $\ell_{\infty}$ Earth Mover's Distance is given by:

\begin{definition} \label{def:equivEarthMoverDistance}
	\[ \dearth{\ket \psi}{\ket \phi} \equiv \max_{q \in [0,1]} |F_{p_{\psi}}^{-1}(q)- F_{p_{\phi}}^{-1}(q)|  \]
\end{definition}

In the context of this equivalent formulation of  $\ell_{\infty}$ Earth Mover's Distance, we can succinctly introduce a ``smoothed" version of the same distance.  The reader may note that, since the above definition of $\dearth{\ket \psi}{\ket \phi}$ is evidently not robust against tiny changes of either distribution in the total variation distance it would be impossible to prove a lower bound of the form of Theorem \ref{thm::lowerbound} if stated using that definition.  Hence the motivation for introducing a ``smoothed" version of the distance measure, which has built-in robustness by definition.

\begin{definition}{$\epsilon$-Smoothed $\ell_{\infty}$-Earth Mover's Distance} \label{def:epssmoothdist}
	\[\dearthsmooth{\eps}{\ket \psi}{\ket \phi} \equiv \max_{q \in [0,1]} \min_{r \in [q-\eps, q+\eps]} |F_{p_{\psi}}^{-1}(q)- F_{p_{\phi}}^{-1}(r)| \]

\end{definition}

With this definition in place we can now state the lower bound.

\begin{theorem*}[Restatement of Theorem \ref{thm::lowerbound}]
\ThmLowerBound
\end{theorem*}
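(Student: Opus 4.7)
The plan is to exploit the spectral gap in $\ket\phi$ implied by a large smoothed EMD, split each of $\ket\psi,\ket\phi$ into two pieces, and bound the four resulting cross terms; the core tool is \lemref{cominnerprod}.

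First I would set $\mu = \dearthsmooth{\eps}{\ket\psi}{\ket\phi}$, let $q^{*}$ achieve the maximum in \defref{epssmoothdist}, and define $a = F_{p_\psi}^{-1}(q^{*})$. The defining inequality states that $F_{p_\phi}^{-1}(r) \notin (a-\mu,a+\mu)$ for every $r \in [q^{*}-\eps,q^{*}+\eps]$, so by monotonicity of $F_{p_\phi}^{-1}$ the image of this interval lies in $(-\infty,a-\mu]\cup[a+\mu,\infty)$. From this I would deduce that $p_\phi$ places no mass in the open interval $(a-\mu,a+\mu)$ and, up to swapping the roles of $\ket\psi$ and $\ket\phi$ (equivalently, replacing $U_{\mathcal{P}}$ by $U_{\mathcal{P}}^{\dagger}$), that $F_{p_\phi}(a-\mu) \geq q^{*}+\eps$. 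I would then split $\ket\psi=\ket{\psi_L}+\ket{\psi_H}$ at the Schmidt threshold $2^{a}$ and $\ket\phi=\ket{\phi_L}+\ket{\phi_H}$ at the Schmidt threshold $2^{a-\mu}$; the spectral gap forces every Schmidt coefficient of $\ket{\phi_H}$ to be at least $2^{a+\mu}$, and the masses satisfy $\|\psi_L\|^{2}=q^{*}$ and $m_L := \|\phi_L\|^{2} \geq q^{*}+\eps$.

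Expanding
\[ \langle\phi|U_{\mathcal{P}}|\psi\rangle = \langle\phi_L|U_{\mathcal{P}}|\psi_L\rangle + \langle\phi_H|U_{\mathcal{P}}|\psi_H\rangle + \langle\phi_L|U_{\mathcal{P}}|\psi_H\rangle + \langle\phi_H|U_{\mathcal{P}}|\psi_L\rangle, \]
I would bound the first two ``aligned'' terms by Cauchy--Schwarz: their moduli sum to at most $\sqrt{m_L q^{*}}+\sqrt{(1-m_L)(1-q^{*})}$, which is the Bhattacharyya coefficient of the two-point distributions $(q^{*},1-q^{*})$ and $(m_L,1-m_L)$ and hence, by the elementary bound $1-\mathrm{BC}(p,q) \geq \tfrac{1}{2}\mathrm{TV}(p,q)^{2}$, is at most $1-\tfrac{1}{2}(m_L-q^{*})^{2}\leq 1-\tfrac{1}{2}\eps^{2}$. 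For the two ``mismatched'' terms I would dyadically refine each piece, writing e.g.\ $\ket{\psi_L}=\sum_{k\leq a}\ket{\psi^{(k)}}$ with $\ket{\psi^{(k)}}$ collecting the Schmidt components whose coefficients lie in $(2^{k-1},2^{k}]$. Each dyadic piece satisfies $\SR(\ket{\psi^{(k)}})\leq 2\|\psi^{(k)}\|^{2}\cdot 2^{-k}$ and has maximum Schmidt coefficient $\leq 2^{k}$, so \lemref{cominnerprod} gives
\[ |\langle\phi^{(k')}|U_{\mathcal{P}}|\psi^{(k)}\rangle| \;\leq\; 2^{3Q/2+1}\|\phi^{(k')}\|^{2}\cdot 2^{(k-k')/2} \]
for $k\leq a$ and $k'\geq a+\mu$; the symmetric bound obtained by applying the lemma to $U_{\mathcal{P}}^{\dagger}$ handles the other mismatched term. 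The crucial feature is that the $a$-dependent prefactors cancel after the double sum, and the geometric series in $k$ and $k'$ produces a total of $O(2^{-(\mu-3Q)/2})$ with an explicit absolute constant.

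Combining the four contributions yields $|\langle\phi|U_{\mathcal{P}}|\psi\rangle| \leq 1-\tfrac{1}{4}\eps^{2}+24\cdot 2^{-(\mu-3Q)/2}$ once the geometric-sum constants are tracked. The main obstacle I anticipate is the case analysis in the first step: when $F_{p_\phi}^{-1}$ has a jump straddling $(a-\mu,a+\mu)$, neither of the cleaner inequalities $F_{p_\phi}(a-\mu)\geq q^{*}+\eps$ and $F_{p_\phi}((a+\mu)^{-})\leq q^{*}-\eps$ need hold individually, and the gap structure must be read directly off the image of $F_{p_\phi}^{-1}$ as above. A secondary subtlety is the dyadic bookkeeping at non-integer $a$, which one can finesse by shifting the dyadic grid to align with $a$.
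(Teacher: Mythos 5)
Your later machinery is fine and closely parallels the paper: the dyadic-shell summation with \lemref{cominnerprod} is exactly the computation in the paper's Lemma \ref{lem:approxorthog}, and your Bhattacharyya/TV bound for the aligned terms plays the role of the paper's $\cos(\alpha-\beta)$ optimization together with Fact \ref{fct:calculus}. The genuine gap is in your first step. From Definition \ref{def:epssmoothdist} you only know that $F_{p_\phi}^{-1}(r)\notin(a-\mu,a+\mu)$ for $r$ in the window $[q^*-\eps,q^*+\eps]$; this does \emph{not} give you both of the facts you need. In the generic one-sided case (all of $F_{p_\phi}^{-1}([q^*-\eps,q^*+\eps])$ on one side of $a$), $p_\phi$ may perfectly well carry mass inside $(a-\mu,a+\mu)$ at quantile levels outside the window, so there is no spectral gap; then, for a single-threshold two-piece cut of $\ket\phi$, one of your two mismatched terms ($\langle\phi_L|U_{\mathcal P}|\psi_H\rangle$ or $\langle\phi_H|U_{\mathcal P}|\psi_L\rangle$) pairs Schmidt coefficients with no separation at all, and \lemref{cominnerprod} gives nothing for it. Conversely, in the straddling case (which you flag as the obstacle, but which is actually the only case where the gap \emph{does} follow), the jump of $F_{p_\phi}^{-1}$ occurs inside the window, which forces $F_{p_\phi}(a-\mu)<q^*+\eps$ (one only gets $\geq q^*-\eps$), so the $\eps$ mass separation that powers your $1-\tfrac12\eps^2$ aligned bound is lost. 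In short, the two hypotheses of your decomposition (two-sided gap in $\phi$ and $m_L\geq q^*+\eps$) are never simultaneously guaranteed, and swapping $\psi\leftrightarrow\phi$ does not repair this since $\dearthsmooth{\eps}{\cdot}{\cdot}$ is not symmetric. A secondary issue points the same way: with atoms, $\|\psi_L\|^2=F_{p_\psi}(a)$ can exceed $q^*$, and in your orientation this erodes the separation $m_L-\|\psi_L\|^2\geq\eps$, whereas in the paper's orientation (cutting the large-coefficient part of $\psi$ against the large-coefficient part of $\phi$) the slack works in the right direction.

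The paper's proof is built precisely to avoid needing a two-sided gap: it uses an asymmetric three-part decomposition (Definition \ref{def:lowerbounddecomp}), in which only the single well-separated pairing $\langle\phi_{\geq x+d}|U_{\mathcal P}|\psi_{\leq x}\rangle$ is bounded through \lemref{cominnerprod}, while the components that would otherwise produce an uncontrollable cross term ($\ket{\psi^2}$ and $\ket{\phi^2}$) are defined by projecting off $\ket{\phi^3}$ and $\ket{\psi^1}$ respectively, so that every remaining overlap is either exactly zero, reducible to the one controlled inner product, or absorbed into the norm optimization $x_1y_1+x_2y_2+x_3y_3\leq\cos(\alpha-\beta)$. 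To rescue your argument you would need to replace the symmetric two-piece cut by something of this projector-based form (or otherwise handle the unseparated cross term jointly with the aligned terms), at which point you have essentially reconstructed the paper's proof.
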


Intuitively, Theorem \ref{thm::lowerbound} states that two bipartite shared states which are far apart in the $\epsilon$-Smoothed $\ell_{\infty}$-Earth Mover's Distance, cannot be made equal via a quantum communication protocol unless it uses at least $c \cdot \dearthsmooth{\eps}{\ket \psi}{\ket \phi}$ qubits of communication (for a particular constant $c$ which can be computed from the statement of Theorem \ref{thm::lowerbound}).

\begin{proof}
	Suppose that two bipartite shared states $\ket \psi$ and $\ket \phi$ have $\dearthsmooth{\eps}{\ket \psi}{\ket \phi} = d$.  By definition $\exists p \in [0,1]$ such that 
	
	\begin{align}
		\min_{r \in [p-\eps, p+\eps]} |F_{p_{\psi}}^{-1}(p)- F_{p_{\phi}}^{-1}(r)| = d \label{eq:cumulfunc}
	\end{align}

	Suppose that $F_{p_{\psi}}^{-1}(p)< F_{p_{\phi}}^{-1}(r)$ (if the opposite is true then we simply switch the roles of $\psi$ and $\phi$ and continue with the same proof).  Define $x \equiv F_{p_{\psi}}^{-1}(p)$.  Further define $\ket \psi_{\leq x} \equiv \sum_{\{i : |\log{1/\psi_i}| \leq x \} } \sqrt{\psi_i} \ket i \ot \ket i$, and $\ket \psi_{> x} \equiv \ket \psi - \ket \psi_{\leq x}  $.  Similarly define   $\ket \phi_{\geq x+d} \equiv \sum_{\{i :  |\log{1/\phi_i}| \geq x+d \} } \sqrt{\phi_i} \ket i \ot \ket i$, and $\ket \phi_{< x+d} \equiv \ket \phi -\ket \phi_{\geq x+d}$.  Note that $\ket \psi_{\leq x}$, and $\ket \psi_{> x}$ are orthogonal, as are  $\ket \phi_{< x+d}$ and $\ket \phi_{\geq x+d}$.

    Since we have $x \equiv F_{p_{\psi}}^{-1}(p)$ it follows from the definitions that $|| \ket \psi_{\leq x} ||^2 = p$.  Since $F_{p_{\psi}}(x) = p$, and $F_{p_{\psi}}^{-1}(p)< F_{p_{\phi}}^{-1}(r)$, it follows from Equation \ref{eq:cumulfunc} that $F_{p_{\phi}}(x+d) \leq p-\epsilon$.  Therefore, $|| \ket \phi_{< x+d} ||^2  \leq p-\epsilon$ and thus $|| \ket \phi_{\geq x+d}||^2 = 1-  || \ket \phi_{< x+d} ||^2 \geq 1 - p + \epsilon$.
    
    The main idea in the proof of this theorem is that we can now partition   $\ket \psi, \ket \phi$ into three nearly orthogonal parts, depending on $U_\mathcal{P}$, as follows:

    \begin{definition}\label{def:lowerbounddecomp}
   \begin{equation*}
   \begin{aligned}[c]
   &\ket{\psi^1} \equiv U_\mathcal{P} \ket \psi_{\leq x},\\
  &\ket{\psi^3} \equiv   \ket{\phi^3} \bra{ \phi^3}  U_\mathcal{P} \ket \psi_{> x},
   \end{aligned}
   \qquad
   \begin{aligned}[c]
   &\ket{\phi^3} \equiv \ket \phi_{\geq x+d},\\
   &\ket{\phi^1} \equiv   \ket{\psi^1} \bra{ \psi^1} \ket \phi_{< x+d},
   \end{aligned}
   \qquad
   \begin{aligned}[c]
   &\ket{\psi^2} \equiv \left (I - \ket{\phi^3} \bra{ \phi^3} \right )U_\mathcal{P} \ket \psi_{> x}\\
   &\ket{\phi^2} \equiv  \left (I - \ket{\psi^1} \bra{ \psi^1} \right ) \ket \phi_{< x+d} 
   \end{aligned}
   \end{equation*}
\end{definition}

   \begin{lemma} \label{lem:approxorthog}
   	For $i, j \in \{1,2,3\}$ with $i \neq j$, we have that $|\langle \phi^i|\psi^j \rangle |\leq h(Q,d)$, $|\langle \psi^i|\psi^j \rangle |\leq h(Q,d)$, and $|\langle \phi^i|\phi^j \rangle| \leq h(Q,d)$, where $h(Q,d) \equiv  4 \cdot 2^{\frac{3Q -d}{2}}$.
   \end{lemma}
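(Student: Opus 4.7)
The plan is to reduce all the cross-inner-products claimed in the statement to multiples of the single scalar $a := \langle \phi_{\geq x+d}\mid U_\mathcal{P}\mid \psi_{\leq x}\rangle$, and then to bound $|a|$ via a dyadic-layer decomposition combined with Lemma~\ref{lem:cominnerprod}. Interpreting the operators $\ket{\phi^3}\bra{\phi^3}$ and $\ket{\psi^1}\bra{\psi^1}$ in Definition~\ref{def:lowerbounddecomp} as the normalized one-dimensional projections onto $\mathrm{span}(\ket{\phi^3})$ and $\mathrm{span}(\ket{\psi^1})$, four of the cross-inner-products vanish identically by construction: $\langle\psi^2|\psi^3\rangle = \langle\phi^1|\phi^2\rangle = \langle\phi^3|\psi^2\rangle = \langle\phi^2|\psi^1\rangle = 0$. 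For the remaining ones I would introduce the scalars $b := \langle\psi^1|\phi_{<x+d}\rangle$, $c := \langle\phi^3|U_\mathcal{P}|\psi_{>x}\rangle$, $p := \|\psi^1\|^2 = \|\psi_{\leq x}\|^2$, and $\alpha := \|\phi^3\|^2$, and expand from the definitions; for instance $\langle \psi^1|\psi^3\rangle = c\bar a/\alpha$, $\langle\phi^1|\phi^3\rangle = \bar a \bar b/p$, and $\langle \phi^1|\psi^2\rangle = -\bar a\bar b c/(p\alpha)$. The setup preceding Definition~\ref{def:lowerbounddecomp} supplies $\alpha \geq 1-p+\epsilon$ and $\|\phi_{<x+d}\|^2 \leq p-\epsilon$, so Cauchy-Schwarz yields $|b|\leq p$ and $|c|\leq \sqrt{\alpha(1-p)}$, making the ``correction factors'' $|b|/p$ and $|c|/\alpha$ both at most~$1$. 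Consequently each of the cross-inner-products in the statement is bounded by $|a|$ itself.

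It remains to show $|a| \leq 4\cdot 2^{(3Q-d)/2}$. A single application of Lemma~\ref{lem:cominnerprod} to $\ket{\psi_{\leq x}}$ and $\ket{\phi_{\geq x+d}}$ gives only $|a|\leq 2^{(3Q+x-d)/2}$, which still contains the unwanted absolute scale $x$ coming from $\SR(\psi_{\leq x})\leq 2^x$. To remove it I would split each state into dyadic Schmidt-layers, setting $\ket{\psi^{(k)}} := \sum_{i\,:\, 2^{-k-1}\leq \psi_i < 2^{-k}} \sqrt{\psi_i}\,\ket i\ket i$ and analogously $\ket{\phi^{(k')}}$, so that $\ket{\psi_{\leq x}} = \sum_{k=0}^{\lfloor x\rfloor}\ket{\psi^{(k)}}$, $\ket{\phi_{\geq x+d}} = \sum_{k'\geq\lceil x+d\rceil}\ket{\phi^{(k')}}$, and each layer satisfies $\lambda_{\max}(\psi^{(k)})<2^{-k}$, $\nu_{\max}(\phi^{(k')})<2^{-k'}$, and $\SR(\psi^{(k)})\leq 2^{k+1}$. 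Applying Lemma~\ref{lem:cominnerprod} block-by-block yields $|\langle\phi^{(k')}|U_\mathcal{P}|\psi^{(k)}\rangle|\leq 2^{3Q/2+1}\cdot 2^{(k-k')/2}$, and since $k-k' \leq \lfloor x\rfloor - \lceil x+d\rceil \leq -d$ throughout the summation range, the double sum $\sum_{k,k'}2^{(k-k')/2}$ factors as a product of two geometric series of ratio $2^{-1/2}$ and collapses to a constant multiple of $2^{-d/2}$. This produces $|a|\leq O(1)\cdot 2^{(3Q-d)/2}$; using the sharper layer bound $\SR(\psi^{(k)})\leq 2^{k+1}\|\psi^{(k)}\|^2$ together with Cauchy-Schwarz across layers, the constant can be pushed under the claimed~$4$.

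The hard part will be the layering estimate: the direct application of Lemma~\ref{lem:cominnerprod} wastes a multiplicative factor of $2^{x/2}$, and reclaiming it requires binding $\SR(\psi^{(k)})\sqrt{\lambda_{\max}(\psi^{(k)})}$ to the layer scale $2^{-k/2}$ so that the $k$-sum is geometric in $2^{k/2}$, the $k'$-sum is geometric in $2^{-k'/2}$, and only the gap $k'-k\geq d$ survives in the final estimate. The algebraic reduction in the first paragraph, while slightly tedious, is entirely routine once the projectors in Definition~\ref{def:lowerbounddecomp} are read as true (normalized) rank-one projections and the Cauchy-Schwarz estimates $|b|\leq p$ and $|c|\leq\sqrt{\alpha(1-p)}$ are threaded through the norm inequalities already established in the preamble.
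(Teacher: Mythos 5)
Your proposal follows essentially the same route as the paper's proof: every nonzero cross term is reduced to the single overlap $a=\bra{\phi_{\geq x+d}}U_\mathcal{P}\ket{\psi_{\leq x}}$ with correction factors of modulus at most $1$, and then $|a|$ is bounded by dyadic layering together with Lemma~\ref{lem:cominnerprod}. Your explicit scalar bookkeeping ($c\bar a/\alpha$, $\bar a\bar b/p$, etc., with $|b|\le p$, $|c|\le\sqrt{\alpha(1-p)}$, $\alpha\ge 1-p+\eps$) is an equivalent, if anything slightly cleaner, version of the paper's chained Cauchy--Schwarz/triangle-inequality estimates, and it correctly treats the rank-one operators in Definition~\ref{def:lowerbounddecomp} as normalized projections, which is what the claimed exact vanishings require. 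The one substantive deviation is the layering of \emph{both} states. The paper layers only $\ket{\psi_{\leq x}}$ into pieces $\ket{\rho_i}$, keeps $\ket{\phi_{\geq x+d}}$ whole with $\nu_{\max}\le 2^{-(x+d)}$, and uses the weighted rank bound $\SR(\ket\rho_i)\le 2^{i+1}\|\ket\rho_i\|^2$ together with $\sum_i\|\ket\rho_i\|^2\le 1$, which yields exactly $4\cdot 2^{(3Q-d)/2}$. Your double layering forces an unweighted geometric sum on the $\phi$-side (there are no $\|\phi^{(k')}\|^2$ factors available to absorb it), costing an extra factor of about $(1-2^{-1/2})^{-1}\approx 3.4$, so the argument as written gives roughly $7\cdot 2^{(3Q-d)/2}$ rather than $4\cdot 2^{(3Q-d)/2}$; the claim that ``Cauchy--Schwarz across layers'' pushes the constant back under $4$ is asserted without a mechanism, and I do not see one that does not amount to undoing the $\phi$-layering. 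This only affects the absolute constant (which does propagate into the $24$ of Theorem~\ref{thm::lowerbound}), and it is repaired immediately by reverting to the one-sided decomposition.
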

   
   The proof of Lemma \ref{lem:approxorthog} is given separately in the appendix.  Within that proof is the key use of Lemma \ref{lem:cominnerprod} which is the primary conceptual step in proving Theorem \ref{thm::lowerbound}.  Understanding the proof of Lemma \ref{lem:approxorthog} is also the best way of understanding the motivation behind Definition \ref{def:lowerbounddecomp} above.

It follows from the definitions that:

\begin{align}
& U_\mathcal{P} \ket{\psi} =  \ket{\psi^1} +\ket{\psi^2} + \ket{\psi^3}  \label{eq:normboundpsi} \\ 
& \ket \phi =\ket{\phi^1} + \ket{\phi^2} + \ket{\phi^3} \label{eq:normboundphi}
\end{align} 

While the individual $\ket{\psi^i}$ and $\ket{\phi^i}$ are not necessarily all orthogonal we do have
$\ket{\psi^2}\perp\ket{\psi^3}$ and $\ket{\psi^1}\perp \ket{\psi^2} + \ket{\psi^3}$. 
Likewise
$\ket{\phi^1}\perp\ket{\phi^2}$ and $\ket{\phi^3}\perp \ket{\psi^1} + \ket{\psi^2}$.
Together these imply
\begsub{norm-sum}
1 & =  \left\| \ket{\psi^1} \right\|^2 + \left\| \ket{\psi^2} \right\|^2 + \left\| \ket{\psi^3} \right\|^2\\
1 & = \left\| \ket{\phi^1} \right\|^2 + \left\| \ket{\phi^2} \right\|^2 + \left\| \ket{\phi^3} \right\|^2 
\endsub

  From Lemma \ref{lem:approxorthog} it follows that:

\begin{align}
|\braket{ \phi}{\mathcal{P}(\psi)}| &\equiv |\bra{ \phi}U_\mathcal{P}\ket{\psi}| = \left |\left (\bra{\phi^1} + \bra{\phi^2} + \bra{\phi^3} \right ) \left (\ket{\psi^1} + \ket{\psi^2} + \ket{\psi^3} \right ) \right   |  \nonumber   \\
&\leq \left |\braket{\phi^1}{\psi^1} \right| + \left |\braket{\phi^2}{\psi^2} \right| + \left |\braket{\phi^3}{\psi^3} \right|  + 6 \cdot h(Q,d) \nonumber \\
& \leq \left  \| \ket{\phi^1} \right \| \left \| \ket{\psi^1} \right \| +  \left \| \ket{\phi^2} \right \| \left \| \ket{\psi^2} \right \| +  \left \| \ket{\phi^3} \right \| \left \| \ket{\psi^3} \right \|+ 6 \cdot h(Q,d) \label{eq:threepartnorm}
\end{align}

  Now recall that 

\begin{align}
& \left  \| \ket{\psi^1} \right \|  = \left  \| U_{\mathcal{P}} \ket{\psi}_{\leq x} \right \|  = \left  \| \ket{\psi}_{\leq x} \right \| = \sqrt{p} \nonumber \\
& \left  \| \ket{\phi^3} \right \|  = \left  \|  \ket{\phi}_{\geq x + d} \right \|  \geq \sqrt{1-p+\epsilon} \nonumber
\end{align}

We now return to Equation \ref{eq:threepartnorm}.  Setting $x_i = \|\,|\psi^i\rangle\|$ and
$y_i = \|\,|\phi^i\rangle\|$ for $i=1,2,3$ we have

\begin{align} \label{eq:orthogbound}
|\braket{ \phi}{\mathcal{P}(\psi)}| & \leq
x_1y_1 + x_2y_2 + x_3y_3 + 6 \cdot h(Q,d)
\end{align}
where $x_1=\sqrt p$, $y_3 \geq \sqrt{1-p+\eps}$ and $(x_1,x_2,x_3), (y_1,y_2,y_3)$ are unit vectors. We
claim that this quantity is maximized by setting $x_2=y_2=0$ and $y_3=\sqrt{1-p+\eps}$.  Indeed we can upper bound
$\sqrt{p}y_1 + x_2y_2 \leq x_{12}y_{12}$ where $x_{12}\equiv \sqrt{x_1^2 + x_2^2}$ and
$y_{12}\equiv \sqrt{y_1^2 + y_2^2}$.  Now define $x_{12} = \cos(\alpha),
x_{3}=\sin(\alpha),y_{12}=\cos(\beta),y_3 = \sin(\beta)$ and we have
\be 
x_1y_1 + x_2y_2 + x_3y_3 \leq \cos(\alpha-\beta).\ee
This is maximized by taking $(x_1,x_2,x_3) = (\sqrt p, 0, \sqrt{1-p})$ and
$(y_1,y_2,y_3) = (\sqrt{p-\eps}, 0, \sqrt{1-p+\eps})$.
Thus 
\be |\braket{ \phi}{\mathcal{P}(\psi)}|  \leq
 \sqrt{p- \epsilon} \sqrt{p} + \sqrt{1-p}\sqrt{1-p+\epsilon}  + 6 \cdot h(Q,d).\ee

Finally we would like an upper bound independent of $p$.  This maximization is performed
in the proof of Fact \ref{fct:calculus} from Section \ref{sec:calculus} of the Appendix
and yields the following.

\begin{align} 
|\braket{ \phi}{\mathcal{P}(\psi)}| \leq 1 - \frac{1}{4} \epsilon^2  + 6 \cdot h(Q,d). \nonumber
\end{align}

\end{proof}

\section{The Cost of State Transformation:  An Upper Bound} \label{sec::Earthmoverstatetransformation}

In this section we will give a proof of Theorem \ref{thm::earthmover}, which states that the quantum communication cost of converting between two bipartite entangled states is upper bounded by the $\ell_{\infty}$ Earth Mover's Distance between those states.  This upper bound represents the second half of our two sided argument (employing both Theorem \ref{thm::earthmover} and Theorem \ref{thm::lowerbound}) that the $\ell_{\infty}$ Earth Mover's Distance is a simple and efficiently computable proxy for the cost of state conversion.   The proof is divided into two parts which are proved separately in Lemma \ref{lem::quadpartite}, and  Lemma \ref{lem::rightflowprot} together with Corollary \ref{cor::leftflowprot}.  At a high level Lemma \ref{lem::quadpartite} tells us that, given bipartite states $\ket\chi $ and $\ket \upsilon$, one can map the Schmidt coefficients of  $\ket\chi $ directly onto the Schmidt coefficients of $\ket \upsilon$ using a series of bipartite ``flows" that have small degree  (where degree is a quantity defined below).  Lemma \ref{lem::rightflowprot} and Corollary \ref{cor::leftflowprot} then tell us that any such ``flow" which has small degree, can be implemented as an actual bipartite state transformation, with correspondingly small communication required.

Here we establish Lemmas \ref{lem::quadpartite} and \ref{lem::rightflowprot} which, together, prove the desired theorem.  We begin with a couple definitions establishing the concept of flows, as we use it here.

\begin{definition}[Right (Left) Index-1 Flow ]
	Fix two states $\ket\chi = \sum_{i\in X} \sqrt{\chi_i} \ket i \ot \ket i$ and
        $\ket\upsilon = \sum_{j\in Y} \sqrt{\upsilon_j} \ket j \ot \ket j$. 
A Right Index-1 Flow from $\ket\chi$ to $\ket \ups$ is a bipartite graph $G_{X,Y}$ with vertices given by $X \cup Y$, and edge set $E_{X,Y}$, such that:
	\begin{itemize}
		\item Each vertex in $j \in Y$ has index 1 in $G_{X,Y}$.  
		\item For all $i \in X$, $\chi_i = \sum_{j \in Y: (i,j) \in E_{X,Y}} \ups_j$
		
		If the roles of $\ket \chi$ and $\ket \ups$ are reversed in the above,
                then we say that there is a Left Index-1 Flow from $\ket \ups$ to $\ket
                \chi$. Equivalently, there is a Left Index-1 Flow from $\ket \ups$ to
                $\ket \chi$  exactly when there is a a Right Index-1 Flow from $\ket\chi$
                to $\ket \ups$. 
	\end{itemize}

\end{definition}

\begin{definition}[Degree of a Right (Left) Index-1 Flow ]
	We define the degree of a Right (Left) Index-1 Flow from $\ket\chi = \sum_{i\in X} \sqrt{\chi_i} \ket i \ot \ket i$ to $\ket\upsilon = \sum_{j\in Y} \sqrt{\upsilon_j} \ket j \ot \ket j$ to be the maximum index of any vertex in the bipartite graph $G_{X,Y}$. 
\end{definition}

The following lemma, which a key step in proving Theorem \ref{thm::earthmover},
establishes that bipartite states which are close to each other in the $\ell_{\infty}$
Earth Mover's Distance of Definition \ref{def::earthmoverdistance}, can be mapped to each
other through a series of flows of bounded degree.  This series of flows intuitively
establishes a map for converting one bipartite state to the other using bounded quantum
communication, in a manner that will be made rigorous in Lemma \ref{lem::rightflowprot}.
The main step in the proof of Lemma \ref{lem::quadpartite} involves constructing a flow
through a type of greedy algorithm whose analysis has a number of subtle cases.  In order
to concretely exhibit these cases the entire greedy algorithm, including every case, is
written out in pseudocode in Algorithm 1.

\begin{lemma} \label{lem::quadpartite}
	Given two states $\ket\chi$ and $\ket\upsilon$, there exist two ``intermediate" states $\ket \gamma$ and $\ket \rho$, such that there is a Right Index-1 Flow from $\ket \chi$ to $\ket \gamma$ of degree at most $2^{2\lceil\dearth{\ket \chi}{\ket \upsilon} \rceil  +4}$, a Left Index-1 Flow from $\ket \gamma$ to $\ket \rho$ of degree at most $2^{\lceil\dearth{\ket \chi}{\ket \upsilon} \rceil  +2}$, and a Left Index-1 Flow from $\ket \rho$ to $\ket \ups$ of degree at most $2^{\lceil\dearth{\ket \chi}{\ket \upsilon} \rceil  +2}$.
\end{lemma}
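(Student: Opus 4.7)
My plan is to pick a monotone coupling realizing the $\ell_\infty$ Earth Mover's distance, and to let both intermediate states be the common refinement of $\ket\chi$ and $\ket\upsilon$ induced by that coupling. Write $D := \lceil\dearth{\ket\chi}{\ket\upsilon}\rceil$. By Definition~\ref{def::earthmoverdistance} there exists a coupling $\omega(i,j)$ of the marginals $\chi$ and $\upsilon$ whose support lies inside $\{(i,j) : |\log\chi_i - \log\upsilon_j| \leq D\}$, equivalently $\chi_i/\upsilon_j \in [2^{-D}, 2^D]$ on the support. Sort $X$ and $Y$ by their log-Schmidt coefficients and invoke the standard one-dimensional swap argument: if the support ever contains an inverted pair $(i_1,j_1),(i_2,j_2)$ with $\log\chi_{i_1} < \log\chi_{i_2}$ but $\log\upsilon_{j_1} > \log\upsilon_{j_2}$, redistributing mass onto $(i_1,j_2),(i_2,j_1)$ never increases the maximum log-distance realized by the support. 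Iterating yields a monotone coupling $\omega$ that still realizes the $\ell_\infty$ EMD, which is the coupling I will use.

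Now set $\ket\gamma = \ket\rho$ to be the bipartite state whose Schmidt coefficients are the nonzero entries $\omega(i,j)$, indexed by the pairs $(i,j)$ themselves. The three flows then fall out immediately: the Right Index-1 Flow $\chi \to \gamma$ sends each pair-piece $(i,j)$ to its unique parent $\chi_i$, with the marginal identity $\sum_{j} \omega(i,j) = \chi_i$ supplying the flow equation; the Left Index-1 Flow $\gamma \to \rho$ is the identity on pair-indices, with degree exactly $1$; and the Left Index-1 Flow $\rho \to \upsilon$ sends each pair-piece $(i,j)$ to its unique parent $\upsilon_j$, with marginal identity $\sum_i \omega(i,j) = \upsilon_j$.

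Degree bounds would be verified directly from the structure of monotone transport. For each $j$, the contributing $\chi_i$'s form a consecutive block in the sort order; all but at most two of these are fully consumed by $\upsilon_j$, meaning $\omega(i,j) = \chi_i \leq \upsilon_j$. Combined with the EMD lower bound $\chi_i \geq 2^{-D}\upsilon_j$, the number of fully consumed contributors to $\upsilon_j$ is at most $\upsilon_j / (2^{-D}\upsilon_j) = 2^{D}$, so the total contributor count is $\leq 2^D + 2 \leq 2^{D+2}$. A symmetric argument bounds the number of $\upsilon_j$'s contributing to any $\chi_i$ by $2^{D+2}$. Hence the $\chi \to \gamma$ degree is $\leq 2^{D+2} \leq 2^{2D+4}$, the $\rho \to \upsilon$ degree is $\leq 2^{D+2}$, and the $\gamma \to \rho$ degree is $1$, all within the lemma's tolerances.

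The main obstacle is justifying that the monotone coupling achieves the $\ell_\infty$ EMD in particular---rather than just the more familiar $\ell_1$ version---which requires the swap argument together with a short case analysis on the interleavings of $\log\chi_{i_1}, \log\chi_{i_2}, \log\upsilon_{j_1}, \log\upsilon_{j_2}$ to verify that no swap ever increases the maximum transport distance. A more constructive alternative, which is presumably what Algorithm~1 of the paper implements explicitly, is to bypass the abstract appeal to a monotone coupling and instead build the support graph by a greedy sweep through the sorted indices, accounting for partial contributions at block boundaries as explicit cases. This constructive presentation makes the ``every case'' bookkeeping (full consumption, boundary splits between adjacent $\upsilon_j$'s, exhaustion of $\chi_i$'s mid-block) visible, and produces the same degree bounds directly from the invariants maintained by the sweep.
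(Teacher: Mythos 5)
Your proposal is correct, but it takes a genuinely different route from the paper. The paper fixes $\ket\rho$ to be the state obtained by splitting each $\upsilon_j$ into $2^{\lceil D\rceil+2}$ \emph{equal} pieces $\rho_{j,k}=\upsilon_j/2^{\lceil D\rceil+2}$, observes that the EMD constraint forces $\chi_i\geq\rho_{j,k}$ whenever $\omega(i,j)>0$, and then builds $\ket\gamma$ (a further refinement indexed by $(j,k,r)$) via the explicit greedy sweep of Algorithm~1, whose case analysis supplies the degree bounds $2^{2\lceil D\rceil+4}$ and $2^{\lceil D\rceil+2}$; notably it works with an \emph{arbitrary} optimal coupling and never needs monotonicity. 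You instead take $\ket\gamma=\ket\rho$ to be the common refinement given by the coupling itself, which makes the middle flow the identity and reduces everything to bounding the row and column support sizes of $\omega$; for that you do need the additional input that a monotone (quantile) coupling realizes the $\ell_\infty$ EMD. You correctly flag this as the one nontrivial point, and your swap argument goes through: if $x_1<x_2$, $y_1>y_2$ and $|x_1-y_1|,|x_2-y_2|\leq\mu$, then $x_1-y_2<x_2-y_2\leq\mu$ and $x_1-y_2>x_1-y_1\geq-\mu$ (and symmetrically for $x_2-y_1$), so no swap increases the maximum transport distance; alternatively the quantile-coupling formula the paper already invokes in its Section on the lower bound (the Monge--Kantorovich fact together with Definition~\ref{def:equivEarthMoverDistance}) gives this directly. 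Your accounting ($\leq 2^D$ fully consumed contributors each of mass $\geq 2^{-D}\upsilon_j$, plus two boundary contributors, hence $\leq 2^D+2\leq 2^{D+2}$) is sound and in fact yields a slightly better bound than the paper for the first flow ($2^{D+2}$ versus $2^{2D+4}$). The trade-off is clear: your argument is shorter and avoids the multi-page case bookkeeping of Algorithm~1, at the price of proving optimality of the monotone coupling for the $\ell_\infty$ cost; the paper's greedy construction is longer but entirely self-contained and indifferent to which optimal coupling is supplied.
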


The Proof of Lemma \ref{lem::quadpartite} is included in the Appendix, section \ref{pf::lem::quadpartite}.

Lemma \ref{lem::quadpartite}, above, shows that two bipartite entangled states can be connected to each other by a series of flows which have a degree which is bounded in terms of the $\ell_{\infty}$ Earth Mover's Distance between them.  The next step is to establish that every flow can be implemented via a quantum communication protocol.   Lemma \ref{lem::rightflowprot} and Corollary \ref{cor::leftflowprot}, below, accomplish this by showing that, if two bipartite states can be connected by flows of small degree, then one state can be converted to the other (and vice versa) using a quantum communication protocol which only requires small amounts of communication.

\begin{lemma}  \label{lem::rightflowprot}
	Given two states $\ket\tau$ and $\ket\kappa$ such that there is a Right Index-1 Flow from $\ket \tau$ to $\ket \kappa$ with degree at most $2^Q$, there exists a quantum communication protocol $\mathcal{P}$, which uses $Q$ qubits of communication, and converts the shared state $\ket \tau$ to the shared state $\ket \kappa$.
\end{lemma}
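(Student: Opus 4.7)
The plan is to exhibit an explicit one-way protocol from Alice to Bob using a $Q$-qubit ancilla. For each $i \in X$, let $S_i = \{j \in Y : (i,j) \in E_{X,Y}\}$; by the degree hypothesis $|S_i| \leq 2^Q$, and by the Right Index-1 condition the sets $\{S_i\}_{i \in X}$ are pairwise disjoint (each $j$ with $\kappa_j > 0$ has a unique neighbor $i(j) \in X$) and $\sum_{j \in S_i}\kappa_j = \tau_i$ for every $i$. Fix for each $i$ an injection $g_i : S_i \hookrightarrow \{0,1\}^Q$. Alice's local operation will be defined by the linear map $V : \ket i^A \ket 0^{A'} \mapsto \sum_{j \in S_i} \sqrt{\kappa_j/\tau_i}\,\ket j^A \ket{g_i(j)}^{A'}$, where $A'$ is a fresh $Q$-qubit ancilla.

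First I would check that $V$ is an isometry: each image has squared norm $\sum_{j \in S_i}\kappa_j/\tau_i = 1$, and for $i \neq i'$ the images have orthogonal $A$-supports since $S_i \cap S_{i'} = \emptyset$. Extending $V$ arbitrarily to a unitary $U_A$ on $\mathcal{H}_A \otimes \mathcal{H}_{A'}$, the protocol $\mathcal{P}$ is: Alice appends $\ket 0^{A'}$, applies $U_A$ locally, sends $A'$ to Bob (costing $Q$ qubits), and both parties apply local unitaries to reach the canonical form of $\ket\kappa$. After Alice's step the joint state is
\[
\sum_{j} \sqrt{\kappa_j}\,\ket j^A \ket{g_{i(j)}(j)}^{A'} \ket{i(j)}^B.
\]
Once $A'$ crosses the cut, reading this across $A \,|\, A'B$ yields a Schmidt decomposition with coefficients $\{\sqrt{\kappa_j}\}$: the $\{\ket j^A\}$ are orthonormal, and the Bob-side vectors $\{\ket{g_{i(j)}(j)}^{A'}\ket{i(j)}^B\}$ are pairwise orthonormal because for $j \neq j'$ either $i(j) \neq i(j')$ (so the $B$ registers differ) or $i(j) = i(j')$ and then injectivity of $g_{i(j)}$ on $S_{i(j)}$ forces the $A'$ registers to differ.

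Since the resulting state has exactly the Schmidt coefficients of $\ket\kappa$, Remark \ref{rem:decomp} guarantees that local unitaries (with no additional communication) bring it to the canonical form $\sum_j \sqrt{\kappa_j}\ket j \ket j$, completing the conversion at total cost $Q$. The only step requiring real care is the orthogonality of the Bob-side vectors, and it is exactly here that the two defining properties of a Right Index-1 Flow are used; everything else (unitarity of $V$, communication accounting, and the appeal to Remark \ref{rem:decomp}) is bookkeeping. I do not anticipate any subtle obstacle: the degree bound $2^Q$ dictates precisely the $Q$ qubits of ancilla needed to label the children of each $i$, and transmitting that ancilla is what unlocks the finer Schmidt decomposition on Bob's side.
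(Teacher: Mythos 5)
Your proof is correct and follows essentially the same route as the paper's: Alice coherently splits each $\sqrt{\tau_i}\,\ket i\ot\ket i$ term into its children via a $Q$-qubit label register (possible since the degree bound gives $Q$-bit labels, the flow condition gives normalization, and the index-1 condition gives orthogonality), sends that register to Bob, and the state is then brought to canonical form by local relabeling. The only difference is cosmetic—the paper uses two label registers $C_1,C_2$ and relabels on both sides at the end, whereas you fold Alice's relabeling into her isometry and use a single ancilla.
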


The idea of the proof is that if $\ket\tau=\sum_i \sqrt{\tau_i}\ket{i}\ot \ket i$ then it
suffices to define separately protocols for each $\ket{i}\ot \ket{i}$ term.  These
protocols simply use quantum communication to create a shared entangled state, resulting
in the state $\sum_i \tau_i \ket{i}_A\ot \ket{i}_B \ot \ket{\psi_i}_{A'B'}$.  Choosing the
Schmidt coefficients according to the given Right Index-1 Flow yields the result.  The
details of this argument are in the Appendix~x\ref{pf::lem::rightflowprot}.

Corollary \ref{cor::leftflowprot} establishes the same result as Lemma \ref{lem::rightflowprot}, but in the reverse direction.

\begin{corollary}
	\label{cor::leftflowprot}
	Given two states $\ket\tau$ and $\ket\kappa$ such that there is a Left Index-1 Flow from $\ket \kappa$ to $\ket \tau$ with degree at most $2^Q$, then, for two parties sharing entangled state $\ket \kappa$, there exists a quantum communication protocol $\mathcal{P}$, which uses $Q$ qubits of communication, and converts the shared state $\ket \kappa$ to the shared state $\ket \tau$.
\end{corollary}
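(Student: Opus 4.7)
The plan is to deduce the corollary directly from Lemma \ref{lem::rightflowprot} by time-reversing the protocol produced there. The first observation is definitional: by the clause in the definition of Left (Right) Index-1 Flow that states ``there is a Left Index-1 Flow from $\ket{\upsilon}$ to $\ket{\chi}$ exactly when there is a Right Index-1 Flow from $\ket{\chi}$ to $\ket{\upsilon}$'', the hypothesis that there is a Left Index-1 Flow from $\ket{\kappa}$ to $\ket{\tau}$ of degree at most $2^Q$ is the same as the hypothesis that there is a Right Index-1 Flow from $\ket{\tau}$ to $\ket{\kappa}$ of degree at most $2^Q$. Hence Lemma \ref{lem::rightflowprot} immediately supplies a quantum communication protocol $\mathcal{P}'$ that uses $Q$ qubits of communication and maps $\ket{\tau}$ (together with some standard ancilla register initialized to $\ket{0}$) to $\ket{\kappa}$.

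Second, I would invoke the reversibility of any quantum communication protocol. A communication protocol is an alternating sequence of local unitaries on Alice's or Bob's registers interleaved with transmissions of qubits in either direction, all acting on an initially fixed ancillary workspace. Reversing the sequence of steps, replacing every local unitary by its adjoint, and swapping the direction of every qubit transmission yields a new protocol $\mathcal{P} := (\mathcal{P}')^{\dagger}$ that uses exactly the same number $Q$ of qubits of communication as $\mathcal{P}'$ and implements the inverse isometry. Applied to $\ket{\kappa}$ together with the appropriate ``terminal'' ancilla state of $\mathcal{P}'$, this $\mathcal{P}$ produces $\ket{\tau}$ together with the ancilla in the state $\ket{0}$; equivalently, reading the action on the shared Schmidt registers, $\mathcal{P}$ converts $\ket{\kappa}$ into $\ket{\tau}$.

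The one minor technical point to be careful about is that the ancilla in which $\mathcal{P}'$ ``deposits'' the extra entangled register $A'B'$ appearing in the sketch of Lemma \ref{lem::rightflowprot} must be brought into the initial state of the run of $(\mathcal{P}')^{\dagger}$. Because that register's terminal state is fully determined by the construction in Lemma \ref{lem::rightflowprot} (it is a specific product of locally-preparable and shared-via-communication subregisters whose Schmidt structure is dictated by the given flow graph), Alice and Bob can prepare it using only local operations plus the very transmissions that the reversed protocol already accounts for; indeed, the symmetric formulation simply reads ``run $\mathcal{P}'$ backwards.'' Thus no additional communication beyond $Q$ qubits is incurred.

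If anything were to go wrong it would be at exactly that ancilla-bookkeeping step, so I view that as the only obstacle worth pausing on; once one confirms that the protocol of Lemma \ref{lem::rightflowprot} is written as a unitary on a fixed Hilbert space with a canonical input/output ancilla convention, the corollary is immediate by adjointing.
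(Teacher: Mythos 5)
Your proposal is correct and matches the paper's own proof: both use the definitional equivalence between Left and Right Index-1 Flows and then reverse the explicit protocol of Lemma \ref{lem::rightflowprot} step by step, with the single communication step now going from Bob to Alice. Your ancilla concern resolves exactly as you anticipate, since the protocol of Lemma \ref{lem::rightflowprot} ends with the registers $C_1,C_2$ absorbed into the canonical form of $\ket\kappa$ by a local relabeling, so the reversed run starts from $\ket\kappa$ alone and returns the ancillas to $\ket{0}$.
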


The proof of Corollary \ref{cor::leftflowprot} is straightforward and appears in Appendix~\ref{pf::cor::leftflowprot}.

\begin{theorem*} [Restatement of Theorem \ref{thm::earthmover}]   Let $\ket\chi^{AB}$ and $\ket\upsilon^{AB}$ be two bipartite shared states.  There is a protocol $\mathcal{M}_{\chi\rightarrow\upsilon}$ which can prepare $\ket\upsilon$ from $\ket\chi$, using only $4 \lceil \dearth{\ket \chi}{\ket \upsilon} \rceil + 8$ qubits of communication.
\end{theorem*}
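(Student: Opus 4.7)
The plan is to simply concatenate the three flow-implementation protocols supplied by Lemma~\ref{lem::quadpartite}, Lemma~\ref{lem::rightflowprot}, and Corollary~\ref{cor::leftflowprot}. Set $d := \lceil \dearth{\ket \chi}{\ket \upsilon} \rceil$ for brevity. By Lemma~\ref{lem::quadpartite} there exist intermediate bipartite states $\ket\gamma$ and $\ket\rho$ such that: (i) there is a Right Index-1 Flow from $\ket\chi$ to $\ket\gamma$ of degree at most $2^{2d+4}$, (ii) there is a Left Index-1 Flow from $\ket\gamma$ to $\ket\rho$ of degree at most $2^{d+2}$, and (iii) there is a Left Index-1 Flow from $\ket\rho$ to $\ket\upsilon$ of degree at most $2^{d+2}$.

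Next I would invoke the flow-to-protocol translations. By Lemma~\ref{lem::rightflowprot}, the first flow can be executed by a quantum communication protocol $\mathcal{P}_1$ using $2d+4$ qubits of communication that deterministically maps the shared state $\ket\chi$ to $\ket\gamma$. By Corollary~\ref{cor::leftflowprot}, the second flow can be executed by a protocol $\mathcal{P}_2$ using $d+2$ qubits of communication that maps $\ket\gamma$ to $\ket\rho$, and similarly the third flow yields a protocol $\mathcal{P}_3$ using $d+2$ qubits of communication mapping $\ket\rho$ to $\ket\upsilon$. Defining $\mathcal{M}_{\chi\rightarrow\upsilon} := \mathcal{P}_3 \circ \mathcal{P}_2 \circ \mathcal{P}_1$ then produces $\ket\upsilon$ from $\ket\chi$, and the total communication cost is bounded by
\[
(2d+4) + (d+2) + (d+2) \;=\; 4d + 8 \;=\; 4\lceil \dearth{\ket \chi}{\ket \upsilon} \rceil + 8,
\]
which matches the bound in the theorem.

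Since all three components are already provided as lemmas earlier in the paper, there is essentially no obstacle here — the only thing to verify is that the composition is well-defined (the output register structure of $\mathcal{P}_i$ is a valid input for $\mathcal{P}_{i+1}$), which follows from Remark~\ref{rem:decomp} since we assume without loss of generality that every intermediate state is expressed in its canonical Schmidt form across the $A{:}B$ cut, and local unitaries needed to bring outputs into canonical form require no communication. Thus the substantive work of the theorem has been discharged by Lemma~\ref{lem::quadpartite} (the combinatorial flow decomposition) and Lemma~\ref{lem::rightflowprot}/Corollary~\ref{cor::leftflowprot} (the communication-efficient implementation of a single flow), and this final step is just arithmetic bookkeeping to add up the costs.
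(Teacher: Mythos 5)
Your proposal is correct and is essentially identical to the paper's own proof, which likewise obtains the result by applying Lemma~\ref{lem::quadpartite} and then implementing the three flows via Lemma~\ref{lem::rightflowprot} and Corollary~\ref{cor::leftflowprot}; your accounting $(2d+4)+(d+2)+(d+2)=4\lceil \dearth{\ket \chi}{\ket \upsilon}\rceil+8$ is exactly the intended bookkeeping. No gaps.
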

\begin{proof}
	The proof follows by applying Lemma \ref{lem::quadpartite}, followed by  Lemma \ref{lem::rightflowprot} and Corollary \ref{cor::leftflowprot}.
\end{proof}

\pagebreak

\appendix

\section{Proof of Lemma \ref{lem:approxorthog}}

\begin{proof}
	First note that it is immediate from the definitions that $\braket{\phi^2}{\psi^1} = \braket{\phi^3}{\psi^2} = 0$, so the conditions of the lemma are automatically satisfied in those cases.
	
	To bound the remaining inner products we will first prove a bound on the inner product $|\braket{\phi^3 }{\psi^1} |$ and note that the remaining inner products are bounded as a consequence of this first bound.  For notational convenience, while establishing the bound on $|\braket{\phi^3 }{\psi^1} |$, we set $\ket \rho \equiv \ket \psi_{\leq x}$, and let $\rho_j$ be the non-zero Schmidt coefficients of $\ket \rho$ (which are just a renamed version of the non-zero Schmidt coefficients of $\ket \psi_{\leq x}$).  Therefore, we know that, for all $j$, $1 \geq \rho_j \geq 2^{-x}$, and $ \ket \psi_{\leq x} = \ket \rho = \sum_j \sqrt{\rho_j} \ket j \ot \ket j$.  The purpose of this renaming convention is that we can now cleanly make the following definition. For integers $i$ define $\ket \rho_i \equiv \sum_{ \{j:  i <  |\log{1/\rho_j}|\leq i+1\}} \sqrt{\rho_j} \ket j \ot \ket j$, so that we have $\ket \psi_{\leq x} = \ket \rho =  \sum_{i = -1}^{\lceil x \rceil} \ket \rho_i$, and $\braket{\rho_k} {\rho_i} = 0$ whenever $k \neq i$. So, 
	\begin{align}
	\sum_{i = -1}^{\lceil x \rceil} \left \| \ket \rho_{i}\right \|^2 = \left \| \ket \rho\right \|^2 \leq 1 \label{eq:sumlessone}
	\end{align}

	By definition, for any $ 1 \leq i \leq \lceil x \rceil$, the Schmidt coefficients of $\ket \rho_{i} $ are upper bounded by $2^{-i}$, and lower bounded by $2^{-(i+1)}$, and from the latter we have $\SR(\ket \rho_{i}) \leq 2^{i+1} \left \|\ket \rho \right \| ^2 $.  Furthermore, the Schmidt coefficients of $\ket{\phi_{\geq x+d}}$ are upper bounded by $2^{-(x+d)}$, and thus, we have by Lemma \ref{lem:cominnerprod}  that:
	
	\begin{align} \label{eq:innerproductbound1}
	&\left|\bra{\phi_{\geq x+d}}U_\mathcal{P}\ket{\rho}_i \right |   \leq     2^{\frac{3}{2}Q} \SR(\ket \rho_{i}) \sqrt{2^{-(x+d)}2^{-i}} \leq 2^{\frac{3}{2}Q} \cdot 2^{i+1}  \left \|\ket \rho_i \right \|^2 \cdot  \sqrt{2^{-(x+d)}2^{-i}} \nonumber\\
	& = 2 \cdot  2^{\frac{3}{2}Q} \left \|\ket \rho_i \right \|^2 \sqrt{2^{i-x-d}} \leq 2 \cdot  2^{\frac{3}{2}Q} \left \|\ket \rho_i \right \|^2 \cdot 2 \cdot 2^{-d/2} = 4 \cdot 2^{\frac{3Q -d}{2}} \left \|\ket \rho_i \right \|^2,   
	\end{align}
	
	where the final inequality follows because  $i \leq \lceil x \rceil$ by assumption.  Thus,

	\begin{align} \label{eq:phi3psi1}
	&\left|\braket{\phi^3 }{\psi^1} \right | = \left|\bra{\phi_{\geq x+d}}U_\mathcal{P}\ket \psi_{\leq x} \right | = \left|\sum_{i = -1}^{\lceil x \rceil}\bra{\phi_{\geq x+d}}U_\mathcal{P}\ket \rho_i \right | \leq  \sum_{i = -1}^{\lceil x \rceil} \left|\bra{\phi_{\geq x+d}}U_\mathcal{P}\ket \rho_i \right | \nonumber\\ 
	& \leq  4 \cdot 2^{\frac{3Q -d}{2}}     \sum_{i = -1}^{\lceil x \rceil} \left \|\ket \rho_i \right \|^2 = 4 \cdot 2^{\frac{3Q -d}{2}} \left \|\ket \psi_{\leq x} \right \|^2 \leq 4 \cdot 2^{\frac{3Q -d}{2}} = h(Q,d),
	\end{align}
	
	where the second inequality follows by Equation \ref{eq:innerproductbound1} and the subsequent equality follows by Equation \ref{eq:sumlessone}.  Having established this upper bound on $\left|\braket{\phi^3 }{\psi^1} \right |$ we now proceed with bounding the other inner products in the Lemma statement:
	
	\begin{align}\label{eq:psi3psi1}
	& \left|\braket{\psi^3 }{\psi^1} \right | = \left|\bra{\psi_{>x}}U_\mathcal{P}^{\dagger} \ket{\phi^3} \braket{ \phi^3} {\psi^1} \right |  =\left|\bra{\psi_{>x}}U_\mathcal{P}^{\dagger} \ket{\phi^3} \right|  \left |\braket{ \phi^3}{\psi^1} \right | \leq  \left |\braket{ \phi^3}{\psi^1} \right |  \leq h(Q,d),
	\end{align}
	
	\begin{align*}
	& \left|\braket{\psi^2 }{\psi^1} \right | = \left|\bra{\psi_{>x}}U_\mathcal{P}^{\dagger}\left (I - \ket{\phi^3} \bra{ \phi^3} \right ) \ket{\psi^1} \right | \leq \left|\bra{\psi_{>x}}U_\mathcal{P}^{\dagger}\ket{\psi^1} \right | +\left|\bra{\psi_{>x}}U_\mathcal{P}^{\dagger} \ket{\phi^3} \braket{ \phi^3}{\psi^1} \right | \\
	& =\left|\bra{\psi_{>x}}U_\mathcal{P}^{\dagger}U_\mathcal{P}\ket{\psi}_{\leq x} \right | + \left|\braket{\psi^3 }{\psi^1} \right | = \left|\braket{\psi_{>x}}{\psi_{\leq x}} \right|  +  \left|\braket{\psi^3 }{\psi^1} \right | = \left|\braket{\psi^3 }{\psi^1} \right | \leq  h(Q,d),
	\end{align*}

	where both of the inequality steps follow by Equation \ref{eq:psi3psi1}  (the first of which also uses the triangle inequality).

	\begin{align*}
	& \left|\braket{\phi^3 }{\phi^1} \right | = \left|\braket{\phi^3}{ \psi^1}\braket{\psi^1}{\phi_{< x+d}} \right |  = \left|\braket{\phi^3}{ \psi^1}\right |  \left|\braket{\psi^1}{\phi_{< x+d}} \right | \leq \left|\braket{\phi^3}{ \psi^1}\right |  \leq h(Q,d),
	\end{align*}
	
	\begin{align*}
	& \left|\braket{\phi^3 }{\phi^2} \right | = \left|\bra{\phi^3}\left (I - \ket{\psi^1} \bra{\psi^1} \right ) \ket{\phi}_{< x+d} \right | \leq \left|\braket{\phi^3}{\phi_{< x+d}} \right | +\left|\braket{\phi^3}{\psi^1} \braket{\psi^1}{\phi_{< x+d}} \right |  \\
	& = \left|\braket{\phi_{> x +d}}{\phi_{< x+d}} \right | +\left|\braket{\phi^3}{\psi^1}\right|  \left| \braket{\psi^1}{\phi_{< x+d}} \right |  = \left|\braket{\phi^3}{\psi^1}\right|  \left| \braket{\psi^1}{\phi_{< x+d}} \right | \leq \left|\braket{\phi^3}{\psi^1}\right| \leq h(Q,d)
	\end{align*}
	
	Now, as noted earlier, $\braket{\phi^2}{\psi^1} = \braket{\phi^3}{\psi^2} = 0$.  Continuing with the cross terms we have:
	
	\begin{align*}
	& \left|\braket{\phi^1 }{\psi^2} \right |  = \left|\braket{\psi^2 }{\phi^1} \right | = \left|\braket{\psi^2}{ \psi^1}\braket{\psi^1}{\phi_{< x+d}} \right |  = \left|\braket{\psi^2}{ \psi^1}\right |  \left|\braket{\psi^1}{\phi_{< x+d}} \right | \leq \left|\braket{\psi^2}{ \psi^1}\right |  \leq h(Q,d),
	\end{align*}
	
	\begin{align*}
	& \left|\braket{\phi^1 }{\psi^3} \right |  = \left|\braket{\psi^3 }{\phi^1} \right | = \left|\braket{\psi^3}{ \psi^1}\braket{\psi^1}{\phi_{< x+d}} \right |  = \left|\braket{\psi^3}{ \psi^1}\right |  \left|\braket{\psi^1}{\phi_{< x+d}} \right | \leq \left|\braket{\psi^3}{ \psi^1}\right |  \leq h(Q,d),
	\end{align*}
	
	where the last inequality follows from Equation \ref{eq:psi3psi1}.  And, since we already have $\left|\braket{\phi^3 }{\psi^1} \right |  \leq h(Q,d)$ from Equation \ref{eq:phi3psi1}, the final inner product to bound is:
	
	\begin{align*}
	& \left|\braket{\phi^2 }{\psi^3} \right |  = \left|\bra{\phi}_{< x+d}\left (I - \ket{\psi^1} \bra{\psi^1} \right )  \ket{\psi^3} \right|\\
	& \leq \left|\braket{\phi_{< x+d} }{\psi^3} \right |  + \left|\braket{\phi_{< x+d}}{\psi^1} \braket{\psi^1}{\psi^3}  \right |\\
	&= \left|\braket{\phi_{< x+d} }{\phi^3} \bra{ \phi^3}  U_\mathcal{P} \ket \psi_{> x} \right |  + \left|\braket{\phi_{< x+d}}{\psi^1} \braket{\psi^1}{\psi^3}  \right |\\
	& = \left|\braket{\phi_{< x+d} }{\phi^3} \right | \left |\bra{ \phi^3}  U_\mathcal{P} \ket \psi_{> x} \right |  + \left|\braket{\phi_{< x+d}}{\psi^1} \right | \left | \braket{\psi^1}{\psi^3}  \right |\\
	&  = \left|\braket{\phi_{< x+d} }{\phi_{> x+d}} \right | \left |\bra{ \phi^3}  U_\mathcal{P} \ket \psi_{> x} \right |  + \left|\braket{\phi_{< x+d}}{\psi^1} \right | \left | \braket{\psi^1}{\psi^3}  \right |\\
	& \leq 0 + \left | \braket{\psi^1}{\psi^3}  \right | \leq h(Q,d),
	\end{align*}

	where the last inequality follows by Equation \ref{eq:psi3psi1}.
	
\end{proof}

\section{Fact \ref{fct:calculus}} \label{sec:calculus}

\begin{fact}\label{fct:calculus}
	For $p \in [0,1]$ and  $0 \leq \epsilon \leq p$,  $\sqrt{p- \epsilon} \sqrt{p} + \sqrt{1-p}\sqrt{1-p+\epsilon}\leq 1 - \frac{1}{8} \epsilon^2$
\end{fact}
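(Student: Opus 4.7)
The plan is to symmetrize via the substitution $q := p - \epsilon/2$, under which the two factors inside each square root are balanced about their mean:
\[p(p-\epsilon) = q^2 - \epsilon^2/4 \qquad\text{and}\qquad (1-p)(1-p+\epsilon) = (1-q)^2 - \epsilon^2/4.\]
The hypothesis $0 \le \epsilon \le p \le 1$ becomes $q \in [\epsilon/2,\,1-\epsilon/2]$, which makes both expressions under the square roots nonnegative.

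Next, I would apply the standard pointwise estimate $\sqrt{a^2-b^2} \le a - b^2/(2a)$, valid for $0 \le b \le a$ (verified by squaring), separately to each term with $b = \epsilon/2$. The leading terms telescope since $q + (1-q) = 1$, yielding
\[\sqrt{q^2 - \epsilon^2/4} + \sqrt{(1-q)^2 - \epsilon^2/4} \;\le\; 1 - \frac{\epsilon^2}{8}\left(\frac{1}{q} + \frac{1}{1-q}\right).\]

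Finally, the AM--HM inequality applied to $q$ and $1-q$ gives $\frac{1}{q} + \frac{1}{1-q} \ge \frac{4}{q + (1-q)} = 4$, so the bound becomes $1 - \epsilon^2/2$, which is in fact stronger than the claimed $1 - \epsilon^2/8$ and hence suffices.

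There is no serious obstacle; the only thing worth double-checking is that the preconditions of $\sqrt{a^2-b^2} \le a - b^2/(2a)$, namely $\epsilon/2 \le q$ and $\epsilon/2 \le 1-q$, are exactly what the range $q \in [\epsilon/2,\,1-\epsilon/2]$ guarantees, and that at the endpoints (where one reciprocal blows up on the other side, $1/q$ or $1/(1-q)$ remains bounded) the estimate degrades gracefully to the trivial $\sqrt{1-\epsilon} \le 1 - \epsilon/2$.
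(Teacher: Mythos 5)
Your proof is correct, and it takes a genuinely different route from the paper's. The paper defines $f(x) = \sqrt{p-x}\sqrt{p} + \sqrt{1-p}\sqrt{1-p+x}$, checks $f(0)=1$, $f'(0)=0$, bounds $f''(x) \le -1/4$ on the relevant range, and integrates twice to get $f(\epsilon) \le 1 - \epsilon^2/8$. You instead symmetrize with $q = p - \epsilon/2$ so that each product becomes a difference of squares, apply the algebraic estimate $\sqrt{a^2-b^2} \le a - b^2/(2a)$ termwise, and finish with AM--HM. Your preconditions all check out: $q \in [\epsilon/2, 1-\epsilon/2]$ gives exactly $0 \le \epsilon/2 \le q$ and $0 \le \epsilon/2 \le 1-q$ as required, the degenerate case $\epsilon = 0$ is a trivial equality, and at $q = \epsilon/2$ the first square root is $0$ while the bound $q - \epsilon^2/(8q) = \epsilon/4 \ge 0$ still holds. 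What your approach buys is twofold: it is entirely calculus-free, and it yields the sharper constant $1 - \epsilon^2/2$, which is essentially optimal (at $q=1/2$ the left side equals $\sqrt{1-\epsilon^2} = 1 - \epsilon^2/2 - O(\epsilon^4)$), whereas the paper's second-derivative bound is lossy by a factor of $4$. Either constant suffices for the stated Fact and for its use in Theorem \ref{thm::lowerbound}.
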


\begin{proof}
	Define $f(x) \equiv\sqrt{p- x} \sqrt{p} + \sqrt{1-p}\sqrt{1-p+x}$.  Note that $f'(x) = - \frac{\sqrt{p}}{2 \sqrt{p-x}} + \frac{\sqrt{1-p}}{2\sqrt{1-p+x}}$, and $f''(x) = -1/4 \left (\frac{\sqrt{p}}{(p-x)^{3/2}} + \frac{\sqrt{1-p}}{(1-p+x)^{3/2}} \right )$.  So, $f(0) = 1$, $f'(0) = 0$, and 
	
	$$f''(x) = -1/4 \left (\frac{\sqrt{p}}{(p-x)^{3/2}} + \frac{\sqrt{1-p}}{(1-p+x)^{3/2}} \right ) \leq -1/4 \frac{\sqrt{p}}{(p-x)^{3/2}} \leq -1/4 \frac{1}{p} \leq -1/4 $$
	
	for all $p \in [0,1]$  and $0 \leq x \leq p$.  It follows by integration that:
	
	\[f(x)  = 1+ \int_0^x \int_0^x f''(y) dy dz \leq 1+\int_0^x \int_0^x (-1/4) dy dz = 1-\frac{1}{8} x^2  \]
	
	So, $$\sqrt{p- \epsilon} \sqrt{p} + \sqrt{1-p}\sqrt{1-p+\epsilon} = f(\epsilon)\leq 1 - \frac{1}{8} \epsilon^2$$

\end{proof}

\section{Proof of Lemma \ref{lem::rightflowprot}}\label{pf::lem::rightflowprot}

\begin{proof}
	By assumption there is a Right Index-1 Flow from $\ket \tau$ to $\ket \kappa$ with degree at most $2^Q$, so  there exists a bipartite graph $G_{X,Y}$ with vertices given by $X \cup Y$, and edge set $E_{X,Y}$, such that:
	\begin{itemize}
		\item Each vertex in $j \in Y$ has index 1 in $G_{X,Y}$.  
		\item For all $i \in X$, $\tau_i = \sum_{j \in Y: (i,j) \in E_{X,Y}} \kappa_j$.
		\item  The maximum degree of any vertex $i \in X$ in $G_{X,Y}$ is $2^Q$.  
	\end{itemize}
	
	The protocol for Alice and Bob to start with shared state $\ket \tau$ and end up with shared state $\ket \kappa$ will proceed as follows:  Beginning with the state $\ket \tau$ shared between Alice and Bob, we will refer to the register containing the Alice half of $\ket \tau$ as $A$, and the register containing the Bob half as $B$.   Alice will append two additional registers, of $Q$ qubits each, and initialize each of them to the all zeros state.  We will call these two new registers $C_1$ and $C_2$ respectively.  Alice will then perform a controlled unitary operation between $A$ and the registers $C_1$ and $C_2$.  She will then pass the register $C_2$ to Bob using $Q$ qubits of quantum communication to do so.  Bob will then perform a controlled unitary between $B$ and $C_2$, Alice will perform a controlled unitary between $A$ and $C_1$, and after that Alice and Bob will share the state $\ket \kappa$.
	
	To describe the protocol more precisely we will define the specific controlled unitaries performed by Alice and Bob at each step.  Beginning with a shared state $\ket \tau$, after Alice appends the two additional $Q$-qubit registers to her side of $\ket \tau$, the shared state looks as follows:
	
	$$\ket\tau = \sum_{i\in X} \sqrt{\tau_i} \ket{0^{\ot Q}}_{C_1} \ot \ket{0^{\ot Q}}_{C_2} \ot \ket i_A \ot \ket i_B$$

	Where, initially, Alice holds the registers $A$, $C_1$, and $C_2$.  Alice now performs a controlled unitary operation, acting on registers $C_1$ and $C_2$ and controlled on register $A$.  To describe this controlled unitary concisely we will need to imagine that there is some total order on the elements $j \in Y$  (any total order will do, one can simply imagine that the $j$'s are indexed by bit strings which encode integers), and we will define $s_{ij} \equiv | \{j' \in Y: j' < j, \text{ and } (i,j') \in E_{X,Y}  \}  |  $. Note that, since every $i \in X$ has degree at most $2^Q$, $s_{ij}$ is always an integer between $0$ and $2^Q$, so it can always be expressed in binary as a $Q$-bit binary number.  We will take this convention in the following argument.
	
	Now to define Alice's controlled unitary:  When controlled on $\ket i_A$ Alice's unitary moves the state $\ket{0^{\ot Q}}_{C_1} \ot \ket{0^{\ot Q}}_{C_2}$ to the state $\ket{i\text{-controlled}}_{C_1 C_2} \equiv \sum_{j \in Y: (i,j) \in E_{X,Y}} \sqrt{\kappa_j/\tau_i}\ket{s_{ij}}_{C_1} \ot \ket{s_{ij}}_{C_2}$.  Note that since $s_{ij}$ is always a $Q$-bit binary string, it can always be contained in the $Q$-qubit registers $C_1$ and $C_2$.  Further note that, since $\tau_i = \sum_{j \in Y: (i,j) \in E_{X,Y}} \kappa_j$ by assumption, $\ket{i\text{-controlled}}_{C_1 C_2}$ is a normalized pure state.  Thus there exists a unitary operation that moves $\ket{0^{\ot Q}}_{C_1} \ot \ket{0^{\ot Q}}_{C_2}$ to $\ket{i\text{-controlled}}_{C_1 C_2}$ and Alice need only perform this specific unitary when the control register is in state $\ket i_A$.  So, when Alice applies this controlled unitary to her registers $C_1$, $C_2$ and $A$ (where $A$ is the controlling register), the resulting new shared state between Alice and Bob is:

	\begin{align}
	& \ket\tau = \sum_{i\in X} \ket{i\text{-controlled}}_{C_1 C_2} \ot \ket i_A \ot \ket i_B = \sum_{i\in X} \sum_{j \in Y: (i,j) \in E_{X,Y}} \sqrt{\tau_i} \cdot \sqrt{\kappa_j/\tau_i}\ket{s_{ij}}_{C_1} \ot \ket{s_{ij}}_{C_2} \ot \ket i_A \ot \ket i_B \\
	& = \sum_{i\in X} \sum_{j \in Y: (i,j) \in E_{X,Y}} \sqrt{\kappa_j}\ket{s_{ij}}_{C_1} \ot \ket{s_{ij}}_{C_2} \ot \ket i_A \ot \ket i_B
	\end{align}
	
	At this point Alice uses $Q$ qubits of communication to pass the $Q$-qubit register $C_2$ to Bob.  The resulting shared state is:
	
	$$ \sum_{i\in X} \sum_{j \in Y: (i,j) \in E_{X,Y}} \sqrt{\kappa_j}\ket{s_{ij}}_{C_1}  \ot \ket i_A \ot \ket i_B \ot \ket{s_{ij}}_{C_2} $$
	
	Where Alice owns registers $C_1$ and $A$, and Bob owns registers $C_2$ and $B$.  Now it is not hard to see from the definition of $s_{ij}$ and the fact that every $j \in Y$ has degree exactly 1 in the graph $G_{X,Y}$, that there is a bijection mapping each $j \in Y$ to the tuple $(i, s_{ij})$.  Alice and Bob both know this bijection since they know the description of $G_{X,Y}$, and since bijections are invertible, Alice and Bob can now both apply a local unitary which relabels the basis element $\ket{i} \ot \ket{s_{ij}}$ to the basis element $j$.  The resulting shared state is:
	
	$$ \sum_{i\in X} \sum_{j \in Y: (i,j) \in E_{X,Y}} \sqrt{\kappa_j} \ket j_A \ot \ket j_B = \sum_{j \in Y} \sqrt{\kappa_j} \ket j_A \ot \ket j_B \equiv \ket \kappa $$
	
	Where the first equality follows because each $j \in Y$ appears in the initial sum exactly once (because $j$ has degree exactly one in $G_{X,Y}$).  
	
	This completes the protocol.

\end{proof}

\section{Proof of Corollary \ref{cor::leftflowprot}}\label{pf::cor::leftflowprot}

\begin{proof}
	By definition, if there is a Left Index-1 Flow from $\ket \kappa$ to $\ket \tau$, then there is a  Right Index-1 Flow from $\ket \tau$ to $\ket \kappa$ (which is the starting assumption of Lemma \ref{lem::rightflowprot}).  
	One can check that, in the proof Lemma \ref{lem::rightflowprot}, every operation
        performed by Alice and Bob was reversible.  Therefore, the proof of this corollary
        is simply to start at the end of the proof of Lemma \ref{lem::rightflowprot}, and
        ``reverse" every step of the proof in order from end to beginning  (including the
        communication step...now communication goes from Bob to Alice rather than Alice to
        Bob).  The result is the desired quantum communication protocol, which converts
        the shared state $\ket \kappa$ to the shared state $\ket \tau$ using $Q$ qubits of communication.
\end{proof}

\section{Proof of Theorem \ref{thm::weakplusearth}} \label{pf::thm::weakplusearth}

A concept which will be useful in the proof of Theorem \ref{thm::weakplusearth} is the notion of the spread of a state:

\begin{definition}[Spread]
	For a finite dimensional bipartite entangled state $\ket {\psi}^{AB} = \sum_{i} \sqrt{\psi_i} \ket i^A \ot \ket i^B$ let $\lambda_{max}$ be the maximum of the Schmidt coefficients of $\psi$, and let $\lambda_{min}$ be the minimum Schmidt coefficient.  We define the spread of $\ket \psi$ to be the quantity $\log(\lambda_{max}/\lambda_{min})$.
\end{definition}

We note that the above definition of spread is given in the case of finite dimensional $\ket \psi$, which is the only case we will need.   There is also an $\epsilon$-smoothed variant of the spread of a state~\cite{HW02,Har-spread}, but it will not be needed for this proof.  Within the proof of Theorem \ref{thm::weakplusearth} the spread of a bipartite state will be used as a proxy for the amount of communication required to create that state from a maximally entangled state.  This intuition is formalized, for example, by Theorem \ref{thm::earthmover}, but in this case of converting from a maximally entangled state, is also an implication of earlier works, such as \cite{HL02,HW02}.

\begin{theorem*}[Restatement of Theorem \ref{thm::weakplusearth}] Consider a quantum communication protocol $\mathcal{R}$ whose goal is to compute a joint function $g(x,y) \in \{0, 1\}$. Suppose that $\mathcal{R}$ uses an arbitrary bipartite entangled state $\ket{\psi}^{AB}$ (of unbounded dimension), as well as $Q$ qubits of communication total, in either direction (for sufficiently large $Q \geq 15$).  Then, for every $\eps > 0$, there exists a quantum communication protocol $\mathcal{R'}$ which simulates $\mathcal{R}$ with error $\epsilon$, while using only a maximally entangled state as an entangled resource (rather than $\ket{\psi}^{AB}$ or any other state), and using $O(\Qovereps)$ qubits of communication.  Thus, if $\mathcal{R}$ computes $f$ with error $\eps '$ it follows that $\mathcal{R'}$ computes $f$ with error $\epsilon+\eps '$.
\end{theorem*}
\begin{proof}
	Given $\mathcal{R}$, $g$, and $\ket \psi$ as in the theorem statement, Schmidt decompose $\ket{\psi}$ as $\sum_i \sqrt{\lambda_i} \ket{i,i}$  (see Remark \ref{rem:decomp} for why we may assume WLOG that $\ket \psi$ has this form).   
	
	Let $N\geq 2$ be an integer, which will be specified later.  Define a function $f: [0,1]
        \to \{0,1,\ldots,N\}$ given by 
	
	\[f(\lambda) =   2^{\left \lceil  \left \lceil \frac{\log(1/\lambda)}{N}\right
              \rceil N - \log(1/\lambda) \right \rceil} 
\in \{1,2,4,\ldots,2^N\}, \]

	and define a new state $\ket{\varphi} \equiv \sum_i  \sum_{j \in \{  1, ..., f(\lambda_i)\}} \sqrt{\nu_{i,j}} \ket{(i,j),(i,j)}$, where $\nu_{i,j} \equiv \frac{\lambda_i}{f(\lambda_i)}$.  Note that $\sum_{i,j} \nu_{i,j} = 1$, so that $\ket{\varphi}$ is a normalized pure state.  Furthermore, every Schmidt coefficient $\nu_{i,j}$ of $\ket{\varphi}$ is within a multiple of $2$ of the integer power $2^{- \left \lceil \frac{\log(1/\lambda_i)}{N}\right \rceil N}$.  This follows because 
	
	\begin{align}
	 \left | \log \left ( \frac{\nu_{i,j}}{2^{- \left \lceil \frac{\log(1/\lambda_i)}{N}\right \rceil N}} \right )  \right |& = \left |  \log \left (\lambda_i \right) - \log(f(\lambda_i)) + \left \lceil \frac{\log(1/\lambda_i)}{N}\right \rceil N \right | \nonumber \\
	&  = \left |  \log \left (\lambda_i \right) -  \log \left (   2^{ \left \lceil  \left \lceil \frac{\log(1/\lambda_i)}{N}\right \rceil N - \log(1/\lambda_i) \right \rceil} \right ) + \left \lceil \frac{\log(1/\lambda_i)}{N}\right \rceil N \right |  \nonumber \\
	&  =  \left |\left \lceil \frac{\log(1/\lambda_i)}{N}\right \rceil N -  \log \left (1/\lambda_i \right) -  \left \lceil  \left \lceil \frac{\log(1/\lambda_i)}{N}\right \rceil N - \log(1/\lambda_i) \right \rceil \right | \nonumber\\
	&  \leq 1 \label{eq:spreadcalc}
	\end{align}

Next, we can upper bound  $\dearth{\ket \psi}{\ket \varphi}\leq N$ by considering the coupling
in which each $\nu_{i,j}$ is moved to $\lambda_i$.  The largest distance obtained here is
the maximum $\log f(\lambda_i)$ for which $\lambda_i>0$, and this in turn is $\leq N$.
Therefore, by Theorem \ref{thm::earthmover}, there is a protocol $\mathcal{M}$ by which
Alice and Bob can prepare $\ket\varphi$ from $\ket\psi$, using $4 \lceil \dearth{\ket
  \chi}{\ket \upsilon} \rceil + 8 \leq 4 N + 8$ qubits of communication.  (For this
special case, of course a simpler protocol could also be used.)
	
	Define $\mathcal{C} \equiv \mathcal{R} \circ \mathcal{M}$ to be the composed protocol in which Alice and Bob start with shared state $\ket\varphi$, first use protocol $\mathcal{M}$ to convert $\ket\varphi$ to $\ket \psi$, and then perform protocol $\mathcal{R}$ using shared state $\ket \psi$ and inputs $x$ and $y$, to compute the joint function $g(x,y)$.  It is evident that $\mathcal{C}$ has exactly the same success probability as $\mathcal{R}$. Since $\mathcal{M}$ uses at most $4N +8$ qubits of communication and $\mathcal{R}$ uses $Q$ qubits of communication, $\mathcal{C}$ can be performed with $Q+4N+8$ qubits of communication.

For $j$ a nonnegative integer, define  $I_j := \{i: 2^{-j N +1} \geq \lambda_i > 2^{- j N
  - 1} \}$ and define the subnormalized state	
	\begin{align}
	\ket{\varphi_j} \equiv \sum_{i \in I_j}  \sqrt{\lambda_i} \ket{i,i}\label{eq:defphij}.
	\end{align}
From Equation \eqref{eq:spreadcalc} and the surrounding discussion, we have that
$\ket{\varphi} = \sum_j \ket{\varphi_j}$.  Furthermore, by the definition of $I_j$, it
follows that $\ket{\varphi_j}$ has spread at most $2$; note that the spread of
$\ket{\varphi_j}$ does not depend on whether the state is normalized or not.
	
	The idea of the proof is that different $\ket{\varphi_j}$ are not only orthogonal, but must remain approximately orthogonal even after a small amount of quantum communication. In particular, note that for any $j$, $\SR(\ket{\varphi_j}) \leq  2^{jN+1} \|\ket{\varphi_j} \|^2$. Furthermore, for all $l$ we have, by definition, that the Schmidt coefficients of $\ket{\varphi_l}$ are bounded above by $2^{-lN+1}$.  Therefore, if $U$ is a unitary transform using $M$ qubits of communication, then, it follows by Lemma \ref{lem:cominnerprod}, that $\forall j,k$,
	
\begin{align}
\left |\bra{\varphi_k} U \ket{\varphi_j} \right | & \leq 2^{\frac{3}{2}M} 2^{\min(j,k)N+1} \left \|\ket{\varphi_{\min(j,k)}} \right \|^2 \sqrt{2^{-jN+1} \cdot 2^{-kN+1}} \nonumber \\
& \leq 2^{\frac{3}{2}M} 2^{-N\frac{|j-k|}{2}+2} \left \|\ket{\varphi_{\min(j,k)}} \right \|^2 \label{eq:innerprod}
\end{align}

To apply this to our problem, we first note that the protocol $\mathcal{C}$ depends, a
priori, on the inputs $x,y$ to the function $g(x,y)$ that we wish to compute (just like
the the protocol $\mathcal{R}$).  We now fix any input pair $x,y$ and for the remainder of
the proof of this theorem we will perform only transformations of the shared state which
do not depend on the value of $x,y$.  We will therefore establish that our transformation
to a maximally entangled shared state does not significantly impact the success
probability of the quantum communication protocol \emph{regardless} of the value of $x,y$.
The desired Theorem then follows.

With the input $x,y$ now fixed, we observe that the success probability of protocol
$\mathcal{C}$ (which we have already established is equal to the success probability of
the original protocol $\mathcal{R}$) can be expressed WLOG by performing $\mathcal{C}$ and
then computing the probability of outcomes when measuring the first qubit in the
computational basis.  The probability that such a measurement on protocol $\mathcal{C}$
outputs $b \in \{0,1\}$ is
	
	\[\Pr[b]  =\bra{\varphi} \mathcal{C}^{\dagger} (\proj{b} \ot I) \mathcal{C} \ket{\varphi},\]

	where $I$ acts on all qubits except for the first, which is being measured. Define $\mathcal{P} \equiv \mathcal{C}^{\dagger} (\sigma_z \ot I) \mathcal{C} = \mathcal{C}^{\dagger} (\ket{0}\bra{0} \ot I) \mathcal{C} - \mathcal{C}^{\dagger} (\ket{1}\bra{1} \ot I) \mathcal{C} $. Then
	
	\begin{equation}
	\Pr[0] - \Pr[1] = \bra{\varphi}\mathcal{P}\ket{\varphi} = \sum_{j,k} \bra{\varphi_j} \mathcal{P}  \ket{\varphi_k} \label{eq:probdiff}
	\end{equation}
	
	Observe, for later, that $\mathcal{P}$ is a unitary operator that can be implemented using $2Q+8N + 16$ qubits of
	communication.  
	
 The proof will proceed as follows:  In Lemma \ref{lem:parta} we show that the density matrix $\varphi = \ket \varphi \bra \varphi$ can be divided into three ``pieces" (in a manner that does not depend on the inputs $x,y$), one piece which has small trace norm and can therefore be omitted, one piece called $\vfar$ which only has non-zero terms which are far from the diagonal in the appropriate basis, and one piece called $\vblock$ which is a block-diagonal mixed state that can be produced with small error and low communication cost from a maximally entangled state.  Then, in Lemma \ref{lem:fararesmall}, we show that the $\vfar$ piece of $\varphi$ has very little effect on the protocol $\mathcal{C}$.  This means that $\varphi$ can be replaced by $\vblock$ alone while incurring very little error in the success probability of $\mathcal{C}$.  Stated equivalently, via the equality in Equation \ref{eq:probdiff} above, Lemma \ref{lem:fararesmall} shows that the quantity $\left | \Tr(\mathcal{P} (\varphi - \vblock)) \right |$ is small.  Since we know from Lemma \ref{lem:parta} that $\vblock$ can be produced with low cost from a maximally entangled state, this leads us to the desired result. Since $\vblock$ does not depend on the inputs $x,y$ this same statement holds for every pair of inputs $x,y$.  From this point forward we will no longer specify the fixed inputs $x,y$, as it will be clear that the state substitutions do not depend on these inputs, and thus that the argument holds for every input as discussed in this paragraph.	
	
    We now establish some notation which will be useful throughout the rest of the proof:

    	\begin{definition}[subset-matrix] \label{def:subsetmatrix}
    	Consider operators on the Hilbert space which is the span of the $\ket{\varphi_j}$.  We say that an operator $M'$ is a \emph{\ssmat} of an operator $M$, if it is the case that for all $l,k$ either   $\bra{\varphi_l}M'\ket{\varphi_k} = \bra{\varphi_l}M\ket{\varphi_k}$, or $\bra{\varphi_l}M'\ket{\varphi_k} = 0$.
    	\end{definition} 
    	
    	\begin{definition}[Non-Zero Set] \label{def:nonzeroset}
    		For an operator $\theta$ on the Hilbert space which is the span of the $\ket{\varphi_j}$, define the non-zero set of $\theta$ to be $T_{\theta} = \{(l,k): \bra{\varphi_k} \theta \ket{\varphi_l} \neq 0\}$.  
    	\end{definition}

\begin{lemma} \label{lem:parta}
Consider the density matrix $\varphi \equiv  \sum_{k,l}  \ket{\varphi_k}\bra{\varphi_l}$.
For any $\epsilon >0$, there exist \ssmats, $\vblock, \vfar$, of $\varphi$,  such that
\benum
\item 
$\|\varphi - (\vblock+\vfar) \|_1 \leq 2 \epsilon$
\item  $T_{\vfar} \subseteq \{(l,k): |k-l| > B\}  $, 
where $B \equiv 30 + 2 \left \lceil \frac{\log(1/\epsilon)}{N}\right \rceil$.
\item The bipartite
shared state $\vblock$ can be prepared starting from EPR pairs with
$O(N/\epsilon+\log(1/\epsilon)/\epsilon)$ bits of communication. 
\eenum
\end{lemma}
	
    The proof of Lemma \ref{lem:parta} is included in Section \ref{sec:parta} of the Appendix.

		We can now bound the difference between the protocol $\mathcal{C}$ acting on $\varphi$ versus $\mathcal{C}$ acting on $\vblock$, following equation \ref{eq:probdiff} as follows:

		\begin{align*}
		&\left | (\text{Pr}_{\varphi}[0] - \text{Pr}_{\varphi}[1]) - (\text{Pr}_{\vblock}[0] - \text{Pr}_{\vblock}[1])   \right| = \left | \Tr(\mathcal{P} (\varphi - \vblock)) \right | 
		\end{align*}
	
	Setting $N = 2Q$ and recalling from the Theorem statement that $Q \geq 15$ by assumption, it follows by Lemma \ref{lem:fararesmall}, stated below, that:
	
	\begin{align}
	&\left | (\text{Pr}_{\varphi}[0] - \text{Pr}_{\varphi}[1]) - (\text{Pr}_{\vblock}[0] - \text{Pr}_{\vblock}[1])   \right|= \left | \Tr(\mathcal{P} (\varphi - \vblock)) \right | \leq 3 \epsilon \label{eq:traceprobdiff}
	\end{align}
	
	This completes the proof of the Theorem as we now describe.
	
	We know from Lemma \ref{lem:parta} that there is a quantum communication protocol, call it $\mathcal{K}$, which prepares the shared state $\vblock$ starting from just a maximally entangled state using at most $O(N/\epsilon+\log(1/\epsilon)/\epsilon)$ bits of communication.  Now define the protocol $\mathcal{R'} \equiv \mathcal{C} \circ \mathcal{K}$.  Since $\mathcal{C}$ uses at most $Q+4N+8$ qubits of communication, and since we have chosen to set $N = 2Q$ (in the line above Equation \ref{eq:traceprobdiff}), it follows that $\mathcal{R'}$ uses at most $O(N/\epsilon+\log(1/\epsilon)/\epsilon) = O(Q/\epsilon+\log(1/\epsilon)/\epsilon)$ qubits of communication.  Furthermore, the success probability of $\mathcal{R'}$ with only the maximally entangled state as an entangled resource is the same, by construction, as the success probability of $\mathcal{C}$ with $\vblock$ as an entangled resource, which, by Equation \ref{eq:traceprobdiff} above and the original definition $\mathcal{C} \equiv \mathcal{R} \circ \mathcal{M}$, is within $3 \epsilon$ of the success probability of the original protocol $\mathcal{R}$ from the theorem statement when using the original shared state $\ket{\psi}$ as an entangled resource.  This is the desired result.	
\end{proof}
	
	\begin{lemma} \label{lem:fararesmall}
		For $\vblock$ as constructed in Lemma \ref{lem:parta}, and for $N, Q$ as defined in the proof of Theorem \ref{thm::weakplusearth} we have, $\left | \Tr(\mathcal{P} (\varphi - \vblock)) \right | \leq 3 \epsilon$ whenever $N \geq 2Q \geq 30$.
	\end{lemma}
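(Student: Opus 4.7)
The plan is to split $\varphi - \vblock = (\varphi - \vblock - \vfar) + \vfar$ and bound the contribution of each term to $|\Tr(\mathcal{P}(\varphi - \vblock))|$ separately. For the first piece, the operator $\mathcal{P} = \mathcal{C}^{\dagger}(\sigma_z \ot I)\mathcal{C}$ has operator norm at most $1$, so H\"older's inequality together with the first bullet of Lemma~\ref{lem:parta} yields
\[ |\Tr(\mathcal{P}(\varphi - \vblock - \vfar))| \leq \|\mathcal{P}\|_{\infty} \cdot \|\varphi - \vblock - \vfar\|_1 \leq 2\eps. \]
It therefore suffices to show $|\Tr(\mathcal{P}\vfar)| \leq \eps$, and in fact the bound will be much tighter.

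To handle the $\vfar$ piece, I would first observe that since the $\ket{\varphi_j}$ are mutually orthogonal and $\varphi = \sum_{j,k}\ket{\varphi_j}\bra{\varphi_k}$, the subset-matrix condition of Definition~\ref{def:subsetmatrix} forces $\vfar = \sum_{(l,k) \in T_{\vfar}} \ket{\varphi_l}\bra{\varphi_k}$. Taking the trace against $\mathcal{P}$ and applying the triangle inequality gives
\[ |\Tr(\mathcal{P}\vfar)| \leq \sum_{(l,k) \in T_{\vfar}} |\bra{\varphi_k}\mathcal{P}\ket{\varphi_l}|. \]
Now plug in Equation~\eqref{eq:innerprod} with $M = 2Q + 8N + 16$ (the communication cost of $\mathcal{P}$, which is twice that of $\mathcal{C}$) to bound each matrix element by $2^{3M/2+2} \cdot 2^{-N|l-k|/2}\,\|\ket{\varphi_{\min(l,k)}}\|^2$. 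Splitting the resulting sum by the sign of $l-k$, the geometric series $\sum_{|l-k|>B} 2^{-N|l-k|/2}$ sums to $O(2^{-NB/2})$ (using $N \geq 30$ to bound $1/(1-2^{-N/2}) \leq 2$), and the residual $\sum_l \|\ket{\varphi_l}\|^2 \leq \|\ket{\varphi}\|^2 = 1$. The overall bound is therefore of the form $C \cdot 2^{3M/2 - NB/2}$ for a small absolute constant $C$.

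It then remains to check the exponent. Substituting $M = 2Q+8N+16$ gives $3M/2 = 3Q + 12N + 24$, and substituting $B \geq 30 + 2\log(1/\eps)/N$ from Lemma~\ref{lem:parta} gives $NB/2 \geq 15N + \log(1/\eps)$. Hence the exponent is at most $3Q - 3N + 24 - \log(1/\eps)$, and the hypotheses $N \geq 2Q$ and $Q \geq 15$ reduce this to $-21 - \log(1/\eps)$ or less. Thus $|\Tr(\mathcal{P}\vfar)| \leq O(\eps/2^{21}) \ll \eps$, and combining with the $2\eps$ bound from the first piece gives $|\Tr(\mathcal{P}(\varphi-\vblock))| \leq 3\eps$ with considerable slack.

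The only real obstacle is this exponent arithmetic: the inner-product bound of Equation~\eqref{eq:innerprod} grows exponentially in $M$, which is linear in $N$, so we need the exponential decay $2^{-NB/2}$ coming from the far-from-diagonal support of $\vfar$ to beat it. The constraints $N \geq 2Q$ and $Q \geq 15$ are exactly tuned so that the coefficient of $Q$ in the exponent becomes negative and the constant term is swamped. Everything else --- the H\"older bound on the first piece, the geometric-series sum, and the sum of $\|\ket{\varphi_l}\|^2$ --- is essentially immediate.
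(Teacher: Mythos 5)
Your proposal is correct and follows essentially the same route as the paper: split off $\varphi - \vblock - \vfar$ via H\"older and the trace-norm bound from Lemma \ref{lem:parta}, expand $\vfar$ over its non-zero set $T_{\vfar} \subseteq \{(l,k): |k-l|>B\}$, bound each $|\bra{\varphi_k}\mathcal{P}\ket{\varphi_l}|$ by Equation \eqref{eq:innerprod} with $M = 2Q+8N+16$, and sum the geometric series, with the exponent arithmetic $3Q+12N+24 - NB/2 \leq -21-\log(1/\eps)$ matching the paper's calculation (up to harmless constant slack). No gaps.
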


\begin{proof}

Following Lemma \ref{lem:parta}, we define $\B \equiv 30 + 2 \left \lceil
  \frac{\log(1/\epsilon)}{N} \right \rceil$.  Now, letting $\vblock$ and $\vfar$ be as in
Lemma \ref{lem:parta}, and recalling that $\|\varphi - (\vblock + \vfar)\|_1 \leq 2
\epsilon$, we have:

	\begin{align*}
	 \left | \Tr(\mathcal{P} (\varphi - \vblock)) \right | &\leq \left | \Tr(\mathcal{P} ((\vblock + \vfar) -\vblock)) \right | + 2 \epsilon = \left | \Tr(\mathcal{P} \vfar) \right | + 2 \epsilon \\
	&= \left |  \sum_{(k,l) \in T_{\vfar} }  \bra{\varphi_k} \mathcal{P}  \ket{\varphi_l} \right |+ 2 \epsilon \leq   \sum_{(k,l) \in T_{\vfar}}  \left |\bra{\varphi_k} \mathcal{P}   \ket{\varphi_l} \right |+ 2 \epsilon  \\
	& \leq \sum_{k,l:  |k-l| > \B} \left |\bra{\varphi_k} \mathcal{P}   \ket{\varphi_l} \right | + 2 \epsilon
	\end{align*}

 where the final inequality follows because $T_{\vfar} \subseteq \{(l,k): |k-l| > \B  \}$ by Lemma \ref{lem:parta}.  Recalling that the unitary $\mathcal{P}$ can be implemented using 2Q+8N+16 qubits of
	communication, and applying equation \ref{eq:innerprod} then gives that:
	
	\begin{align*}
	\left | \Tr(\mathcal{P} (\varphi - \vblock)) \right | - 2 \epsilon & \leq  \sum_{k,l:  |k-l| > \B} \min( 1, 2^{3/2 \cdot (2Q+8N+16) } 2^{-N\frac{|k-l|}{2}+4}) \left \|\ket{\varphi_{\min(k,l)}} \right \|^2  \\
	&= 2 \sum_{l} \left \|\ket{\varphi_{l}} \right \|^2  \sum_{k > l + \B}  \min( 1, 2^{\keyexponent} 2^{-N\frac{|k-l|}{2}+4})  \\
	& = 2   \sum_{n > \B}  \min( 1, 2^{\keyexponent} 2^{-N\frac{n}{2}+4}) \\
	&\leq  2 \cdot   2^{\keyexponent} 2^{-\B N/2+4 } \sum_{k =0}^{\infty}    2^{-N\frac{k}{2}}   \\
	& =2 \cdot    2^{\keyexponent} 2^{-\B N/2+4}  \left (1+\frac{2^{-\frac{N}{2}}}{1-2^{-\frac{N}{2}}} \right )  \\
	& \leq 4 \cdot 2^{\keyexponent} 2^{-\B N/2+4} 
	\end{align*}
	
	So, recalling from the Lemma statement that $N \geq 2Q \geq 30$ by assumption:

	\begin{align*}
	\left | \Tr(\mathcal{P} (\varphi - \vblock)) \right | - 2 \epsilon & \leq 4 \cdot 2^{\keyexponent} 2^{-\B N/2+4} \\
	& \leq  4 \cdot 2^{28} 2^{14N} 2^{-15N- \left \lceil \frac{\log(1/\epsilon)}{N} \right \rceil N}\\
	& \leq  2^{30} 2^{-N- \log(1/\epsilon)} \\
	&\leq  \epsilon
	\end{align*}
	
	So, 
	
	\[\left | \Tr(\mathcal{P} (\varphi - \vblock)) \right | \leq 3 \epsilon \]

	\end{proof}

	Note, in the pre-processing step in the proof of Theorem \ref{thm::weakplusearth}, and again at a point within the proof of Lemma \ref{lem:parta} we use our Theorem \ref{thm::earthmover} in a setting where either the starting or ending state is very close to a maximally entangled state.  It is helpful to observe, to avoid confusion, that in such cases Theorem \ref{thm::earthmover} is not strictly necessary and could be replaced with previously known results from, for example, \cite{HL02,HW02}.  In this manuscript we will use Theorem \ref{thm::earthmover} in these cases in order to remain self-contained, and for the convenience of the reader, but we emphasize that the lines of the proof of Theorem \ref{thm::weakplusearth} in which we use Theorem \ref{thm::earthmover} could be replaced with known results.

\section{Proof of Lemma \ref{lem:parta}} \label{sec:parta}

	\begin{lemma*} [Restatement of Lemma \ref{lem:parta}]
Consider the density matrix $\varphi \equiv  \sum_{k,l}  \ket{\varphi_k}\bra{\varphi_l}$.
For any $\epsilon >0$, there exist \ssmats, $\vblock, \vfar$, of $\varphi$,  such that
\benum
\item 
$\|\varphi - (\vblock+\vfar) \|_1 \leq 2 \epsilon$
\item  $T_{\vfar} \subseteq \{(l,k): |k-l| > B\}  $, 
where $B \equiv 30 + 2 \left \lceil \frac{\log(1/\epsilon)}{N}\right \rceil$.
\item The bipartite
shared state $\vblock$ can be prepared starting from EPR pairs with
$O(N/\epsilon+\log(1/\epsilon)/\epsilon)$ bits of communication. 
\eenum
	\end{lemma*}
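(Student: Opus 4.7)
The plan is to exhibit an explicit decomposition of $\varphi$ into a within-block piece $\vblock$, a far-off-diagonal piece $\vfar$, and a remainder whose trace norm is controlled by a random-offset argument. First, partition the active levels $\{k : \ket{\varphi_k} \neq 0\}$ into consecutive-integer blocks $G_0, G_1, G_2, \ldots$ of common size $W = \lceil B/\epsilon \rceil$, with starting offset to be fixed below by averaging. Let $\ket{\Phi_t} \equiv \sum_{k \in G_t} \ket{\varphi_k}$; these are mutually orthogonal sub-normalized vectors with $\sum_t \|\ket{\Phi_t}\|^2 = 1$. Define $\vblock \equiv \sum_t \ket{\Phi_t}\bra{\Phi_t}$, the within-block subset-matrix of $\varphi$. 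For each boundary $c$ between adjacent blocks, set $\ket{u_c} = \sum_{k \in [c-B, c)} \ket{\varphi_k}$, $\ket{v_c} = \sum_{l \in [c, c+B)} \ket{\varphi_l}$, and let $\tilde R_c = \ket{u_c}\bra{v_c} + \ket{v_c}\bra{u_c}$ be the ``full $B$-window box'' of cross-boundary entries. Finally, take $\vfar$ to be the subset-matrix of $\varphi$ consisting of all cross-block entries that lie outside every $\tilde R_c$. A short case analysis shows every such entry has $|k-l| > B$: between adjacent blocks an entry outside $\tilde R_c$ must have $k < c-B$ or $l \geq c+B$, forcing $l - k > B$; between non-adjacent blocks $|k-l| \geq W > B$. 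Hence $T_{\vfar} \subseteq \{(l,k) : |k-l| > B\}$ as required.

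With these definitions the remainder is $R \equiv \varphi - \vblock - \vfar = \sum_c \tilde R_c$. Each $\tilde R_c$ is rank-two Hermitian, so $\|\tilde R_c\|_1 \leq 2\|\ket{u_c}\|\,\|\ket{v_c}\| \leq \|\ket{u_c}\|^2 + \|\ket{v_c}\|^2 = w_c$, where $w_c \equiv \sum_{k \in [c-B,c+B)} \|\ket{\varphi_k}\|^2$ is the mass in the $B$-neighborhood of the cut. For $\epsilon \leq 1/2$ one has $W \geq 2B$, so the $B$-neighborhoods of different cuts are disjoint, and under a uniform random offset in $\{0,\ldots,W-1\}$ one finds $\mathbb{E}[\sum_c w_c] = (2B/W)\sum_k \|\ket{\varphi_k}\|^2 = 2B/W \leq 2\epsilon$. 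Some deterministic offset therefore attains $\|R\|_1 \leq \sum_c w_c \leq 2\epsilon$. For the preparation, note that $\sum_t \|\ket{\Phi_t}\|^2 = 1$, so $\vblock = \sum_t p_t \ket{\widehat\Phi_t}\bra{\widehat\Phi_t}$ with $p_t = \|\ket{\Phi_t}\|^2$ and $\ket{\widehat\Phi_t} = \ket{\Phi_t}/\sqrt{p_t}$ is a genuine mixed state. Alice and Bob extract shared classical randomness from a small portion of their EPR pairs, sample $t \sim \{p_t\}$ at no communication cost, and invoke Theorem \ref{thm::earthmover} to convert additional EPR pairs into $\ket{\widehat\Phi_t}$. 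Since each $\ket{\varphi_k}$ has Schmidt coefficients in $[2^{-kN-1}, 2^{-kN+1}]$ and $G_t$ spans $W$ consecutive values of $k$, the Schmidt coefficients of $\ket{\widehat\Phi_t}$ all lie within a factor $2^{(W-1)N+2}$ of each other, so the $\ell_\infty$ Earth Mover's distance to a suitably dimensioned maximally entangled state is $O(WN)$. The preparation cost is therefore $O(WN) = O(BN/\epsilon) = O(N/\epsilon + \log(1/\epsilon)/\epsilon)$ qubits.

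The main conceptual step is to absorb the \emph{entire} cross-boundary box $\tilde R_c$ into $R$, even though its ``corner'' entries with $|k-l| > B$ could legally be relegated to $\vfar$ instead. Doing so preserves the clean rank-two outer-product structure $\tilde R_c = \ket{u_c}\bra{v_c} + \ket{v_c}\bra{u_c}$, which is what delivers the tight bound $\|\tilde R_c\|_1 \leq w_c$. If instead one restricted $R$ to only the near-diagonal entries with $|k-l| \leq B$, $\tilde R_c$ would be truncated to a triangular submatrix whose trace norm grows like $\sqrt{B}\cdot w_c$ or worse, inflating the block size required for $\|R\|_1 \leq 2\epsilon$ and breaking the target $W = \Theta(B/\epsilon)$ scaling. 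Getting this trace-norm comparison right is the chief obstacle; the rest of the argument is bookkeeping.
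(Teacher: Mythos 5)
Your construction is correct and delivers all three claims with the right parameters, and it follows the same high-level strategy as the paper's proof (group the levels $\ket{\varphi_k}$ into long blocks of width $\Theta(B/\epsilon)$ chosen by an averaging argument, discard the cross-boundary terms as a small-trace-norm remainder, classify the remaining cross-block entries as the far piece, and prepare the block-diagonal piece as a mixture of bounded-spread pure states via Theorem \ref{thm::earthmover}). Where you genuinely diverge is in the two quantitative steps. First, your averaging is over all $W$ cyclic offsets of an equal-width partition, with the per-cut cost bounded by the exact rank-two identity $\bigl\| \ket{u_c}\bra{v_c}+\ket{v_c}\bra{u_c} \bigr\|_1 = 2\|u_c\|\,\|v_c\| \le \|u_c\|^2+\|v_c\|^2$; the paper instead pre-defines $4B$-wide cut windows $M_i$, averages over the $\lceil 1/\epsilon\rceil$ residue classes to pick a light class $S_{k'}$, and controls the discarded anti-diagonal blocks $K_{k'}$ via positivity and pinching ($\|K_{k'}\|_1 \le 2\|S_{k'}\|_1 \le 2\epsilon$). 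Your route buys a slightly cleaner and more local trace-norm estimate and an explicitly normalized $\vblock=\sum_t \ket{\Phi_t}\bra{\Phi_t}$; the paper's route avoids any appeal to the rank-two spectrum and makes the ``keep the diagonal blocks, drop only the anti-diagonal blocks'' bookkeeping explicit, which is what its case analysis for $T_{\vfar}$ leans on. Second, for preparation you sample the block index $t$ from shared randomness extracted from EPR pairs and then run the Theorem \ref{thm::earthmover} conversion for that $t$, whereas the paper prepares the mixture ``in superposition over the block index and traces it out''; these are at the same level of rigor (both leave the side register containing $t$ implicit, which is harmless since only the reduced state on the protocol's registers matters), and your spread bound $O(WN)=O(N/\epsilon+\log(1/\epsilon)/\epsilon)$ per block matches the paper's. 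Two cosmetic caveats: your construction needs $W\ge 2B$, i.e.\ $\epsilon\le 1/2$, for the cut neighborhoods to be disjoint (the remaining range of $\epsilon$ is trivial since $\|\varphi\|_1=1\le 2\epsilon$), and the closing remark that truncating a cross-boundary box to $|k-l|\le B$ would inflate its trace norm by $\sqrt{B}$ overstates the loss—a triangular truncation of a rank-one block costs only a logarithmic factor—but this is motivational commentary and does not affect the proof.
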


\begin{proof}
	Note:  The terminology used in this proof is defined in the proof of Theorem \ref{thm::weakplusearth} preceding the use of Lemma \ref{lem:parta} there (Appendix \ref{pf::thm::weakplusearth}).

	Fixing an $\epsilon > 0$ we will now show how to ``cut" $\varphi \equiv  \sum_{k,l}  \ket{\varphi_k}\bra{\varphi_l}$ down into a mixture of states of small spread such that the cut only removes \ssmats\ of the operator which are either far from the diagonal or small in the trace norm (less than $2\epsilon$).

Define a sequence of mutually orthogonal projectors $\{P_i\}$, where each $P_i$ is the
projection onto the span of $\{\ket{\varphi_l} \}_{2(i-1)B<l\leq 2i \cdot B }$.
Let $$M_i \equiv (P_{2i-1}+P_{2i})  \varphi (P_{2i-1}+P_{2i}).$$

Now, for $k \in [1, ...., \lceil 1/\eps \rceil ]$ define
$$S_k \equiv \sum_{i= 0}^{\infty}M_{i \cdot \lceil 1/\eps \rceil + k}.$$
  The $S_k$ are
block-diagonal \ssmats\ of $\varphi$, which are disjoint in the sense that
$T_{S_k}\cap T_{S_{k'}} = \emptyset$ when $k \neq k'$.  Additionally,
$\sum_{k=1}^{\lceil 1/\eps \rceil } S_k = \sum_i M_i$ is a \ssmat\ of $\varphi$ which contains the
entire diagonal of $\varphi$.  Indeed $\sum_{k=1}^{\lceil 1/\eps \rceil } S_k$ can be
obtained from $\varphi$ via the ``pinching'' TPCP which has Kraus operators given by the
$\{P_{2i-1}+P_{2i}\}$.  Thus
$$1 = \tr \sum_{k=1}^{\lceil 1/\eps \rceil } S_k.$$
Choose $k'$ such that $\tr [S_{k'}] \leq 1/\lceil 1/\eps \rceil \leq \eps$. Since the
$S_k$ are all PSD we also have $\|S_{k'}\|_1\leq \eps$.

Our strategy now is to use something like $\varphi - S_{k'}$ as a candidate for $\vblock + \vfar$ in the Lemma statement.  However, subtracting all of $S_{k'}$ removes some terms close to the diagonal, which, even though it is not a large fraction of all entries in $\varphi$, would make the proof and statement of Lemma \ref{lem:parta} somewhat awkward.  So, in order to make the Lemma statement as clean as possible we will only subtract the ``anti-diagonal" parts of $S_{k'}$, and leave the ``diagonal" parts of $S_{k'}$ in a manner made precise below.

Define the block matrices 
\begin{align}
D_i & \equiv P_{2i-1}\varphi P_{2i-1} + P_{2i}\varphi P_{2i}
A_i  \equiv P_{2i-1}\varphi P_{2i} + P_{2i}\varphi P_{2i-1}
\end{align}
$D_i$ and $A_i$ are, respectively, the diagonal and off-diagonal blocks of $M_i$.

Further define $K_{k'} \equiv \sum_{i=
  0}^{\infty}A_{i \cdot \lceil 1/\eps \rceil + k'}$.  We have that $K_{k'} = S_{k'} -
\sum_{i= 0}^{\infty}D_{i \cdot \lceil 1/\eps \rceil + k'}$, and that $\|\sum_{i=
  0}^{\infty}D_{i \cdot \lceil 1/\eps \rceil + k'}\|_1 = \|S_{k'} \|_1$ since $\sum_{i=
  0}^{\infty}D_{i \cdot \lceil 1/\eps \rceil + k'}$ is a block-diagonal \ssmat\ of
$S_{k'}$ containing the entire diagonal of $S_{k'}$.  Thus, 
	
	\[\|K_{k'}\|_1 = \|S_{k'} - \sum_{i= 0}^{\infty}D_{i \cdot \lceil 1/\eps \rceil + k'}\|_1 \leq \|S_{k'}\|_1 + \|\sum_{i= 0}^{\infty}D_{i \cdot \lceil 1/\eps \rceil + k'}\|_1  = 2 \|S_{k'}\|_1 \leq 2 \epsilon \]

	We now define a ``cut down" version of $\varphi$ by $\tilde{\varphi} \equiv \varphi - K_{k'}$.  From this definition we have: 
	
	\begin{align}
	\|\varphi -  \tilde{\varphi} \|_1 = \| K_{k'} \|_1 \leq 2 \epsilon.\label{eq:smalltracenormvar}
	\end{align}

Further, we define the projectors 
	
	\begin{align}
\Q_j \equiv 
\sum_{2((j-1) \cdot \lceil 1/\eps \rceil + k') \leq l < 2(j \cdot \lceil 1/\eps \rceil +
          k') } P_l, \label{eq:defpij}
	\end{align}

	and define the block diagonal matrix $\vblock$ as:
	
	\begin{align}
	\vblock \equiv \sum_j Q_j \tilde{\varphi}  Q_j = \sum_j Q_j (\varphi - K_{k'})  Q_j = \sum_j Q_j \varphi  Q_j \label{eq:defphiprime}
	\end{align}

	where the last equality follows because $\sum_j Q_j  K_{k'} Q_j = 0$ because $K_{k'}$ consists only of the ``anti-diagonal" components $A_{i \cdot \lceil 1/\eps \rceil + k'}$ which lie outside of the $Q_j$. Note that $\vblock$ is a \ssmat\ of $\tilde{\varphi}$ according to Definition \ref{def:subsetmatrix}.  Now define $\vfar$ by:
	
	\begin{align}
	\vfar \equiv \tilde{\varphi} - \vblock \label{eq:defvfar}
	\end{align} 
	
	Therefore, $\vfar$ is also a \ssmat\ of $\tilde{\varphi}$ according to Definition \ref{def:subsetmatrix}.  Furthermore, it follows immediately using Equation \ref{eq:smalltracenormvar} that:
	
	\begin{align}
	\|\varphi - (\vfar+\vblock) \|_1 = \|\varphi -  \tilde{\varphi} \|_1 \leq 2 \epsilon \label{eq:smalltracenorm}
	\end{align}

	\textbf{Second Claim:}  To establish the second claim in Lemma \ref{lem:parta} we
        now show that $T_{\vfar} \subseteq \{(l,k): |k-l| > \B \}  $ (recall that $\B
        \equiv 30 + 2 \left \lceil \frac{\log(1/\epsilon)}{N} \right \rceil$).  To see this, we consider the case that $|k-l| \leq B$ and show that in this
        case $(l,k) \notin T_{\vfar}$.  Assume WLOG that $k \geq l$.  When $|k-l| \leq B$ we know that either $ \exists j$ such that:
	
	\begin{align}\label{eq:inblock}
	2B(2(j-1)\lceil 1/\epsilon \rceil + 2 k' - 1) < l,k \leq 2B(2j\lceil 1/\epsilon \rceil + 2 k' - 1)
	\end{align}

	or $\exists j$ such that:
	
	\begin{align}\label{eq:inK}
	4B(j\lceil 1/\epsilon \rceil +  k') - 3B \leq l \leq 2B(2j\lceil 1/\epsilon \rceil + 2 k' - 1) \leq k \leq 4B(j\lceil 1/\epsilon \rceil +  k') - B
	\end{align}
	
	In the first case, denoted by Equation \ref{eq:inblock}, we have that the coordinates $(l,k)$ lie within the \ssmat\ $\vblock$ of $\varphi$, and thus that either $(l,k) \in T_{\vblock}$  or $(l,k) \notin T_{\varphi}$ by definition.  In particular, either $(l,k) \in T_{Q_j \varphi  Q_j} \subseteq T_{\vblock}$ as follows by Equation \ref{eq:defphiprime} and the definition of $Q_j$ in Equation \ref{eq:defpij}, or $(l,k) \notin T_{\varphi}$.  If  $(l,k) \in T_{\vblock}$ then we note that $T_{\vblock} \cap T_{\vfar} = \emptyset$ by definition (Equation \ref{eq:defvfar}), and this implies that $(l,k) \notin T_{\vfar}$.  If $(l,k) \notin T_{\varphi}$, then $(l,k) \notin T_{\vfar}$ because $T_{\vfar} \subseteq T_{\varphi}$.
	
	On the other hand, in the case denoted by Equation \ref{eq:inK}, we have the coordinates $(l,k)$ lie within the \ssmat\ $K_{k'}$ of $\varphi$, and thus that either $(l,k) \in T_{K_{k'}}$, or $(l,k) \notin T_{\varphi}$. The reason for this is that we know that, in this case, the coordinates $(l,k)$ are within the \ssmat\ $M_{j \lceil 1/\epsilon \rceil + k'}$ of $\varphi$.  Furthermore, since we have already ruled out the case of Equation \ref{eq:inblock}, we know that $(l,k)$ is not in $D_{j \lceil 1/\epsilon \rceil + k'}$, the block diagonal portion of $M_{j \lceil 1/\epsilon \rceil + k'}$.  Therefore, the coordinates $(l,k)$ must lie in the block-anti-diagonal portion $A_{j \lceil 1/\epsilon \rceil + k'} = M_{j \lceil 1/\epsilon \rceil + k'} - D_{j \lceil 1/\epsilon \rceil + k'}$ (this can also be determined directly from Equation \ref{eq:inK} itself, and the definition of $A_{j \lceil 1/\epsilon \rceil + k'}$).  Since $K_{k'} \equiv \sum_{i=0}^{\infty} A_{i \lceil 1/\epsilon \rceil + k'}$ we know that the coordinates $(l,k)$ lie within the $K_{k'}$, or more precisely, either $(l,k) \in T_{K_{k'}}$, or $(l,k) \notin T_{\varphi}$.  Just as before, if $(l,k) \notin T_{\varphi}$, then $(l,k) \notin T_{\vfar} \subseteq T_{\varphi}$.  On the other hand, in the case that $(l,k) \in T_{K_{k'}}$ we know that $T_{K_{k'}} \cap T_{\vfar} = \emptyset$ because $T_{\vfar} \subseteq T_{\tilde{\varphi}}$ by Equation \ref{eq:defvfar}, and $T_{\tilde{\varphi}} \cap T_{K_{k'}} = \emptyset$ as follows from the definition $\tilde{\varphi} \equiv \varphi - K_{k'}$.  
	
    This establishes that $T_{\vfar} \subseteq \{(l,k): |k-l| > \B \} $.

	\textbf{Third Claim:}  To establish the third claim in Lemma \ref{lem:parta}, and complete the proof, we will show that $\vblock$ is a mixture of states of spread at most $O(N/\epsilon+\log(1/\epsilon)/\epsilon)$, which means that $\vblock$ can be produced from a shared maximally entangled state with at most $O(N/\epsilon+\log(1/\epsilon)/\epsilon)$ bits of communication.

	Recalling the definition of $\vblock$ in Equation \ref{eq:defphiprime}, let us define $\rho'_j \equiv Q_j \varphi  Q_j$, so that it is clear that $\vblock = \sum_j \rho'_j$.  It is also clear that $\rho'_j$ is not only PSD, but also an un-normalized pure state, because

	\[\rho_j' \equiv  Q_j \varphi  Q_j  =  Q_j \ket{\varphi} \bra{\varphi} Q_j. \]

	From the definition of $Q_j$ in Equation \ref{eq:defpij} we have that:

	\[ Q_j \ket{\varphi}  =  \sum_{B_s < l  \leq B_b }  \ket{\varphi_l}, \]

	Where the index limits are 
	
	\begin{align*}
	&B_s \equiv 2 (2((j-1) \cdot \lceil 1/\eps \rceil + k') - 1) \cdot \B\\
	&B_b \equiv 2 (2(j \cdot \lceil 1/\eps \rceil + k') - 1) \cdot \B.
	\end{align*}

	We know from the definition in Equation \ref{eq:defphij} that the $\ket{\varphi_l}$ are orthogonal to each other, and that each $\ket{\varphi_l}$ has Schmidt coefficients bounded by  $ 2^{-l N +1} \geq \lambda_i > 2^{- l N - 1} $.  Thus, it is immediate that $\rho_j'$ has spread at most $(B_b - B_s)N+4 = 2\lceil 1/\eps \rceil \B N +4 = O(N/\epsilon+\log(1/\epsilon)/\epsilon) $, where the last equality follows because $\B = 30 + 2 \left \lceil \frac{\log(1/\epsilon)}{N} \right \rceil$. Therefore $\vblock$ is a normalized mixture of states with spread at most $O(N/\epsilon+\log(1/\epsilon)/\epsilon)$.
	
	Consider the normalized version of $\rho_j'$, which is still a pure state of spread at most $O(N/\epsilon+\log(1/\epsilon)/\epsilon)$ it is clear that this state has Earthmover distance at most $O(N/\epsilon+\log(1/\epsilon)/\epsilon)$ from the nearest maximally entangled state  (simply move all of the weight onto Schmidt coefficients of the size of the smallest Schmidt coefficient, which can be done by moving all the weight a distance less than or equal to the spread).  It follows, by using Theorem \ref{thm::earthmover} that there is a protocol which prepares the normalized version of $\rho_i'$ from EPR pairs, with only $O(N/\epsilon+\log(1/\epsilon)/\epsilon)$ bits of communication (we note that this line of the proof could also have been established using result from  \cite{HL02,HW02}, for example).  Now the state $\vblock \equiv \sum_i \rho_i'$ can be prepared by applying this same protocol in superposition over $i$ (with the probability $\tr(\rho_i')$ assigned to each $i$), and then tracing out over the $i$ register.  Thus $\vblock$ can be prepared starting from EPR pairs with $O(N/\epsilon+\log(1/\epsilon)/\epsilon)$ bits of communication.

\end{proof}

\section{Proof of Lemma \ref{lem::quadpartite}} \label{pf::lem::quadpartite}

\begin{proof}
	
	Given two states $\ket\chi = \sum_{i\in X} \sqrt{\chi_i} \ket i \ot \ket i$ and $\ket\upsilon = \sum_{j\in Y} \sqrt{\upsilon_j} \ket j \ot \ket j$, and an arbitrary $\eps > 0$, let $\omega(i,j):  X \times Y \to \mathbb{R}_{\geq 0}$ be the joint distribution on $X \times Y$ which satisfies the $\ell_{\infty}$ Earth Mover conditions for $\ket \chi$ and $\ket \ups$, and acheives the optimal earth mover bound $\dearth{\ket \chi}{\ket \upsilon}$.  That is, for all $i \in X$, $\sum_{j \in Y} \omega(i,j) = \chi_i$, for all $j \in Y$, $\sum_{i \in X} \omega(i,j) = \ups_j$, and $\omega(i,j) = 0$ whenever $| \log(\chi_i) - \log(\ups_j)| > \dearth{\ket \chi}{\ket \upsilon}$.
	
	Define 
 $\ket \rho \equiv \sum_{j\in Y} \sum_{k \in [2^{\lceil \dearth{\ket \chi}{\ket \upsilon} \rceil}+2]} \sqrt{\rho_{j,k}} \ket j \ot \ket k \ot \ket j \ot \ket k$, where 
	
	$$\rho_{j,k} \equiv \upsilon_j/2^{\lceil \dearth{\ket \chi}{\ket \upsilon} \rceil +2 }. $$

	We now define the intermediate state
$$ \ket \gamma \equiv \sum_{j\in Y} \sum_{k \in [2^{\lceil \dearth{\ket \chi}{\ket
      \upsilon} \rceil + 2}]} \sum_{r \in [2^{\lceil \dearth{\ket \chi}{\ket \upsilon}
    \rceil + 2}]} \sqrt{\gamma_{j,k,r}} \ket j \ot \ket k \ot \ket r \ot \ket j \ot \ket k
\ket r,$$
where  the Schmidt coefficients $\gamma_{j,k,r}$ are left unspecified for now.  
	
	In order to specify the Schmidt coefficients of the intermediate state $\ket \gamma$ as well as the Right Index-1 Flow from $\ket \chi$ to $\ket \gamma$, and the Left Index-1 Flow from $\ket \gamma$ to $\ket \rho$ we will first define ``bins" for the Schmidt coefficients of $\ket \ups$ as follows:
	
	For $l \in \mathbb{N} \cup \{0\}$ let $\Upsilon_l \equiv \{j \in Y: 2^{-l} \geq \ups_j \geq 2^{-(l+1)}  \}$, and $X_l \equiv  \{i \in X: 2^{-l} \geq \chi_i
	\geq 2^{-(l+1)}  \}$.  Define $\omega(X_m, \Upsilon_l) \equiv \sum_{(i,j) \in X_m \times \Upsilon_l} \omega(i,j)$. 
	
	\begin{fact}  \label{fact::bindist}
		If $|m -l| >  \dearth{\ket \chi}{\ket \upsilon} + 1$, then $\omega(X_m, \Upsilon_l) = 0$
	\end{fact}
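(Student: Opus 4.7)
The plan is to prove this Fact directly from the definitions, since the inequality $|m-l| > \dearth{\ket \chi}{\ket \upsilon} + 1$ should force every individual entry $\omega(i,j)$ with $i \in X_m$ and $j \in \Upsilon_l$ to vanish, by the support constraint in the definition of $\omega$. There is no clever step here; the work is just bookkeeping with the $\log$ of the Schmidt coefficients. Let me lay out the three short steps.

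First, I would unpack the bin definitions in logarithmic form. For any $i \in X_m$ we have $-(m+1) \leq \log(\chi_i) \leq -m$, and for any $j \in \Upsilon_l$ we have $-(l+1) \leq \log(\ups_j) \leq -l$. These are the only facts I need about membership in the bins.

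Second, I would lower bound $|\log(\chi_i) - \log(\ups_j)|$. Without loss of generality assume $m \geq l$ (the other case is symmetric by swapping the roles of $\chi$ and $\ups$). Then
\begin{align*}
\log(\ups_j) - \log(\chi_i) \;\geq\; -l - (-(m+1)) \;=\; m - l - 1,
\end{align*}
where the minimum value of $\log(\ups_j)$ on $\Upsilon_l$ and the maximum value of $\log(\chi_i)$ on $X_m$ cannot both be realized to make this tighter. Hence $|\log(\chi_i) - \log(\ups_j)| \geq |m-l|-1$, using the symmetric case if $l > m$.

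Third, I would invoke the hypothesis $|m-l| > \dearth{\ket \chi}{\ket \upsilon} + 1$, which together with the previous step gives $|\log(\chi_i) - \log(\ups_j)| > \dearth{\ket \chi}{\ket \upsilon}$ uniformly over $(i,j) \in X_m \times \Upsilon_l$. The third bullet in Definition \ref{def::earthmoverdistance} then says $\omega(i,j) = 0$ for every such pair, so
\begin{align*}
\omega(X_m, \Upsilon_l) \;=\; \sum_{(i,j) \in X_m \times \Upsilon_l} \omega(i,j) \;=\; 0,
\end{align*}
which is the claim. There is no substantive obstacle: the only thing to be careful about is not being off by one when translating between the half-open nature of the bin definitions and the strict inequality $|m-l| > \dearth{\ket \chi}{\ket \upsilon} + 1$, which is why the statement has the ``$+1$'' correction relative to the raw $\ell_\infty$ Earth Mover threshold.
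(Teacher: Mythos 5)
Your proposal is correct and follows essentially the same route as the paper: bound $|\log(\chi_i)-\log(\ups_j)|\geq |m-l|-1$ from the dyadic bin membership, note this exceeds $\dearth{\ket\chi}{\ket\upsilon}$ under the hypothesis, and conclude each $\omega(i,j)=0$ from the support condition in Definition~\ref{def::earthmoverdistance}. One small slip: the intermediate expression $-l-(-(m+1))$ evaluates to $m-l+1$, not $m-l-1$; the worst case is obtained from $\log(\ups_j)\geq -(l+1)$ and $\log(\chi_i)\leq -m$, i.e.\ $-(l+1)-(-m)=m-l-1$, which is the bound you (and the paper) correctly use in the end.
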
 
	
	\begin{proof}
		Given $i \in X_m$, and $j \in \Upsilon_l$ we have by definition that $2^{-l} \geq \ups_j \geq 2^{-(l+1)}$, and $2^{-m} \geq \chi_i \geq 2^{-(m+1)}$, and therefore that $|\log(\chi_i) - \log(\ups_j)| \geq |m-l|-1 > \dearth{\ket \chi}{\ket \upsilon}$, where the last equality follows by assumption.  It follows by definition of $\dearth{\ket \chi}{\ket \upsilon}$ and of $\omega$, that $\omega(i,j) = 0$.  Since this is true for all $(i,j) \in X_m \times \Upsilon_l$, the claim follows. 
	\end{proof}
	
	We will now specify an iterative, ``greedy" procedure to define the Schmidt coefficients $\gamma_{j,k,c}$ as a function of the  $\ket{\chi}$ and $\ket{\rho}$.

	For each $(m,l) \in \mathbb{N} \cup \{0\} \times \mathbb{N} \cup \{0\}$ such that $\omega(X_m, \Upsilon_l) > 0$ we first note that by Fact \ref{fact::bindist} that  $|m -l| <  \dearth{\ket \chi}{\ket \upsilon} + 1$.  Thus, for each $(i,j) \in X_m \times \Upsilon_l$, 
	
	$$\chi_i \geq 2^{-(m+1)} \geq 2^{-l -\dearth{\ket \chi}{\ket \upsilon} - 2 } \geq 2^{-l}/2^{ \lceil \dearth{\ket \chi}{\ket \upsilon} \rceil + 2 }  \geq  \ups_j/2^{\lceil \dearth{\ket \chi}{\ket \upsilon} \rceil +2 } \equiv \rho_{j,k}$$ for all $k \in[2^{\lceil \dearth{\ket \chi}{\ket \upsilon} \rceil +2 }] $. 
	
	\begin{algorithm} \label{alg:defgamma}
		\caption{1}
		\begin{algorithmic}[1]

			\State For all $i$ set $\text{temp}_i = \chi_i$
			
			\State Set $i_{m} = \min \{X_m\}$ for all $m$  
			
			\For{$l \in \mathbb{N} \cup \{0\}$}
			\State Set $j := \min\{Y_l\}$;
			\State Set $k = 0$;
			
			\State Set $\text{overflow} = 0$
			\For{$m \in \mathbb{N} \cup \{0\}$} 
			\If{$\omega(X_m, \Upsilon_l) > 0$} 
			
			\State Set $\text{temp}_{\omega} = \omega(X_m, \Upsilon_l)$
			
			\While{$\text{temp}_{\omega} > 0$ }			
			
			\If{$\sum_{r \leq \text{overflow}} \gamma_{j,k, r} < \rho_{j,k}$} 
			\While{$\text{temp}_{\omega} \geq \rho_{j,k} -  \sum_{r \leq \text{overflow}} \gamma_{j,k, r}$}
			
			\If{$k = 2^{\lceil \dearth{\ket \chi}{\ket \upsilon} \rceil + 2} $}
			\State Set $j = j+1$
			\State Set $\text{overflow} = 0$
			\State Set $k = 0$
			
			\EndIf

			\If{$\text{temp}_{i_m} < \rho_{j,k} -  \sum_{r \leq \text{overflow}} \gamma_{j,k, r}$}
			\State Set $\gamma_{j,k, \text{overflow}+1} = \text{temp}_{i_m}$
			\State Set $\text{temp}_{\omega} = \text{temp}_{\omega} - \text{temp}_{i_m}$
			\State Set $\text{temp}_{i_m} = 0$
			\State Add an edge in the flow graph from $i_m$ to $(j,k, \text{overflow}+1)$  
			\State Set $i_m = i_m +1$

			\State Set $\text{overflow} = \text{overflow}+1$ 
			\EndIf
			\If{$\text{temp}_{i_m} \geq \rho_{j,k} -  \sum_{r \leq \text{overflow}} \gamma_{j,k, r}$ and $\text{temp}_{\omega} \geq \rho_{j,k} -  \sum_{r \leq \text{overflow}} \gamma_{j,k, r}$}
			\State Set $\gamma_{j,k, \text{overflow}+1} = \rho_{j,k} - \sum_{r \leq \text{overflow}} \gamma_{j,k, r}$
			\State Set $\text{temp}_{\omega} = \text{temp}_{\omega} -\gamma_{j,k, \text{overflow}+1} $
			\State Set $\text{temp}_{i_m} = \text{temp}_{i_m} - \gamma_{j,k, \text{overflow}+1}$
			
			\State Add an edge in the flow graph $G_{X,Z}$ from $i_m$ to $(j,k, \text{overflow}+1)$
			\State Set $k = k+1$ 
			\State Set $\text{overflow} = 0$
			\EndIf
			
			\EndWhile

			\algstore{myalg}
			
		\end{algorithmic}
	\end{algorithm}

	\begin{algorithm}
		\caption{ 1 (continued)}
		\begin{algorithmic}
			\algrestore{myalg}

			\If{$k = 2^{\lceil \dearth{\ket \chi}{\ket \upsilon} \rceil + 2} $} 
			\State Set $j = j+1$
			\State Set $\text{overflow} = 0$
			\State Set $k = 0$
			\EndIf

			\If{$\text{temp}_{\omega} < \rho_{j,k} -  \sum_{r \leq \text{overflow}} \gamma_{j,k, r}$}

			\If{$\text{temp}_{i_m} \leq \text{temp}_{\omega}$}
			\State Set $\gamma_{j,k, \text{overflow}+1} = \text{temp}_{i_m}$
			\State Set $\text{temp}_{\omega} = \text{temp}_{\omega} - \text{temp}_{i_m}$
			\State Set $\text{temp}_{i_m} = 0$
			\State Add an edge in the flow graph $G_{X,Z}$ from $i_m$ to $(j,k, \text{overflow}+1)$
			\State Set $i_m = i_m +1$

			\State Set $\text{overflow} = \text{overflow}+1$
			\EndIf
			\If{$\text{temp}_{i_m} \geq \text{temp}_{\omega}$}
			
			\State Set $\gamma_{j,k, \text{overflow}+1} = \text{temp}_{\omega}$
			\State Set $\text{temp}_{\omega} = 0$
			\State Set $\text{temp}_{i_m} = \text{temp}_{i_m} - \text{temp}_{\omega}$
			\State Add an edge in the flow graph $G_{X,Z}$ from $i_m$ to $(j,k, \text{overflow}+1)$

			\State Set $\text{overflow} = \text{overflow}+1$

			\EndIf
			
			\If{$k = 2^{\lceil \dearth{\ket \chi}{\ket \upsilon} \rceil + 2} $} 
			\State Set $j = j+1$
			\State Set $\text{overflow} = 0$
			\State Set $k = 0$
			
			\EndIf
			\EndIf
			
			\EndIf

			\EndWhile
			\EndIf
			\EndFor
			\EndFor

		\end{algorithmic}
	\end{algorithm}

	One may check that Algorithm 1 defines Schmidt coefficients $\gamma_{j,k,r}$, satisfying 
	
	\[\sum_{j\in Y} \sum_{k \in [2^{\lceil \dearth{\ket \chi}{\ket \upsilon} \rceil + 2}]} \sum_{r \in [2^{\lceil \dearth{\ket \chi}{\ket \upsilon} \rceil + 2}] }  \gamma_{j,k,r}=  \sum_{i\in X} \chi_i = 1, \]
	
	as well as a Right Index-1 Flow from $\ket{\chi}$ to $\ket{\gamma}$, with degree at most $2^{\lceil \dearth{\ket \chi}{\ket \upsilon} \rceil + 2} \cdot 2^{\lceil \dearth{\ket \chi}{\ket \upsilon} \rceil + 2} = 2^{2 \lceil \dearth{\ket \chi}{\ket \upsilon} \rceil + 4}$.  In particular the Right Index-1 Flow from $\ket{\chi}$ to $\ket{\gamma}$ is constructed in Algorithm 1 by iteratively adding edges to form the bipartite flow-graph $G_{X, Z}$ where $Z \equiv (Y, [2^{\lceil \dearth{\ket \chi}{\ket \upsilon} \rceil + 2}], [2^{\lceil \dearth{\ket \chi}{\ket \upsilon} \rceil + 2}])$.  Each line in the pseudocode which reads ``Add an edge in the flow graph from $i_m$ to $(j,k, \text{overflow}+1)$", or similar, adds a single edge to the graph $G_{X,Z}$ and the union of all these edges forms the bipartite flow $G_{X,Z}$ between $X$ and $Z$.  Furthermore, for the $\gamma_{j,k,r}$ defined by Algorithm 1,

	\[ \sum_{r \in [2^{\lceil \dearth{\ket \chi}{\ket \upsilon} \rceil + 2}] }  \gamma_{j,k,r}=  \rho_{j,k}, \]
	
	so that there is a Left Index-1 flow from $\ket \gamma$ to $\ket \rho$ defined by a bipartite graph between the Schmidt coefficients of $\ket \gamma$ and $\ket \rho$ respectively, in which, for every $(j,k,r) \in Y \times [2^{\lceil \dearth{\ket \chi}{\ket \upsilon} \rceil + 2}] \times [2^{\lceil \dearth{\ket \chi}{\ket \upsilon} \rceil + 2}]$, there is an edge from $\gamma_{j,k,r}$ to $\rho_{j,k}$ of weight $\gamma_{j,k,r}$.  This Left Index-1 flow then clearly has degree $2^{\lceil \dearth{\ket \chi}{\ket \upsilon} \rceil + 2}$.

	Finally, recall that, 
	
	$$\sum_{k \in [2^{\lceil \dearth{\ket \chi}{\ket \upsilon} \rceil + 2}]}\rho_{j,k} = \sum_{k \in [2^{\lceil \dearth{\ket \chi}{\ket \upsilon} \rceil + 2}]} \upsilon_j/2^{\lceil \dearth{\ket \chi}{\ket \upsilon} \rceil +2 } = \nu_j$$
	
	So, by very similar reasoning, there is a Left Index-1 flow from $\ket{\rho}$ to $\ket{\nu}$ with degree exactly $2^{\lceil \dearth{\ket \chi}{\ket \upsilon} \rceil + 2}$.

\end{proof}

\pagebreak

\section*{Acknowledgments}
AWH was funded by NSF grants CCF-1452616, CCF-1729369, PHY-1818914 and ARO contract
W911NF-17-1-0433.  MC was supported at MIT by an Akamai Fellowship, and at the IQC by Canada's NSERC and the Canadian 
Institute for Advanced Research (CIFAR), and through funding provided to 
IQC by the Government of Canada and the Province of Ontario.

\bibliographystyle{hyperabbrv}
\bibliography{refs}

\end{document}